\newcommand{\rP}{\mbox{P}}
\newcommand{\Cov}{\mbox{Cov}}
\newcommand{\EE}{\mathbb{E}}
\newcommand{\vvar}{\mbox{var}}
\newcommand{\bX}{{X}}
\newcommand{\bY}{{Y}}
\newcommand{\bZ}{{Z}}
\newcommand{\bW}{{W}}
\newcommand{\btheta}{{\theta}}
\newcommand{\bbeta}{{\beta}}
\newcommand{\bnu}{{\nu}}
\newcommand{\bgamma}{{\gamma}}
\newcommand{\br}[1]{\left(#1\right)}
\begin{document}
\title{Two-stage testing for interactions in a high dimensional setting}

\author[1]{Marianne A Jonker}
\author[2]{Luc van Schijndel}
\author[2]{Eric Cator}

\titlemark{Two-Stage Testing in a high dimensional setting}
\authormark{Jonker \textsc{et al}}

\address[1]{\orgdiv{Research Institute for Medical Innovation, Science department IQ Health, Section Biostatistics}, \orgname{Radboud University Medical Center}, \orgaddress{\state{Nijmegen}, \country{Netherlands}}}

\address[2]{\orgdiv{Faculty of Science, Mathematics Department}, \orgname{Radboud University}, \orgaddress{\state{Nijmegen}, \country{Netherlands}}}

\corres{Marianne Jonker, Radboudumc Nijmegen, the Netherlands \email{marianne.jonker@radboudumc.nl}}



\abstract[Abstract]
{In a high dimensional regression setting in which the number of variables ($p$) is much larger than the sample size ($n$), the number of possible two-way interactions between the variables is immense. If the number of variables is in the order of one million, which is usually the case in  e.g., genetics, the number of two-way interactions is of the order one million squared. In the pursuit of detecting two-way interactions, testing all pairs for interactions one-by-one is computational unfeasible and the multiple testing correction will be severe. In this paper we describe a two-stage testing procedure consisting of a screening and an evaluation stage. It is proven that, under some assumptions, the tests-statistics in the two stages are asymptotically  independent. As a result, multiplicity correction in the second stage is only needed for the number of statistical tests that are actually performed in that stage. This increases the power of the testing procedure. Also, since the testing procedure in the first stage is computational simple, the computational burden is lowered. Simulations have been performed for multiple settings and regression models (generalized linear models and Cox PH model) to study the performance of the two-stage testing procedure. The results show type I error control and an increase in power compared to the procedure in which the pairs are tested one-by-one.  

\bigskip

\noindent%
{\bf KEYWORDS:}\\
Epistasis, FWER, GWAS, independent stages, multiple testing correction, bladder cancer
}

\maketitle

\section{Introduction}
Let, for an individual, $Y$ be an outcome of interest and ${\bX}^t=(x_1,\ldots,x_p)$ a vector of covariates. Our aim is to detect covariates which are associated with the outcome $Y$ via an interaction. We have independent data of $n$ individuals from a population of interest, with $n\ll p$ (the number of individuals is much smaller than the number of covariates). This research question is important in, for instance, genetics and genomics, where the number of covariates are usually in the order of magnitude
of several millions. 

In genetics, the conventional approach to searching for pairwise interactions (between genetic markers) is to fit regression models and test for two-way interactions for all covariate pairs one-by-one. The number of pairs grows rapidly with the number of covariates. If data of one million covariates is available, the number of covariate pairs is of the order of one million squared. This is not only computational a burden, but the multiple testing correction will wipe out all effects that are present in the data. A wide variety of possible strategies have been proposed to search for interactions between genetic markers and between markers and environmental factors (see e.g., Kooperberg et al (2008), Murcray et al (2009), Gauderman et al (2010), Dai et al (2012),  Hsu et al (2012), Lewinger et al  (2013), Pecanka et al (2017),  Wong (Chapter 6, 2021), Johnsen et al (2021), Kawaguchi et al (2022), Kawaguchi et al (2023)). \nocite{Dai,Gauderman,Hsu,Johnsen,Kawaguchi,KawaguchiGE,Kooperberg,Lewinger,Murcray,Pecanka,PecankaB} An interesting approach that tackles both the computational problem and the multiplicity issue, is a two-stage approach. The first stage is seen as a screening stage to identify and select covariates that are more likely to be involved in an interaction. Only the selected covariates are passed on to the second stage. Subsequently, in the second stage (the verification stage), only pairs of covariates that were selected in the first stage are tested for interaction. 

By selecting a minor part of the covariates in the first stage, the computational burden of the whole testing procedure is reduced provided that the screening in the first stage is computationally simple. Even more important, with a two-stage procedure the multiplicity correction in the second stage may be reduced, which could result in a considerable power gain. If the statistical procedures in the two stages are statistically independent, the family wise error rate (FWER) of the whole testing procedure is controlled if multiplicity correction in the second stage is applied for the tests that are actually performed in the second stage. Independence between stages can be obtained in multiple ways. The easiest way is to split the data, one part for the first stage and the other part for the second stage. Another way to obtain independence is ``independence by design''; the test-statistics in both stages use all data, but are, nevertheless, statistically independent. A third option is to obtain independence by regressing out all the information in the data that was used in the first stage from the test-statistics in the second stage (see Pecanka et al (2017)\nocite{Pecanka}). An overview of multiple two-stage testing procedures with independent stages for a binary outcome is given by Wong, 2021 (Chapter 6, by Pecanka et al).\nocite{PecankaB} 

Kooperberg et al (2008) proposed a two-stage procedure for generalized linear models (GLMs). In the first stage, for every covariate a regression model with only the marginal effect is fitted and association between the covariate and the outcome variable is tested. If the corresponding p-value is below a pre-specified threshold, the covariate is passed on to the second stage. In the second stage interaction is tested for every pair of selected covariates. This is done by fitting a multivariable regression model and testing for an interaction effect. Dai et al (2012) adjusted the procedure in order to detect interactions between genetic markers and environmental factors.\nocite{Dai} In the first stage they only made a selection of the genetic markers and passed on the environmental factors to the second stage without testing. They proved FWER control of their procedure. In this paper, we focus on testing for interactions between genetic markers in GLMs and the Cox model. Although the proposed testing procedure is similar to the one in Dai et al (2012), proving control of the FWER is more complex, because in Dai et al (2012) the environmental factors are not screened in the first stage. We prove FWER control, by showing asymptotic independence between the testing procedures in the two stages. 
We also prove independence between the two stages (and thus FWER control) for the linear regression model with unknown variance of the error term (this proof of FWER control seems to be missing in Dai et al (2012)). For the semi-parametric Cox proportional hazards model for time-to-event outcomes, the proof of independence between stages relies on asymptotic properties of the partial likelihood.  

The remainder of the paper is organized as follows. In the sections \ref{sec: GLM} and \ref{sec: CoxPHmodel}  the theory of the two-stage testing procedure is explained for GLMs and the Cox PH model, respectively. To study control of the FWER and the power gain, multiple simulation studies have been performed for the linear,  Poisson and Cox models. This, and an application to bladder cancer data, are described in Section \ref{sec: simstudies}. In Section \ref{sec: discussion} assumptions and further research is discussed. The paper ends with multiple  appendices with proofs of theorems and lemmas.

\section{Two-stage testing procedure}
\label{sec: GLM}
\subsection{Two-stages}
To study the association between an outcome of interest $Y$ with an interaction between variables $x_k$ and $x_\ell$, the data are usually modeled with a GLM: 
\begin{align}
h\big(\mathbb{E}(Y|(x_k,x_\ell),\bbeta,\bnu)\big) = \beta_0 + \beta_1 x_{k} + \beta_2 x_\ell + \beta_3\; x_{k} x_\ell,
\label{BigModel}
\end{align}
where $h$ is the (chosen) link function and $\beta_0, \ldots, \beta_3$ are the unknown regression parameters of which $\beta_3$ reflects the  interaction effect. The unknown parameter $\bnu$ is an unknown nuisance parameters, e.g., the variance of the error term in the linear regression model.  The parameters can be estimated by their maximum likelihood estimators and the null hypothesis $H_0: \beta_3=0$ tested with, for instance, a Wald test. 

In a high dimensional setting with many variables $x_1,\ldots,x_p$ compared to the sample size $n$, $p \gg n$, it would not be sensible to test all possible pairs of variables one-by-one, as the number of pairs grows quadratically with the
number of variables. Kooperberg et al (2008) proposed the following two stage approach.\nocite{Kooperberg}  

\medskip

\noindent
{\bf{Stage 1, screening stage $S_1$}}\\
In the first stage all variables of ${\bX}^t=(x_1,\ldots,x_p)$ are screened for association with the outcome $Y$ in a marginal model. More specific, for variable $x_k$ the model
\begin{align}
h\big(\mathbb{E}(Y|{\bX},\bgamma=(\gamma_0,\gamma_1),\bnu)\big) &= \gamma_0 + \gamma_1 x_k
\label{SmallModel}
\end{align}
is fitted and the null hypothesis H$_0:\gamma_1=0$ is tested with the Wald test-statistic 
$T_k^{S_1}=\hat\gamma_1/(\widehat{\vvar}\; \hat\gamma_1)^{1/2}$ at a chosen significance level $\alpha_1$. Here $\hat\gamma_1$ is the maximum likelihood estimator of $\gamma_1$ and $\widehat{\vvar}\; \hat\gamma_1$ an estimator of its corresponding variance. 
If the null hypothesis is rejected, variable $x_k$ is passed on to the second stage, the verification stage. This is done for every variable in $\bX$. 

\medskip

\noindent
{\bf{Stage 2, verification stage $S_2$}}\\
Interaction for all pairs of variables in stage 2 are tested. More specific, suppose the variables $x_k$ and $x_\ell$ were passed on to the second stage, then the model
\begin{align}
h\big(\mathbb{E}(Y|{\bX},\bbeta=(\beta_0,\ldots,\beta_3),\bnu)\big) = \beta_0 + \beta_1 x_{k} + \beta_2 x_\ell + \beta_3 x_{k} x_\ell,
\label{FullModel}
\end{align}
is fitted and the null hypothesis H$_0:\beta_3=0$ is tested with the Wald test-statistics $T_{k\ell}^{S_2}=\hat\beta_3/(\widehat{\vvar}\; \hat\beta_3)^{1/2}$ at a chosen significance level $\alpha_2=\alpha/K_1$, with $\hat\beta_3$ the maximum likelihood estimator of $\beta_3$, 
$K_1$ the number of pairs in stage 2 and $\alpha$ the level of the test. This is done for every pair of variables in stage 2.\\

With this two stage procedure, the computational burden decreases as the test in the first stage is computational simpler than in the second stage (due to dimension reduction of the parameter space) and interaction is tested for a fraction of the variable pairs. Moreover, this approach is shown to be more powerful than a single stage testing approach in which all pairs are tested and a severe multiple testing correction is applied. This is discussed in the next subsection.

We will consider a more general version of the two stage procedure; covariates like gender, ethnicity or on behavior (e.g., smoking) that may be associated with the outcome variable, but are not of interest for the interaction study can be included in the regression models in both stages. Therefore, the models in the two stages will be reformulated later.

\subsection{Independent stages and multiple testing correction}
\label{subsec:independence}
To control the family wise error rate (FWER) of the two-stage testing procedure, a multiple testing correction is applied in the second stage. This multiple testing correction needs to take into account the fact that screening has taken place in the first stage; viewing the data and screening for interesting effects in the first stage cannot go unpunished. However, if under the combined null hypothesis all test-statistics in the second stage are statistically independent of those in the first stage, multiple testing correction in the second stage is only needed for the number of tests that are performed in that stage, as if no screening has taken place in stage 1. That means that if $K_1$ pairs are tested in the verification stage, the Bonferroni method will control the FWER of the whole procedure at a level $\alpha$ by using $\alpha_2=\alpha/K_1$ as a corrected significance level in the second stage. When the test-statistics in the two stages are not statistically independent under the combined null hypothesis, the control of the FWER is not guaranteed.\footnote{The number of pairs in the second stage, $K_1$, is stochastic and a function of the test-statistics in the first stage. This makes the two-stages dependent. We will come back to this Subsection \ref{Full independence GLM}.} Statistical independence  between the two stages under the combined null hypothesis is quite complex as ``pairwise independence'' is not sufficient to control the type I error of the two stage testing procedure; we need ``complete independence''. 

\medskip

\noindent
{\bf{Pairwise independence.}}
Suppose that for the variables $x_k$ and $x_\ell$ the  model in (\ref{SmallModel}) is fitted and the null hypotheses of no main effects are tested with the test-statistics $T^{S_1}_k$ and $T^{S_1}_\ell$, respectively. If both null hypotheses are rejected, it is tested whether the interaction effect, $\beta_3$ in the model (\ref{FullModel}), equals zero with the test-statistic $T^{S_2}_{k\ell}$. For pairwise independence the test-statistic $T_{k\ell}^{S_2}$ must be independent of both $T_k^{S_1}$ and $T_\ell^{S_1}$, for all $k$ and $\ell$. 

\medskip

\noindent
{\bf{Complete independence.}}
Let ${\cal K}$ be the set of indices of the variables that were passed on to the second stage. For complete independence, all test-statistics in the second stage, $T_{k\ell}^{S_2}, k,\ell\in {\cal K}$, should be independent of all test-statistics $T_{j}^{S_1}, j=1,\ldots,p$ in the first stage jointly. Complete independence implies pairwise independence, but the opposite is not true.
 
\medskip

If pairwise and complete independence hold, the value of the significance level in the first stage, $\alpha_1$, can be chosen freely. For high values of $\alpha_1$ many variables will be passed on to the second stage and the multiple testing correction in the second stage will be severe. The most extreme case is $\alpha_1=1$ when no selection takes place in the first stage. For a low value of $\alpha_1$ only a few variables will be passed on and the multiple testing correction in the second stage is minor. Although this may sound ideal, there exists the risk that variables that are associated with the outcome via an interaction do not pass the first stage, which will lower the power of the procedure. One may opt to use multiple thresholds in the first stage. For variables which are a priori of more interest a higher value for $\alpha_1$ (or even 1) can be taken than for variables that do not have this preferred position.

\subsection{Pairwise independence between stages}
\label{Pair-wise GLM}

Before focusing on pairwise independence for the two stage testing procedure, we formulate and prove pairwise independence in a more general setting. More specific, the regression models are formulated without the interaction term, but with non-specified extra covariates. These covariates can be chosen to be equal to an interaction term. Consider the three nested generalized linear regression models $M_1, M_2$ and $M_{true}$:
\begin{align}
\label{MM1}
M_1:\qquad h\big(\mathbb{E}(Y|\bX,\bbeta_1,\bnu_1)\big) &= \beta_{1,1} x_{1} + \beta_{1,2} x_{2} + \ldots + \beta_{1,p_1} x_{p_1}\\
\label{MM2}
M_2:\qquad h\big(\mathbb{E}(Y|\bX,\bbeta_2,\bnu_2)\big) &= \beta_{2,1} x_{1} + \beta_{2,2} x_{2} + \ldots + \beta_{2,p_1} x_{p_1} + \ldots + \beta_{2,p_2} x_{p_2}\\
\label{MM0}
M_{true}:\qquad h\big(\mathbb{E}(Y|\bX,\bbeta_0,\bnu_0)\big) &= \beta_{0,1} x_{1} + \beta_{0,2} x_{2} + \ldots + \beta_{0,p_1} x_{p_1} + \ldots + \beta_{0,p_2} x_{p_2} + \ldots + \beta_{0,p_0} x_{p_0}
\end{align}
with ${\bX}=(x_1,x_2,\ldots,x_p)^t$, $h$ the link function, $p_1\leq p_2 \leq p_0 \leq p$ and $x_1=1$ so that $\beta_{1,1}, \beta_{2,1}$ and $\beta_{0,1}$ are the intercepts in the models. The first index in $\beta_{i,j}$ refers to the model and the second index to the covariate. Model $M_1$ is nested in model $M_2$ and $M_2$ is nested in model $M_{true}$, with the latter model assumed to be the true model. This can always be effectuated, even if variables of the models $M_1$ and $M_2$ are not associated with the outcome, by including variables in the true model (if they are not in the model yet) with regression parameters equal to zero. Assume that the conditional density $f_{\btheta_0}$ of $Y|\bX$,  distributed according the true model $M_{true}$, belongs to a GLM with a canonical link function and, therefore, can be written as:
\begin{align}
f_{\btheta_0}(Y|\bX)= g(Y)\exp\big(V(Y)^t \bZ \btheta_0 - A(\bZ \btheta_0)\big), 
\label{expFamily}
\end{align}
where $\bZ$ is a function of the covariates $\bX$ and $\btheta_0$ the vector of unknown parameters in the true model; the regression and nuisance parameters. The vector $\btheta_0 \in \mathbb{R}^{q_0}$ with $q_0-p_0$ the number of nuisance parameters in the model. 

Let $\btheta_1$ and $\btheta_2$ be the parameter vectors for the models $M_1$ and $M_2$ (i.e., regression and nuisance parameters). Their parameter spaces are assumed to be subsets of $\mathbb{R}^{q_1}$ and $\mathbb{R}^{q_2}$, respectively, with $q_1\geq p_1$ and $q_2\geq p_2$ depending on the number of nuisance parameters. If no nuisance parameters are present $\theta_1, \theta_2$ and $\theta_0$ correspond to the vector of regression parameters. The parameter spaces for the models $M_1, M_2$ and $M_{true}$ are nested. To project a parameter vector from the space for $M_1$ or $M_2$ to the parameter space for model $M_{true}$ (the largest space), we define the projection matrices 
\begin{align*}
    E_1 & = \br{e_1|\ldots | \; e_{q_1}} \in \mathbb{R}^{q_0\times q_1}\\
    E_2 & = \br{e_1|\ldots | \; e_{q_2}} \in \mathbb{R}^{q_0\times q_2},
\end{align*}
in such a way that $\{e_1, \ldots, e_{q_1}\}$ and $\{e_1, \ldots, e_{q_2}\}$ form orthonormal bases for the parameter spaces for $M_1$ and $M_2$. To be more specific, these matrices represent linear functions from the parameter spaces for $M_i,i=1,2$ to $M_{true}$, which will be used to project the regression coefficients within the two submodels to the parameter space of the true model by left-multiplication. So, for arbitrary parameter vectors ${\theta}_1 \in \mathbb{R}^{q_1}$ and ${\theta}_2 \in \mathbb{R}^{q_2}$, it holds that $E_1{\theta}_1 \in \mathbb{R}^{q_0}$ and $E_2{\theta}_2 \in \mathbb{R}^{q_0}$. 

Let ${\bX}_1=(x_{1,1},\ldots,x_{1,p})^t, \ldots, {\bX}_n=(x_{n,1},\ldots,x_{n,p})^t$ be a sample of variables and let $Y_1|{\bX}_1, \ldots, Y_n|{\bX}_n$ be a sample from the GLM $M_{true}$. The maximum likelihood estimators for $\btheta_1$ and $\btheta_2$ in the models $M_1$ and $M_2$ are denoted as $\hat\btheta_1$ and $\hat\btheta_2$. Furthermore, let $\btheta_1^0$ and $\btheta_2^0$ be the values that maximize the pointwise limit of the log likelihood function corresponding to the different models (divided by $n$). The projections of the vectors $\hat\btheta_2$ and $\btheta_2^0$ in the parameter space of model $M_2$ to the space that is orthogonal to the parameter space of model $M_1$ are computed as $\hat\btheta_2^\perp=(Id_{q_2} - E_2^tE_1E_1^tE_2)\hat\btheta_2$ and $\btheta_2^{0,\perp}=(Id_{q_2} - E_2^tE_1E_1^tE_2)\btheta_2^0$, for $Id_{q}$ the $q\times q$ identity matrix. In the theorem below we will prove that $\sqrt{n}(\hat\btheta_1-\btheta_1^0)$ and $\sqrt{n}(\hat\btheta_2^\perp-\btheta_2^{0,\perp})$ are asymptotically independent if $M_2=M_{true}$. The proof of the theorem is given in Appendix A.\\

\noindent
\begin{theorem}
\label{theorem:pairwuseGLM}
Suppose that $Y_1|{\bX}_1, \ldots, Y_n|{\bX}_n$ is a sample from the generalized linear model with a canonical link function $M_{true}$ in (\ref{MM0}). Let $\hat{\btheta}_1$ and $\hat{\btheta}_2$ denote the maximum likelihood estimators of ${\theta}_1^0$ and ${\btheta}_2^0$ in the nested models $M_1$ and $M_2$, respectively. If model $M_2$ coincides with the true model $M_{true}$ (i.e., $p_2 = p_0$), then $\sqrt{n}(\hat\btheta_1-\btheta_1^0)$ and  $\sqrt{n}(\hat\btheta_2^\perp-\btheta_2^{0,\perp})$ are asymptotically independent.\\
\end{theorem}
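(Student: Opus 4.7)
The plan is to derive the standard M-estimator linearization of $\sqrt{n}(\hat\btheta_i-\btheta_i^0)$ for $i=1,2$ in terms of their respective scores, compute the resulting cross-covariance, and then verify that left-multiplication by the orthogonal projection $P:=Id_{q_2}-E_2^tE_1E_1^tE_2$ kills it. Joint asymptotic normality together with vanishing cross-covariance will then give the desired asymptotic independence. Using the canonical-link form \eqref{expFamily}, the score and (negative) Hessian for model $M_i$ are
\begin{equation*}
U_{n,i}(\btheta_i)=\sum_{j=1}^n E_i^t\bZ_j^t\bigl(V(Y_j)-A'(\bZ_jE_i\btheta_i)\bigr),\quad \mathcal{H}_{n,i}(\btheta_i)=\sum_{j=1}^n E_i^t\bZ_j^tA''(\bZ_jE_i\btheta_i)\bZ_jE_i,
\end{equation*}
and the Hessian crucially does not depend on $Y$. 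Under standard regularity (smoothness of $A$; existence, uniqueness and interior location of the pseudo-true value $\btheta_i^0$ as the maximizer of the pointwise-limit log-likelihood; non-degenerate information), the Taylor expansion at $\btheta_i^0$ together with the law of large numbers yields $\sqrt{n}(\hat\btheta_i-\btheta_i^0)=H_i^{-1}n^{-1/2}U_{n,i}(\btheta_i^0)+o_P(1)$, with $H_i:=\EE[E_i^t\bZ^tA''(\bZ E_i\btheta_i^0)\bZ E_i]$, and in particular $H_2=I_2$, the Fisher information of $M_{true}$. The multivariate CLT applied to the i.i.d.\ summands delivers a joint Gaussian limit for $(n^{-1/2}U_{n,1}(\btheta_1^0),\,n^{-1/2}U_{n,2}(\btheta_2^0))$.

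The central step is the cross-covariance of the two scores. Because $M_2=M_{true}$, $\EE[V(Y)\mid\bX]=A'(\bZ E_2\btheta_2^0)$ and $\vvar(V(Y)\mid\bX)=A''(\bZ E_2\btheta_2^0)$, so writing $V(Y)-A'(\bZ E_1\btheta_1^0)=\epsilon+\delta(\bX)$ with $\epsilon:=V(Y)-A'(\bZ E_2\btheta_2^0)$ mean-zero given $\bX$ and $\delta(\bX)$ non-random given $\bX$, the single-observation conditional cross-product collapses to $A''(\bZ E_2\btheta_2^0)$, yielding
\begin{equation*}
\Cov\bigl(n^{-1/2}U_{n,1}(\btheta_1^0),\,n^{-1/2}U_{n,2}(\btheta_2^0)\bigr)=E_1^t\,\EE\bigl[\bZ^tA''(\bZ E_2\btheta_2^0)\bZ\bigr]\,E_2.
\end{equation*}
Since $M_1$ is nested in $M_2$, the columns of $E_1$ lie in the column span of $E_2$; orthonormality of both bases then gives $E_1=E_2F$ with $F:=E_2^tE_1$ satisfying $F^tF=Id_{q_1}$ and $FF^t=E_2^tE_1E_1^tE_2$, so $P=Id_{q_2}-FF^t$. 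The right-hand side above therefore equals $F^tI_2$, and the asymptotic cross-covariance of the two estimators reduces to $H_1^{-1}F^tI_2\cdot I_2^{-1}=H_1^{-1}F^t$.

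Applying the projection now gives
\begin{equation*}
\text{AsCov}\bigl(\sqrt{n}(\hat\btheta_1-\btheta_1^0),\,\sqrt{n}(\hat\btheta_2^\perp-\btheta_2^{0,\perp})\bigr)=H_1^{-1}F^tP^t=H_1^{-1}\bigl(F^t-F^tFF^t\bigr)=0,
\end{equation*}
using $F^tF=Id_{q_1}$ and $P^t=P$. Joint asymptotic normality of the two quantities (as linear images of the Gaussian limit of the score vector) together with zero cross-covariance then yields the asserted asymptotic independence. The hard part is not this algebraic collapse, which is essentially forced by the canonical-link identity relating $\vvar(V(Y)\mid\bX)$ to $A''$ together with the nested orthonormal structure $E_1=E_2F$; rather, it is the careful M-estimator bookkeeping that justifies the Taylor linearization for $\hat\btheta_1$ under possible misspecification of $M_1$, i.e.\ establishing existence and uniqueness of $\btheta_1^0$, positive definiteness of $H_1$, and a uniform remainder bound for the expansion.
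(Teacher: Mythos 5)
Your proposal is correct and follows essentially the same route as the paper's Appendix~A proof: an M-estimator linearization $\sqrt{n}(\hat\btheta_i-\btheta_i^0)=H_i^{-1}n^{-1/2}U_{n,i}(\btheta_i^0)+o_P(1)$, the canonical-link identities $\EE[V(Y)\mid\bX]=DA(\bZ\btheta_0^0)^t$ and $\Cov(V(Y)\mid\bX)=D^2A(\bZ\btheta_0^0)$ to reduce the score cross-covariance to the true Fisher information when $M_2=M_{true}$, and the nesting/orthonormality identities ($E_1^t=E_1^tE_2E_2^t$, $E_1^tE_1=Id_{q_1}$, which you package as $E_1=E_2F$ with $F^tF=Id_{q_1}$) to show the projection $Id_{q_2}-E_2^tE_1E_1^tE_2$ annihilates it. Your conditional decomposition $V(Y)-DA(\bZ E_1\btheta_1^0)^t=\epsilon+\delta(\bX)$ is just a more compact organization of the paper's four-term expansion of $\EE W_1W_2^t=I+C_{12}$ with $C_{12}=0$, so the two arguments are substantively identical.
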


Now, we translate the theorem to the two stage testing framework by reformulating the GLMs so that it includes an interaction term and explicitly defining the matrices $E_1$ and $E_2$. Consider the two nested GLMs $M_1$ and $M_2$:
\begin{align}
\label{M1}
M_1:\qquad h\big(\mathbb{E}(Y|{\bX},\bbeta_1,\bnu_1)\big) &= \beta_{1,1} x_{1} + \beta_{1,2} x_{2} + \beta_{1,k} x_{k}\\
\label{M2}
M_2:\qquad h\big(\mathbb{E}(Y|{\bX},\bbeta_2,\bnu_2)\big) &= \beta_{2,1} x_{1} + \beta_{2,2} x_{2} + \beta_{2,k} x_{k} + \beta_{2,\ell} x_{\ell} +  \beta_{2,k\ell} x_{k} x_{\ell}
\end{align}
with ${\bX}=(x_1,x_2,\ldots,x_p)^t$, $h$ the link function, $x_1=1$ so that $\beta_{1,1}$ and $\beta_{2,1}$ are the intercepts in the models, and $x_2$ the ``fixed'' covariate which is included in all models and is not tested (multiple ``fixed'' covariates can be included). The model $M_1$ equals the GLM that is fitted in the first stage, the screening stage $S_1$, of the two-stage testing procedure as defined earlier. In this stage it is studied whether the covariate $x_k$ (genetic marker $k$) is associated with the outcome $Y$ by testing the null hypothesis $H_0: \beta_{1,k}=0$. Model $M_2$ is the interaction model that is fitted in stage 2, the verification stage $S_2$. Now, it is tested whether the interaction term $x_k x_\ell$ is associated with $Y$, by testing the null hypothesis $H_0: \beta_{2,k\ell}=0$.

Further, $\bbeta_1=(\beta_{1,1},\beta_{1,2},\beta_{1,k})^t$ and $\bbeta_2=(\beta_{2,1},\beta_{2,2},\beta_{2,k},\beta_{2,\ell},\beta_{2,k\ell})^t$ denote the vectors of regression parameters in the two models, and their maximum likelihood estimators are denoted as $\hat{\bbeta}_1$ and $\hat{\bbeta}_2$. The vectors $\bbeta_1^0=(\beta_{1,1}^0,\beta_{1,2}^0,\beta_{1,k}^0)^t$ and $\bbeta_2^0=(\beta_{2,1}^0,\beta_{2,2}^0,\beta_{2,k}^0,\beta_{2,\ell}^0,\beta_{2,k\ell}^0)^t$ are the values that maximize the pointwise limit of the log likelihood function corresponding to the models $M_1$ and $M_2$ (divided by $n$). These vectors coincide with the true parameters only if the corresponding regression model equals the true model. Define $\hat\bbeta_2^\perp=(\hat\beta_{2,\ell},  \hat\beta_{2,k\ell})^t$ and $\bbeta_2^{0,\perp}=(\beta_{2,\ell}^{0}, \beta_{2,k\ell}^0)^t$, the regression parameters for the covariates that are additional to model $M_2$ compared to model $M_1$.  

If, in the model, there are no nuisance parameters  (e.g., the Poisson and logistic regression model), then $\btheta_1=\bbeta_1$ and $\btheta_2=\bbeta_2$. If, moreover, $\{e_1, \ldots, e_{q_1}\}$ and $\{e_1, \ldots, e_{q_2}\}$ form canonical bases, $\btheta_2^{0,\perp}=(0.\ldots,0,\theta_{q_1+1}^0,\ldots,\theta_{q_2}^0)^t$ and it follows from Theorem \ref{theorem:pairwise} that  $\sqrt{n}(\hat\btheta_1-\btheta_1^0)=\sqrt{n}(\hat\bbeta_1-\bbeta_1^0)$ and  $\sqrt{n}(\hat\btheta_2^\perp-\btheta_2^{0,\perp})=\sqrt{n}(\hat\bbeta_2^\perp-\bbeta_2^{0,\perp})$ are asymptotically independent. Thus, $\sqrt{n}(\hat\beta_{1,k}-\beta^0_{1,k})$ and $\sqrt{n}(\hat\beta_{2,k\ell}-\beta_{2,k\ell}^0)$ are asymptotically independent: the interaction effect $\hat \beta_{2,k\ell}$ in model $M_2$ (fitted in the second stage of the two-stage testing procedure) is asymptotically independent of the marginal effects $\hat \beta_{1,k}$ in model $M_1$ that is fitted in the first stage.  \\ 

\begin{result}[{\bf Pairwise independence of estimated regression coefficients}]
\label{theorem:pairwise} 
Suppose that $Y_1|{\bX}_1, \ldots, Y_n|{\bX}_n$ is a sample from the GLM $M_2$ in (\ref{M2}) with a canonical link function $h$. If there are no nuisance parameter, $\sqrt{n}(\hat\beta_1-\beta_1^0)$ and  $\sqrt{n}(\hat\bbeta_2^\perp-\bbeta_2^{0,\perp})$ are asymptotically independent.\\
\end{result}

Suppose that in the GLMs with a canonical link function, the parameter vector $\btheta_0$ as defined in the density of $Y|X$ in (\ref{expFamily}) is of the form $(\nu_0^t,\bbeta_0^t)^t$, with $\nu_0$ a nuisance parameter. By choosing $\{e_1, \ldots, e_{q_1}\}$ and $\{e_1, \ldots, e_{q_2}\}$ so that they form canonical bases for the parameter spaces for $M_1$ and $M_2$, Theorem \ref{theorem:pairwise} can be directly applied to prove pairwise independence; i.e., asymptotically independence between the interaction effect $\hat \beta_{2,k\ell}$ in the second stage and the marginal effects $\hat \beta_{1,k}$ and $\hat \beta_{1,\ell}$ in the first stage of the two-stage testing procedure. If there are nuisance parameters in the model, but the parameter vector $\btheta_0$ is not of the form $(\nu_0^t,\bbeta_0^t)^t$ (the regression and the nuisance parameters are not separated), like in the linear regression model (see below), pairwise independence of the estimated regression coefficients is not guaranteed by the theorem. As an example, suppose $Y|\bX$ has a normal distribution with mean $\bX^t \bbeta_0$ and variance $\sigma_0^2$. By writing the conditional density in the form (\ref{expFamily}): 
\begin{align*}
f_{\bbeta_0,\sigma_0^2}(y|\bX) \;=\; 
\frac{1}{\sqrt{2\pi}} \exp\Big(y \; \bX^t \bbeta_0/\sigma_0^{2}  - y^2 /(2\sigma_0^{2}) -(\bX^t \bbeta_0)^2/(2\sigma_0^2) - \log \sigma_0   \Big)  
\;=\; \frac{1}{\sqrt{2\pi}} \exp\Big( V(y)^t Z\theta_0-A(Z\theta_0) \Big)
\end{align*}
it follow that the model belongs to the family of GLMs with a canonical link function, with $V(y)=(y^2,y)^t$ and $A(Z\btheta_0)=(\bX^t\bbeta_0)^2/(2\sigma_0^2)+\log\sigma_0$, the parameter ${\btheta_0}= (-1/(2\sigma_0^2), {\bbeta_0}^t/\sigma_0^2)^t\in \mathbb{R}^{p+1}$, and 
\begin{align}
Z = \begin{pmatrix}
~1 &~~ {\bf 0}^t~\\
~0 &~~ {\bX}^t~
\end{pmatrix}
\in \mathbb{R}^{2}\times \mathbb{R}^{p+1} 
\label{Z}
\end{align}
for ${\bf 0}\in \mathbb{R}^{p}$ a vector with zeroes. By Theorem \ref{theorem:pairwuseGLM}, and for the standard basis, $\sqrt{n}(\hat\beta_{1,k}/\hat\sigma_1^2-\beta_{1,k}^0/\sigma_1^2)$ and $\sqrt{n}(\hat\beta_{2,k\ell}/\hat\sigma_2^2-\beta_{2,k\ell}^0/\sigma_2^2)$ are 
asymptotically independent if model $M_2$ is the true model. Here $\hat\sigma_1^2$ and $\hat\sigma_2^2$ are the estimated variances of the errors in the models $M_1$ and $M_2$, $\sigma_1^2$ and $\sigma_2^2$. From this independence results we can not (directly) conclude asymptotic independence of $\sqrt{n}(\hat\beta_{1,k}-\beta_{1,k}^0)$ and $\sqrt{n}(\hat\beta_{2,k\ell}-\beta_{2,k\ell}^0)$.  For the linear regression model pairwise independence can be proven by straightforward calculations. \\

\begin{theorem}[{\bf Pairwise independence in the linear regression model}]
\label{theorem:pairwiseGaussian} 
Suppose that $Y_1|{\bX}_1, \ldots, Y_n|{\bX}_n$ is a sample from the linear regression model $M_2$ in (\ref{M2}) with the identical link function and unknown variance $\sigma^2$. Then $\hat\beta_1$ and  $\hat\bbeta_2^\perp$ are independent.\\
\end{theorem}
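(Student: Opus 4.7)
The plan is to argue directly from linearity and joint Gaussianity, using the Frisch--Waugh--Lovell (FWL) theorem to handle $\hat\bbeta_2^\perp$. Write the model in matrix form: let $X_1 \in \mathbb{R}^{n\times 3}$ be the design matrix of $M_1$ (columns corresponding to the intercept, $x_2$, $x_k$) and let $X_1^c \in \mathbb{R}^{n\times 2}$ collect the additional columns $(x_\ell, x_k x_\ell)$, so that the design matrix of $M_2$ is $X_2=[X_1\;|\;X_1^c]$. Under $M_2$, $Y\sim N(X_2\bbeta_2,\sigma^2 I_n)$.

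First I would express both estimators as explicit linear functions of $Y$. By definition, $\hat\bbeta_1 = (X_1^t X_1)^{-1} X_1^t Y$. For $\hat\bbeta_2^\perp=(\hat\beta_{2,\ell},\hat\beta_{2,k\ell})^t$, FWL gives
\begin{equation*}
\hat\bbeta_2^\perp \;=\; (X_1^{c,t} M_{X_1} X_1^c)^{-1} X_1^{c,t} M_{X_1} Y,
\end{equation*}
where $M_{X_1}=I_n - X_1(X_1^t X_1)^{-1} X_1^t$ is the residual-maker matrix for $X_1$. Thus both $\hat\bbeta_1$ and $\hat\bbeta_2^\perp$ are linear transforms of the Gaussian vector $Y$, so their joint distribution is (multivariate) Gaussian and independence is equivalent to zero cross-covariance.

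Second I would compute the cross-covariance. Since $\vvar(Y)=\sigma^2 I_n$,
\begin{equation*}
\Cov(\hat\bbeta_1,\hat\bbeta_2^\perp) \;=\; \sigma^2 (X_1^t X_1)^{-1} X_1^t M_{X_1} X_1^c (X_1^{c,t} M_{X_1} X_1^c)^{-1}.
\end{equation*}
The crucial observation is that $X_1^t M_{X_1} = X_1^t - X_1^t X_1(X_1^t X_1)^{-1} X_1^t = 0$, so the whole product vanishes. Combined with joint normality, this yields exact (not merely asymptotic) independence, which concludes the proof.

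There is no real obstacle here; the only subtlety worth flagging is that, although $\sigma^2$ is unknown, neither estimator depends on it, and the argument uses only the mean and covariance structure of $Y$ together with joint Gaussianity. The FWL step is what makes the coefficients of the ``new'' variables $x_\ell, x_k x_\ell$ into an orthogonal projection with respect to the columns already present in $M_1$, which is precisely the algebraic ingredient that kills the covariance. This also explains why the natural quantity to pair with $\hat\bbeta_1$ is $\hat\bbeta_2^\perp$ (the coefficients associated with the \emph{new} directions) rather than $\hat\bbeta_2$ itself.
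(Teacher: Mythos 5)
Your proof is correct, and it follows the same overall strategy as the paper's: both estimators are linear in $Y$, hence jointly Gaussian (given the design), so independence reduces to a zero cross-covariance. Where you differ is in how that covariance is killed. The paper computes the full $3\times 5$ matrix $\Cov(\hat\bbeta_1,\hat\bbeta_2)=\sigma^2\,(\bZ_1\bZ_1^t)^{-1}\bZ_1\bZ_2^t(\bZ_2\bZ_2^t)^{-1}$ and then invokes the partitioned-inverse (Schur complement) formula for $\bZ_2\bZ_2^t$ to show that the block corresponding to the new coordinates cancels. You instead isolate $\hat\bbeta_2^\perp$ up front via Frisch--Waugh--Lovell and observe that the annihilator identity $X_1^t M_{X_1}=0$ makes the covariance vanish in one line. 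The two are algebraically equivalent (FWL is essentially the partitioned-inverse computation packaged as a lemma), but your route is shorter, makes the geometric reason for the cancellation transparent, and shows the covariance vanishes identically for every realization of the design rather than only after taking the expectation over the covariates, which is how the paper's displayed computation is organized. One small point either way: joint normality of $(\hat\bbeta_1,\hat\bbeta_2^\perp)$ holds conditionally on the covariates, so the independence obtained is conditional on the design; your phrasing (working with fixed design matrices throughout) handles this at least as cleanly as the paper does.
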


The proof of Theorem \ref{theorem:pairwiseGaussian} is given in Appendix A. In terms of the two-stage testing procedure, the previous results state that if $M_2$ is the true model, the estimate of the regression parameter for the interaction term in the second stage, $\hat\beta_{2,k\ell}$ in the model $M_2$ in (\ref{M2}), is (asymptotically) independent of the maximum likelihood estimators of the regression parameter $\hat\beta_{1,k}$ in the model in (\ref{M1}) and by symmetry also of $\hat\beta_{1,\ell}$.   
  
The assumption that model $M_2$ is the true model is essential for the proof. This assumption implies that all variables that are associated with the outcome via a main effect or an interaction effect must be included in this model. This assumption would be difficult to hold in a study in which one aims to find new variables (interactions of genetic markers) that are associated with the outcome. However, to control the FWER in the two-stage testing procedure, independence is only required under the (full) null hypothesis. Therefore, in the two-stage procedure just described, the null hypothesis should state that there are no main and no interaction effects among the variables that participate in the two-stage testing procedure to be associated with the outcome; known main effects should be included as ``fixed'' covariates. The paper by Dai et al (2012) may make you think differently about the choice of the null hypothesis.\nocite{Dai} However, in their paper pairwise independence between the test-statistics in the two stages is proven under the (implicit) assumption of no model misspecification of the interaction model in (\ref{M2}), and thus under the assumption that all true main and interaction effects are included in this model for every statistical test in the verification stage. In other words, they (implicitly) work with the same null hypothesis. 
The above discussion also means that, once we can prove complete independence, we have FWER control in weak sense (and not in strong sense like is claimed by Dai et al (2012)\nocite{Dai}); under the union of all null hypotheses. A proof of Theorem \ref{theorem:pairwiseGaussian} for unknown variance $\sigma^2$ seems to be missing in Dai et al (2012).\nocite{Dai}


The theorems above state that there is pairwise independence between the maximum likelihood estimators of the regression parameters. However, in the two-stage testing procedure the test-statistic $T_k^{S_1}=\hat\beta_{1,k}/\widehat{\vvar} (\hat\beta_{1,k})^{1/2}$ for main effects in stage 1 has to be (asymptotically) independent of the test-statistic $T_{k\ell}^{S_2}=\hat\beta_{2,k\ell}/\widehat{\vvar} (\hat\beta_{2,k\ell})^{1/2}$ for testing for interaction in stage 2, under the null hypothesis that all main and interaction effects equal zero (so also $\beta_{1,k}^0=\beta_{2,k}^0=\beta_{2,\ell}^0=\beta_{2,k\ell}^0=0$. 
The Wald test-statistics in the first and second stage equal $\hat\beta_{1,k}/\widehat{\vvar} (\hat\beta_{1,k})^{1/2} = \sqrt{n}\hat\beta_{1,k} (I(\hat{\btheta}_1)^{-1}_{(k)})^{1/2}$ and $\hat\beta_{2,k\ell}/\widehat{\vvar} (\hat\beta_{2,k\ell})^{1/2} = \sqrt{n}\hat\beta_{2,k\ell} (I(\hat{\btheta}_2)^{-1}_{(k\ell)})^{1/2}$, with $I(\hat{\btheta}_1)$ and $I(\hat{\btheta}_2)$ the estimated Fisher information matrices in the models $M_1$ and $M_2$, and $I(\hat{\btheta}_2)_{(k\ell)}^{-1}$ equal to the diagonal element of the inverse Fisher Information matrix that corresponds to the parameter $\beta_{k\ell}$ (and similar for $I(\hat{\btheta}_1)_{(k)}^{-1}$). Asymptotic pairwise independence between the test-statistics follows from the pairwise independence of the maximum likelihood estimators of the regression parameters and Slutky's lemma (see e.g., van der Vaart (1998)).\nocite{vdV1998} \\

\begin{result}[{\bf Pairwise independence of test-statistics}]
\label{res: pairwise_teststatistics}
Suppose that $Y_1|{\bX}_1, \ldots, Y_n|{\bX}_n$ is a sample from model $M_2$ in (\ref{M2}) with a canonical link function $h$ without any nuisance parameter $\nu_2$, or from the linear regression model (identical link function) with unknown variance $\sigma^2$.
Then, under the full null hypothesis of no main and interaction effects, it holds that the Wald test-statistics $T_k^{S_1}$ and $T_{k\ell}^{S_2}$ are asymptotically independent.
\end{result}

\subsection{Complete independence between stages}
\label{Full independence GLM}

Define the generalized linear regression models:
\begin{align}
M_1:\qquad h\big(\mathbb{E}(Y|\bX,\bbeta_1,\bnu_1)\big) &= \beta_{1,1} x_{1} + \beta_{1,2} x_{2} + \beta_{1,j} x_{j} \label{FullM1}\\
M_2:\qquad h\big(\mathbb{E}(Y|\bX,\bbeta_2,\bnu_2)\big) &= \beta_{2,1} x_{1} + \beta_{2,2} x_{2} + \hspace{1.3cm} \beta_{2,k} x_{k} +  \beta_{2,\ell} x_{\ell} + \beta_{2,k\ell} x_{k} x_{\ell}\label{FullM2}\\
M_{true}:\qquad h\big(\mathbb{E}(Y|\bX,\bbeta_0,\bnu_0)\big) &= \beta_{0,1} x_{1} + \beta_{0,2} x_{2} \label{FullM3}
\end{align}
where, this time, model $M_1$ (fitted in stage 1) is not nested in model $M_2$ (fitted in stage 2), and the model $M_{true}$, the true model under the full null hypothesis, is nested in both $M_1$ and $M_2$. Every model has an intercept ($x_1=1$) and the ``fixed'' covariate $x_2$ (multiple ``fixed'' covariates can be included). 

Complete independence of the test-statistics can be proven under the null hypothesis of no main and no interaction effects for independent (standardized) covariates for the logistic, the Poisson and the linear regression model. This will be proven in steps. The first step is mixed pairwise independence. The proof of the theorem below is given in Appendix B. \\ 

\begin{theorem}[{\bf Mixed pairwise independence of estimated regression coefficients}]
\label{theorem:full}
Suppose that $Y_1|{\bX}_1, \ldots, Y_n|{\bX}_n$ is a sample from model $M_{true}$ in (\ref{FullM3}) and let $\hat{\bbeta}_1$ and $\hat{\bbeta}_2$ be the maximum likelihood estimators of the parameters ${\beta}_1=(\beta_{1,1}, \beta_{1,2}, \beta_{1,j})^t$ and ${\beta}_2=(\beta_{2,1}, \beta_{2,2}, \beta_{2,k},\beta_{2,\ell},\beta_{2,k\ell})^t$ in the models $M_1$ and $M_2$ in (\ref{FullM1}) and (\ref{FullM2}), respectively. Under the null hypothesis of no main and no interaction effects (Model $M_{true}$) and if the covariates $x_j, x_k, x_\ell$ are mutually independent, are independent of the covariate $x_2$ and have mean zero, then $\sqrt{n}(\hat\bbeta_1-\bbeta_1^0)$ and  $\sqrt{n}\hat\bbeta_2^\perp$ are asymptotically independent in the logistic, the Poisson and the linear regression model, with $\hat\bbeta_2^\perp=(\hat\beta_{2,k},\hat\beta_{2,\ell},\hat\beta_{2,k\ell})^t$, for all $j, k, \ell$ with $j\neq k,\ell$ and $\hat\bbeta_2^\perp=(\hat\beta_{2,\ell},\hat\beta_{2,k\ell})^t$ if $j=k$ (and similarly in case $k=\ell$).  \\
\end{theorem}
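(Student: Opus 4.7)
The plan is to linearize both maximum likelihood estimators via the standard asymptotic score expansion and then show that the relevant asymptotic cross-covariance vanishes; combined with joint asymptotic normality (from a multivariate CLT on the stacked score contributions) this gives asymptotic independence. A preliminary step identifies the pseudo-true values: under $M_{true}$ the expected-score equation is solved by $\bbeta_1^0=(\beta_{0,1},\beta_{0,2},0)^t$ and $\bbeta_2^0=(\beta_{0,1},\beta_{0,2},0,0,0)^t$, since the model-implied conditional mean in $M_i$ at $\bbeta_i^0$ coincides with the true mean $\mu_0(X)$, which depends only on $x_2$. Consequently the conditional variance $v(X):=\vvar(Y\mid X)$ also depends only on $x_2$: it equals $\mu_0(X)$ for Poisson, $\mu_0(X)(1-\mu_0(X))$ for logistic and $\sigma^2$ for Gaussian, and in each case $\bbeta_2^{0,\perp}=0$, so the statement reduces to asymptotic independence of $\sqrt{n}(\hat\bbeta_1-\bbeta_1^0)$ and $\sqrt{n}(\hat\bbeta_2^\perp-\bbeta_2^{0,\perp})$.

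With the canonical link, the score in $M_i$ at $\bbeta_i^0$ reduces to $U_i^{(t)}=(Y_t-\mu_0(X_t))\,\vec{x}_i^{(t)}$ (up to an irrelevant constant dispersion factor), with $\vec{x}_1=(1,x_2,x_j)^t$ and $\vec{x}_2=(1,x_2,x_k,x_\ell,x_k x_\ell)^t$. I would first observe that the Fisher information $I_2=\mathbb{E}[v(x_2)\,\vec{x}_2\vec{x}_2^t]$ is block diagonal in the partition ``null'' $(1,x_2)$ versus ``perp'' $(x_k,x_\ell,x_kx_\ell)$, because every off-diagonal entry contains a lone factor such as $\mathbb{E}[v(x_2)\cdot 1\cdot x_k]=\mathbb{E}[v(x_2)]\,\mathbb{E}[x_k]=0$, using that $x_k,x_\ell$ are independent of $x_2$ and mean zero. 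This decoupling yields $\sqrt{n}\hat\bbeta_2^\perp\approx I_{2,\perp}^{-1}\,n^{-1/2}\sum_t (Y_t-\mu_0(X_t))\,\vec{x}_2^{\perp,(t)}$, so the asymptotic cross-covariance with $\sqrt{n}(\hat\bbeta_1-\bbeta_1^0)\approx I_1^{-1}\,n^{-1/2}\sum_t U_1^{(t)}$ equals $I_1^{-1}\,\mathbb{E}[v(x_2)\,\vec{x}_1\,(\vec{x}_2^\perp)^t]\,I_{2,\perp}^{-1}$. Every entry of the middle matrix factorizes, using the assumed mutual independence and independence from $x_2$ of $x_j,x_k,x_\ell$, the vanishing of their means, and the hypothesis $j\neq k,\ell$, into a product containing at least one of $\mathbb{E}[x_j]$, $\mathbb{E}[x_k]$, $\mathbb{E}[x_\ell]$; the whole matrix is therefore zero and the asymptotic cross-covariance vanishes.

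The main subtlety I anticipate is the linear regression model with unknown variance, since the Gaussian family with $\sigma^2$ unknown does not fit the ``separated'' canonical-link parametrization required for a direct appeal to Theorem \ref{theorem:pairwuseGLM}. Fortunately, the OLS formula for the $\bbeta$-MLE is free of $\sigma^2$, so the score-based computation above applies verbatim after replacing $U_i^{(t)}$ and $I_i$ by $(Y_t-(\vec{x}_i^{(t)})^t\bbeta_i^0)\vec{x}_i^{(t)}/\sigma^2$ and $\mathbb{E}[\vec{x}_i\vec{x}_i^t]/\sigma^2$; all $\sigma^2$ factors cancel from the sandwich $I_1^{-1}\Sigma_{12}I_{2,\perp}^{-1}$ and the vanishing-expectation argument carries over. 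The two special cases $j=k$ and $j=\ell$ merely drop the redundant coordinate from $\hat\bbeta_2^\perp$, and the remaining perp columns still vanish by the same independence argument (for instance, when $j=k$ the column indexed by $x_k x_\ell$ produces a lone factor $\mathbb{E}[x_\ell]=0$).
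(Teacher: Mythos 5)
Your proposal is correct, and at its core it performs the same computation as the paper: linearize both MLEs through the score expansion, observe that under the full null the pseudo-true fitted means coincide with the true mean (which depends only on $x_2$), and show that the asymptotic cross-covariance of the two linearized estimators vanishes because every entry of the relevant moment matrix factorizes through $\mathbb{E}[x_j]$, $\mathbb{E}[x_k]$ or $\mathbb{E}[x_\ell]=0$. The organization differs in a way worth noting. The paper embeds $M_1$ and $M_2$ in a common encompassing model $M_3$, reuses the covariance formula from the proof of Theorem~\ref{theorem:pairwuseGLM}, and isolates an abstract sufficient condition $E_2E_2^t\,I\,E_2=I\,E_2$ (Lemma~\ref{Lemma:full}) whose verification for the linear, Poisson and logistic models it leaves as ``straightforward calculations''; you skip the encompassing model and projection matrices entirely and verify the vanishing moments in coordinates, which amounts to supplying exactly those omitted calculations. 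Your treatment is also somewhat more careful on two points the paper handles tersely: the Gaussian model with unknown $\sigma^2$ (you note the $\bbeta$-estimator is OLS so the dispersion cancels from the sandwich, whereas the paper's exponential-family parametrization does not separate $\bbeta$ from $\sigma^2$), and the degenerate cases $j=k$ or $j=\ell$, which the paper simply refers back to Result~\ref{theorem:pairwise}. One small gloss on your side: in the case $j=k$ the block partition of $I_2$ that you need is between $(1,x_2,x_k)$ and $(x_\ell,x_kx_\ell)$ rather than the one you stated for the generic case, but the same mean-zero/independence computation shows those off-diagonal blocks vanish too, so the argument goes through.
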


In the theorem, $\bbeta_1^0=(\beta_{1,1}^0,\beta_{1,2}^0,0)$ with $\beta_{1,1}^0=\beta_{0,1}^0$ and $\beta_{1,2}^0=\beta_{0,2}^0$, the true parameter values in model $M_{true}$. From the theorem it follows that $\sqrt{n}\hat \beta_{2,k\ell}$ and $\sqrt{n}\hat\beta_{1,j}$ are asymptotically independent. If $j=k$ or $j=\ell$ the theorem implies pairwise independence.

It directly follows from Theorem \ref{theorem:full} and Slutky's Lemma that the Wald test-statistics in the second stage are asymptotically independent of the Wald test-statistics in the first stage, under the null hypothesis of no main and interaction effects (and under the assumptions stated in theorem above). This leads to the following result. \\

\begin{result}[{\bf Mixed pairwise independence of test-statistics}]
Suppose that $Y_1|{\bX}_1, \ldots, Y_n|{\bX}_n$ is a sample from model $M_{true}$ in (\ref{FullM3}). Under the null hypothesis of no main and no interaction effects (Model $M_{true}$) and if the covariates $x_j, x_k, x_\ell$ are mutually independent, independent of the covariate $x_2$ and have mean zero, then the Wald test-statistics  $T_j^{S_1}$ and $T_{k\ell}^{S_2}$ are asymptotically independent in the logistic, the Poisson and the linear regression model, for all $j,k,\ell$.\\  
\end{result}

Complete independence is stronger than mixed pairwise independence of the test-statistics. For complete independence every test-statistic $T_{k\ell}^{S_2}$ in the second stage should be asymptotically independent of all first stage test-statistics $T_j^{S_1}, j=1,\ldots,p$ jointly. Since, under the null hypothesis, the test-statistics in both stages are asymptotically Gaussian, complete independence follows from the asymptotic (mixed) pairwise independence.  

Now, complete independence has been proven for the linear, logistic and the Poisson regression models. For other GLMs complete independence may be true as well, but this depends on the form of the model and needs to be checked for each model separately. More details on how to check complete independence can be found in Appendix B. 

The number of pairs in the verification stage, $K_1$, determines the multiple testing correction in stage 2. This number is stochastic and a function of the test-statistics in the screening-stage. In order to obtain independence between the two-stages, the number $K_1$ could be replaced by the non-random value $(p\alpha_1)^2$, which approximately equals the expected number of pairs in the second stage under the (combined) null hypotheses, so under a sparse setting. 

Complete independence holds under the assumption that the covariates are independent. In a GWAS the covariates (i.e., genetic markers) can not be assumed to be independent; markers are expected to be correlated with neighboring markers. This is only a fraction of the total number of markers in the data set. In the simulation studies in Subsection \ref{sub:simprocedure} it will be seen that in all considered settings with correlated markers the FWER is controlled.

Pairwise and complete independence imply FWER control of the proposed testing procedure. However, under exactly the same assumptions as made in Theorem \ref{theorem:full}, FWER control can also be proved directly and without replacing the stochastic number $K_1$ in the multiple testing correction. This proof is given in Appendix C. Summarized, the two-stage procedure in which a Bonferroni correction is performed in the second stage only, preserves FWER control at level $\alpha$ under the assumptions stated above.

\section{Two-stage testing with the Cox proportional hazards model}
\label{sec: CoxPHmodel}
The most popular model for time-to-event data is the Cox proportional hazards (PH) regression model. This model is a semi-parametric model and does not belong to the family of GLMs. The regression parameters in the model are estimated by maximizing the partial likelihood. Asymptotic pairwise and complete independence can be proved along the same line as before, but the proof is based on the partial likelihood function (in stead of the full likelihood).  

Let $\tilde T$ be the time to an event of interest, from a well defined time zero. Given a covariate vector $\bX$, the hazard function is defined as \\[-9mm] 
\begin{align*}
h(t|\bX) = h_0(t) \exp(\bbeta^t \bX),
\end{align*}
where $h_0$ is the completely unknown baseline hazard function and $\bbeta$ the vector of unknown regression coefficients. Let $C$ be the right-censoring time, which is assumed to be independent of time $\tilde T$. Furthermore, let $T=\min\{\tilde T,C\}$ and  $\Delta=1_{\{\tilde T \leq C\}}$ the indicator function which indicates whether the event has been observed or is right-censored. Suppose data of $n$ independent individuals is available: $\{(T_1,\Delta_1, \bX_1),\ldots,(T_n,\Delta_n,\bX_n)\}$, where the index refers to the individual. In the Cox regression model the vector with regression parameters $\bbeta$ is estimated by maximizing the partial likelihood:
\begin{align}
PL(\bbeta) = \frac{1}{n} \sum_{i=1}^n \int_0^\infty \log \Bigg(\frac{\exp(\bbeta^t \bX_i)}{\sum_{j=1}^n Y_j(t)\exp(\bbeta^t \bX_j)}\Bigg){\rm d} N_i(t),
\end{align}
with $Y_i(t)=1_{\{T_i>t\}}$ indicating whether a person is still at risk at time $t$ and $N_i(t)=1_{\{T_i\leq t\}}$ (see e.g., Aalen et al (2008) and Therneau et al (2000)\nocite{Aalen,Therneau}). 

Let $M_{true}$ be the true model under the full null hypothesis, and let $M_1$ and $M_2$ be the nested models with hazard functions
\begin{align}
\label{SM1}
M_1:\qquad h_1(t|\bX) &= h_{0,1}(t) \exp(\beta_{1,1} x_{1} + \beta_{1,k} x_{k}) \\
\label{SM2}
M_2:\qquad h_2(t|\bX) &= h_{0,2}(t) \exp(\beta_{2,1} x_{1} + \beta_{2,k} x_{k} + \beta_{2,\ell} x_{\ell} + \beta_{2,k\ell} x_{k} x_{\ell}), \\
M_{true}:\qquad h_0(t|\bX) &= h_{0,0}(t) \exp(\beta_{0,1} x_{1}
).\label{SM0}
\end{align}
Here, $x_1$ is the ``fixed" covariate as explained before. The intercept is missing, as it is absorbed in the baseline hazard function. The maximum likelihood estimators of the regression parameters in the models $M_1$ and $M_2$ are denoted as $\hat\bbeta_1=(\hat\beta_{1,1},\hat\beta_{1,k})^t$ and $\hat\bbeta_2=(\hat\beta_{2,1},\hat\beta_{2,k},\hat\beta_{2,\ell},\hat\beta_{2,k\ell})^t$, respectively. Further, let $\hat\bbeta_2^\perp=(\hat\beta_{2,\ell},\hat\beta_{2,k\ell})^t$ be the estimated  regression parameters of the covariates in model $M_2$, that are additional to the covariates in model $M_1$. Finally, let  $\bbeta_1^0=(\beta_{1,1}^0,0)^t$, with $\beta_{1,1}^0=\beta_{0,1}^0$ the parameter value in the true model $M_{true}$.\\

\begin{theorem}[{\bf Pairwise independence in the Cox PH model}] 
\label{Theorem:Cox pairwise} 
Let $(T_1=\tilde{T}_1\wedge C_1, \Delta_1, \bX_1), \ldots, (T_n=\tilde{T}_n\wedge C_1, \Delta_n, \bX_n)$ be a random sample. Suppose that $\tilde{T}_i|X_i$ is a random observation from the Cox PH model with hazard function (\ref{SM0}), then $\sqrt{n}(\hat\bbeta_1-\bbeta_1^0)$ and  $\sqrt{n}\hat\bbeta_2^\perp$ are asymptotically independent.\\
\end{theorem}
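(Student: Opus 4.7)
The plan is to mirror the proof of Theorem~\ref{theorem:full} (mixed pairwise independence in the GLM setting), with the full log-likelihood replaced by the Cox partial log-likelihood and the ordinary central limit theorem replaced by Rebolledo's martingale central limit theorem. Writing $U_i(\bbeta)$ for the gradient of the partial log-likelihood of $M_i$ and $\Sigma_i$ for the in-probability limit of the negative normalised Hessian at the limit parameter, a standard Taylor expansion of the score equation $U_i(\hat\bbeta_i)=0$ around $\bbeta_1^0=(\beta_{0,1}^0,0)^t$ and $\bbeta_2^0=(\beta_{0,1}^0,0,0,0)^t$ (the correct limits, since only $x_1$ is truly active under $M_{true}$) delivers
\[
\sqrt{n}\,(\hat\bbeta_i-\bbeta_i^0) \;=\; \Sigma_i^{-1}\,\frac{1}{\sqrt{n}}\,U_i(\bbeta_i^0) + o_P(1),\qquad i=1,2.
\]
Asymptotic independence therefore reduces to showing that the asymptotic cross-covariance between the $U_1$-components and the components of $U_2$ corresponding to $\hat\bbeta_2^\perp$ vanishes, combined with a block-diagonality argument for $\Sigma_2^{-1}$.

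For the cross-covariance I would use the martingale representation of the partial score under $M_{true}$,
\[
\frac{1}{\sqrt{n}}U_i^{(r)}(\bbeta_i^0) \;=\; \frac{1}{\sqrt{n}}\sum_{j=1}^n \int_0^\infty \bigl\{x_{j,r} - \bar x_r^{(i)}(t,\bbeta_i^0)\bigr\}\,dM_j(t),
\]
where $M_j(t) = N_j(t) - \int_0^t Y_j(s)h_{0,0}(s)e^{\beta_{0,1}^0 x_{j,1}}\,ds$ is the counting-process martingale under $M_{true}$ and $\bar x_r^{(i)}$ is the usual risk-set weighted average of the $r$-th covariate in model $M_i$. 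Rebolledo's CLT yields joint asymptotic normality of the stacked score vector with a covariance given by the limit of its predictable variation, so the task becomes the computation of limits of integrals of sample covariances against the cumulative baseline hazard. Assuming, in the spirit of Theorem~\ref{theorem:full}, that $x_k$ and $x_\ell$ are mutually independent and independent of $x_1$, a uniform-in-$t$ law of large numbers of Andersen--Gill type shows that $\bar x_r^{(i)}(t,\bbeta_i^0)$ converges to the marginal mean $E[x_r]$ for $r\in\{k,\ell,k\ell\}$. The cross predictable variation between an $r\in\{1,k\}$-component of $U_1$ and an $r'\in\{\ell,k\ell\}$-component of $U_2$ then factorises into $\Cov(x_r,x_{r'})\cdot\int_0^\infty E[Y(s)e^{\beta_{0,1}^0 x_1}]\,h_{0,0}(s)\,ds$, which is zero by the independence assumption.

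The same factorisation applied to the Hessians shows that the corresponding off-diagonal block of $\Sigma_2$ (and hence of $\Sigma_2^{-1}$) between the $\{1,k\}$- and $\{\ell,k\ell\}$-coordinates vanishes, so $\sqrt n\,\hat\bbeta_2^\perp$ depends asymptotically only on the $\{\ell,k\ell\}$-block of $U_2(\bbeta_2^0)$. Combined with the vanishing cross-covariance, this yields a jointly Gaussian limit with a block-diagonal covariance, from which asymptotic independence follows. The main obstacle is handling the nonlinear risk-set denominator $\sum_j Y_j(t)e^{\bbeta^t \bX_j}$: unlike the GLM score, the Cox partial score is not a simple i.i.d.\ sum, so both the uniform-in-$t$ consistency of $\bar x_r^{(i)}$ and the convergence of the empirical covariance terms in the Hessian must be justified under the standard Cox regularity conditions, and Rebolledo's Lindeberg-type condition has to be verified. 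A secondary subtlety is that the block-diagonality needed is that of $\Sigma_2^{-1}$ rather than $\Sigma_2$ itself, which only follows because the same covariance factorisation zeros out the relevant off-diagonal block of $\Sigma_2$ exactly.
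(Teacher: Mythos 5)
Your high-level skeleton (score expansion of the partial log-likelihood, martingale representation of the score, Rebolledo's CLT, joint Gaussianity, then killing a covariance block) matches the paper's Appendix~D, but you have modelled the argument on the wrong theorem, and this creates a genuine gap. Theorem~\ref{Theorem:Cox pairwise} is the Cox analogue of the \emph{nested} pairwise result (Theorem~\ref{theorem:pairwuseGLM}), not of the mixed result (Theorem~\ref{theorem:full}): the models (\ref{SM1}) and (\ref{SM2}) share the covariates $x_1$ and $x_k$, so $M_1$ is nested in $M_2$, and the theorem assumes \emph{nothing} about independence of $x_k$, $x_\ell$ and $x_1$. By importing the independence (and, implicitly, the centering) hypotheses of Theorem~\ref{theorem:full} you prove a weaker statement than the one asked. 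Worse, even under those extra hypotheses your factorisation fails for the pair $(x_k,\,x_kx_\ell)$: with $x_k\perp x_\ell$ one has $\Cov(x_k,\,x_kx_\ell)=\vvar(x_k)\,\EE[x_\ell]$, which is nonzero unless $x_\ell$ is centered, so neither the cross predictable covariation between the $x_k$-component of $U_1$ and the $x_kx_\ell$-component of $U_2$ nor the corresponding off-diagonal entry of $\Sigma_2$ vanishes; the block-diagonality argument for $\Sigma_2^{-1}$ therefore breaks down.

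The paper's mechanism is different and requires no distributional assumption on the covariates. Under the full null (\ref{SM0}) the pseudo-true parameters of $M_1$ and $M_2$, embedded in the common parameter space, both coincide with the true parameter, so the two partial-likelihood scores are the projections $E_1^t$ and $E_2^t$ of one and the same full score vector; the martingale CLT gives that this full score is asymptotically $N(0,I)$, whence $\EE W_1W_2^t=\EE W_1W_1^t=I$. Writing $\sqrt{n}\,\hat h_i=(E_i^tIE_i)^{-1}E_i^tW_i$, the covariance of $\hat h_1$ with the $M_1$-orthogonal part of $\hat h_2$ equals $(E_1^tIE_1)^{-1}E_1^tIE_2(E_2^tIE_2)^{-1}(Id_{q_2}-E_2^tE_1E_1^tE_2)$, and the nesting identities $E_1^t=E_1^tE_2E_2^t$ and $E_1^t=E_1^tE_1E_1^t$ collapse this to zero exactly; equivalently, $\Cov(U_2,U_1)$ is the relevant block of $\Sigma_2$ itself, so $[\Sigma_2^{-1}]_{\{\ell,k\ell\},\cdot}\,\Cov(U_2,U_1)=[Id]_{\{\ell,k\ell\},\{1,k\}}=0$. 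To repair your proof, drop the covariate-independence route and use this projection/nesting identity; your martingale-CLT and Andersen--Gill consistency steps can be kept as the analytic backbone.
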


The proof is given in Appendix D. Complete independence for the Cox PH model can be shown as before, under the full null hypothesis and under the assumptions of independent (and standardized) covariates. To control the FWER in the two-stage testing procedures, the test-statistics in the two stages need to be asymptotically completely independent, under the null hypothesis. Like for the GLMs, this independence follows from the asymptotic (mixed) pairwise independence of the maximum likelihood estimators of the regression parameters and Slutky's Lemma (see e.g., Van der Vaart (1998)\nocite{vdV1998}).

\section{Simulation studies and application}
\label{sec: simstudies}
Suppose the data $(Y_1, \bX_1), \ldots, (Y_n, \bX_n)$ of $n$ independent individuals from a population of interest are available. For individual $i$ the variable $Y_i$ is the outcome of interest and $\bX_i=(x_{i,1},\ldots,x_{i,p})^t$ is a vector with a numerical representation of the genotypes at $p$ biallelic markers. The genotypes can be numerically represented in different ways depending on the genetic model that is assumed (dominant, additive, recessive), but for simplicity we assume that $x_{i,j}$ equals the number of minor alleles that individual $i$ has at marker $j$; so taking the values 0, 1, 2. Our aim is to detect markers for which the outcome of interest is associated with an interaction of markers. 

In this section we present the results of simulation studies for the linear, the Poisson  and the Cox PH model. Type I error control of the whole two-stage procedure is considered in Subsection \ref{sub:FWER}, and in Subsection \ref{sub:power} power comparisons for different settings (correlated and uncorrelated markers), different first stage thresholds and for different values of the interaction effects are made. In Subsection \ref{sub:marginal} we justify testing for marginal effects in the first stage to detect interaction effects. We start with a description of the simulation process in Subsection \ref{sub:simprocedure}.

\subsection{Simulation procedure}
\label{sub:simprocedure}
Data are simulated or obtained in two steps:
\begin{enumerate}
\item obtaining marker data by simulation or using real life data,
\item simulation of phenotype (outcome) data based on the marker data.
\end{enumerate} 
A single data set consists of marker and phenotype data of 2000 individuals with each 3000 markers. Observations of different individuals are assumed to be independent. 

\bigskip

\noindent
{\bf Marker data}\\
The FWER and the power of the two stage testing procedure are studied in two settings: the marker genotypes are either independent or correlated within individuals. 

\medskip

\noindent
{\it Independent marker data.} For the 2000 individuals with each 3000 markers the number of minor alleles are simulated independently from a binomial distribution with parameters $m=2$ (two alleles) and a probability that is drawn from a uniform distribution on the interval $[0.1 ; 0.5]$. For the power simulations, two markers are considered to be causal. Their number of minor alleles are drawn from the binomial distribution with parameters $m=2$ and probability $0.2$. The minor allele frequencies of the causal markers are fixed in order to diminish the variability in the data. 

\medskip

\noindent
{\it Correlated marker data.} The correlated marker data are real life genotype data to mimic a realistic correlation structure between the markers (Pecanka et al (2019)\nocite{Pecanka2}). The data set contains the data of 2000 individuals with each 3000 markers. The linkage disequilibrium (LD) structure of the 3000 markers is shown in Figure \ref{fig:LDstructure}.

\begin{figure}[h]
\begin{center}
\includegraphics[angle=90,width=0.035\textwidth]{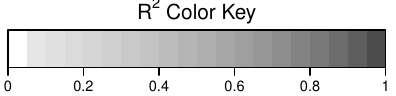} \includegraphics[height=0.25\textwidth, width=0.82\textwidth]{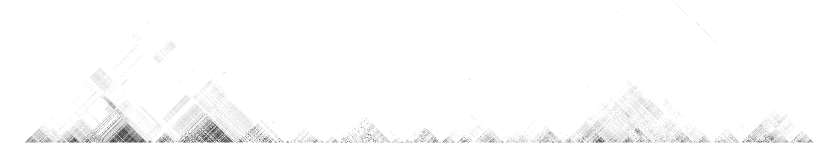} 
\caption{Heat map of the LD pattern for the 3000 markers based on data of 2000 individuals. LD is measured by the squared correlation coefficient.}
\label{fig:LDstructure}
\end{center}
\end{figure}

A few quality checks have been performed to be sure that significant results will not be the consequence of extremely low variability in the covariates or the interaction term, or of collinearities in the second stage. In the data sets with independent data hardly any marker failed to pass the quality check. 

\bigskip

\noindent
{\bf Phenotype data}\\
Let $(x_1, x_2)$ be the pair of causal markers. For given values of main and interaction effects $\beta_1, \beta_2$ and $\beta_3$, the phenotype (outcome) data are simulated from the models:
\begin{itemize}
\item Linear model. The outcome value is simulated from the model $Y=\beta_1 x_1 + \beta_2 x_2 + \beta_3 x_1 x_2 +\epsilon$ by simulating $\epsilon$ from the ${\mathcal N}(0,1)$ distribution. 
\item Poisson model with offset. The offset $t$ is simulated from a uniform[1, 5] distribution. Next, the outcome variable is simulated from a Poisson distribution with mean $\mu=t\;\exp(\beta_1 x_1 + \beta_2 x_2 +\beta_3 x_1 x_2)$.
\item Cox PH model. The hazards function is assumed to be equal to $\lambda(t)=\lambda_0(t)\exp(\beta_1 x_1 + \beta_2 x_2 +\beta_3 x_1 x_2)$ with the baseline hazard $\lambda_0$ so that the event times  follow a Weibull distribution with shape and scale parameters equal to 2 and 1, respectively, if $x_1 = x_2 =0$. The censoring times are simulated from a uniform distribution at the interval $[z_{\{0.70\}} ; z_{\{0.99\}}]$ with $z_{\{x\}}$ the $x$th quantile of the Weibull distribution with shape and scale parameters equal to 2 and 1. These numbers were chosen so that approximately 10\% of the observations are censored under the full null hypothesis that $\beta_1=\beta_2=\beta_3=0$.
\end{itemize}
Different values of the regression parameters $\beta_1, \beta_2$ and $\beta_3$ have been considered. In the simulation studies to study FWER control, $\beta_1=\beta_2=\beta_3=0$.

\subsection{Simulation results: FWER}
\label{sub:FWER}
For each model and multiple first stage thresholds, the FWER is studied for independent and dependent marker genotypes. This is done in the following steps:  1) if the markers are assumed to be independent, marker data are simulated as described before (otherwise the correlated marker data are used). 2) phenotype (outcome) data are simulated for 2000 individuals under the full null hypothesis of no main and no interaction effects ($\beta_1=\beta_2=\beta_3=0$). 3) the two-stage testing procedure is applied. A type I error is made if at least one significant interaction is found in the second stage. An interaction between two markers is significant in the second stage if its p-value $< 0.05/K_1$ with $K_1$ the number of pairs in the second stage. 4) the steps 2 and 3 are repeated 5000 times. The FWER is estimated by the fraction of data sets for which at least one significant interaction was found in the second stage. 
This procedure is performed for every model and several thresholds. 

The results for the three models, for uncorrelated and correlated markers and for the first stage thresholds equal to 0.05, 0.01 and 0.005 are presented in Table \ref{tab:type I error}. In case of uncorrelated markers, the FWERs are just below the threshold of 0.05 for the three models and different first stage thresholds. For the correlated markers, the FWERs are rather low; possibly the multiple testing correction is too conservative. In Kooperberg et al (2008)\nocite{Kooperberg2008} simulation studies had been performed for the logistic regression model. When studying the type I error, they simulated the phenotype data from a model that includes a main effect, but no interaction effect. The marker genotypes were simulated from a first order Markov chain with a  correlation between neighboring markers equal to 0.7. They also found that the Bonferoni correction is conservative. In a first order Markov chain, the correlation between the markers decreases fast with their distance. That might be the reason that our results are more conservative than their outcomes.

\begin{table}[h]
\centering
\begin{tabular}{|l|c|c|c|c|c|c|}
\hline
& \multicolumn{3}{c|}{uncorrelated markers} & \multicolumn{3}{c|}{correlated markers} \\ 
FST & Linear & Poisson & CoxPH & Linear & Poisson & CoxPH  \\ 
\hline  
0.05 & 0.0448 & 0.0416 & 0.0432 & 0.0076 & 0.0048 & 0.0064 \\ 
  0.01 & 0.0462 & 0.0492 & 0.0432 & 0.0114 & 0.0094 & 0.0120 \\ 
  0.005 & 0.0452 & 0.0364 & 0.0490 & 0.0078 & 0.0078 & 0.0084 \\
  \hline
\end{tabular}
\caption{Type I error for different first stage thresholds (FST) for uncorrelated and correlated markers.}
\label{tab:type I error}
\end{table}


\subsection{Simulation results: Power}
\label{sub:power}
For this simulation study two causal markers are chosen. In the "uncorrelated marker setting" the number of minor alleles at these causal markers are sampled from a binomial distribution with parameters $m=2$ and probability 0.2. The number of minor alleles for the remaining markers are sampled as was described before. 
For the correlated marker setting, two of the 3000 markers are chosen such that they have a correlation of 0.3 and minor allele frequencies equal to 0.48 and 0.25. These causal markers are used to simulate the phenotype data from the three models of interest as described in Subsection \ref{sub:simprocedure} with $x_1$ and $x_2$ the number of minor alleles at the two causal markers. The interaction effect $\beta_3$ runs from 0 to 1 in small steps of 0.05. Three settings for the main-effects are considered
\begin{enumerate}
    \item no main effects: $\beta_1=\beta_2=0$.
    \item two main effects: $\beta_1=\beta_2=0.5$.
    \item opposite main effects: $\beta_1=\beta_2=-0.5$.
\end{enumerate}
The remaining simulation assumptions are as described in Subsections \ref{sub:simprocedure}. For each model, each value of the interaction effect ($\beta_3$) and for each first stage threshold (FST $\in \{1, 0.1, 0.05, 0.01, 0.005\}$) new data are simulated and the two stage testing procedure is performed. The power is estimated as the fraction of times the causal pair of markers is significant in the second stage after Bonferroni correction. 

The results of the simulation studies for the uncorrelated markers in the linear and the Cox model are presented in Figure \ref{fig:power_uncorrelatedmarkers}. The results of the Poisson regression model can be found in Appendix E. If there are no main effects (only interaction effects, first column) or main effects in the same direction as the interaction effect (second column) the two stage approach is clearly more powerful than the approach in which all marker-marker combinations are tested for interaction (i.e., FST=1). In general (except for one setting) the power increases with a decreasing first stage threshold (FST). The lower the threshold in the first stage, the less markers are passed on to the second stage, and the less interaction terms are tested for association. This leads to a less severe multiple testing correction. However, the lower the threshold in the first stage, the lower the probability the causal markers are passed on to the second stage. The power plots for opposite main effects show a drop in power for large values of the interaction effect. This drop is not seen in the classical approach (FST=1) and therefore, for large effects, the strategy of testing all interaction pairs is more powerful than the two-stage testing procedure. It seems that large interaction effects cancel out the marginal association of the variables with the outcome, and prevent the markers to be selected in the first stage.

\begin{figure}[h]
\begin{center}
\includegraphics[height=0.32\textwidth,width=0.32\textwidth]{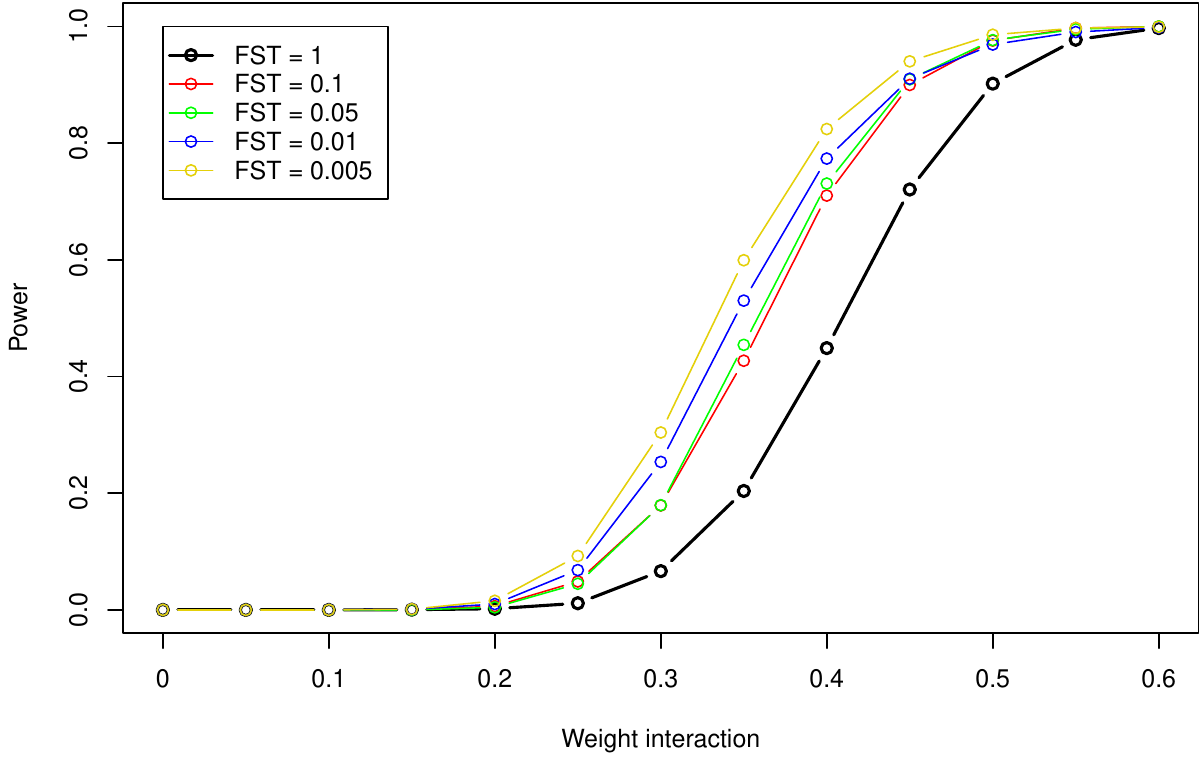} \;
\includegraphics[height=0.32\textwidth,width=0.32\textwidth]{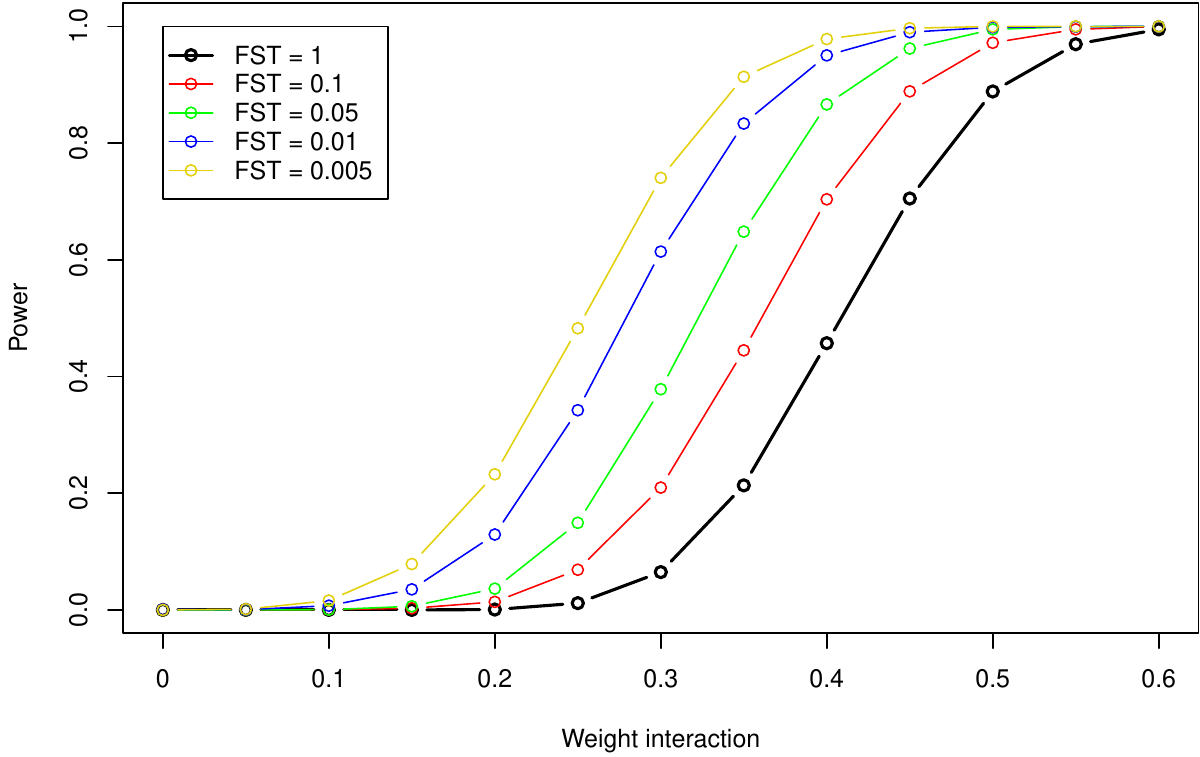} 
\;
\includegraphics[height=0.32\textwidth,width=0.32\textwidth]{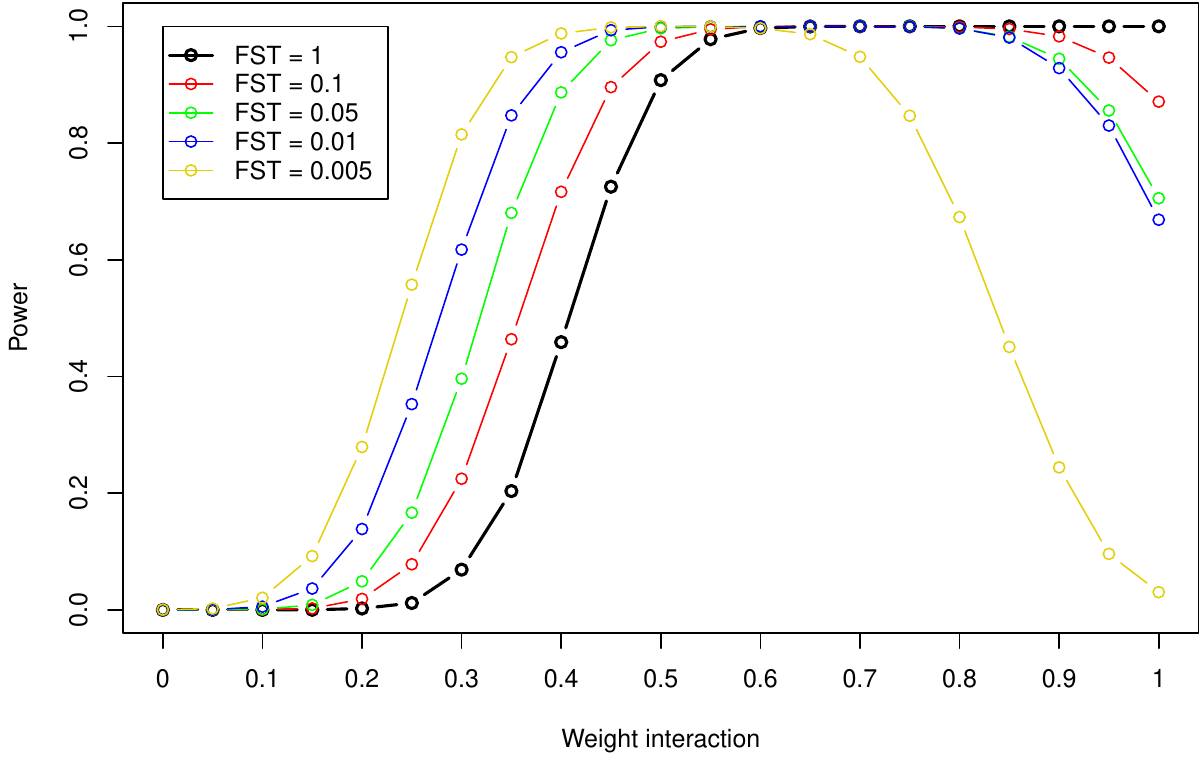} \\
\vspace{5mm}
\includegraphics[height=0.32\textwidth,width=0.32\textwidth]{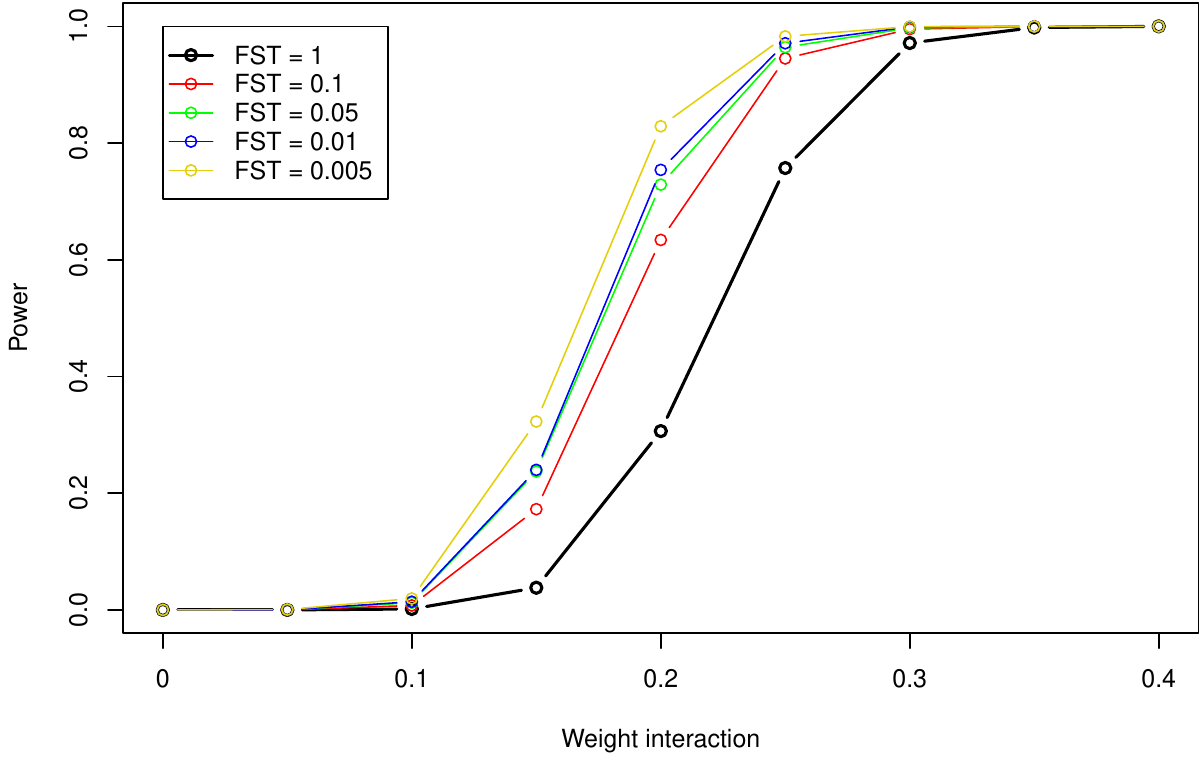}
\;
\includegraphics[height=0.32\textwidth,width=0.32\textwidth]{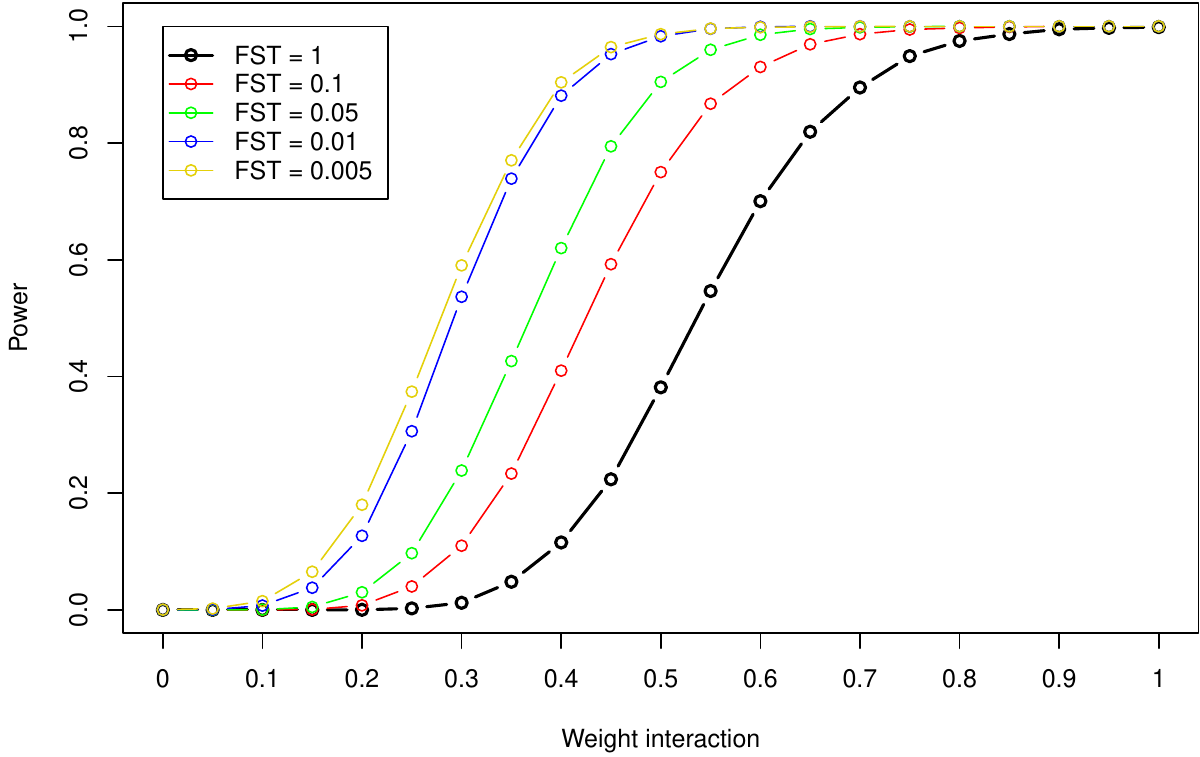}
\;
\includegraphics[height=0.32\textwidth,width=0.32\textwidth]{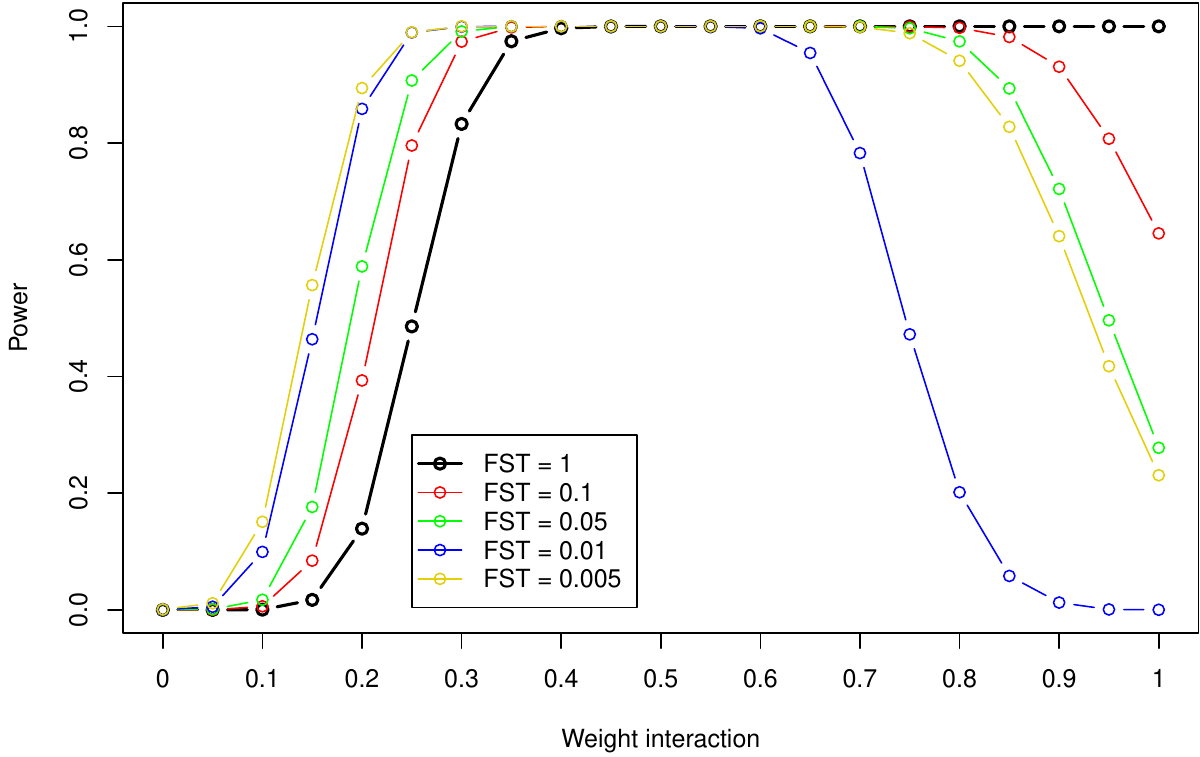}
\caption{Power as a function of the interaction effect for different first stage threshold (FST), with independent markers. First row: linear regression model; second row: Cox PH model. First column: no main effects; Second column: both main effects equal 0.5; Third column: both main effects equal -0.5. (Note the different scales on the x-axes.)   }
\label{fig:power_uncorrelatedmarkers}
\end{center}
\end{figure}

For the correlated marker data, the results for the linear regression model and the Cox model are shown in Figure \ref{fig:power_correlatedmarkers} and for the Poisson regression model they can be found in Appendix E.  The power gain of the two-stage testing procedure is smaller in the correlated setting compared to the setting with uncorrelated markers, but is still clearly visible, except in the setting in which the marginal effects are opposite to the interaction effect. For first stage thresholds smaller than 1, the power curves corresponding to the different first stage threshold are almost equal.   

A quality check has been performed in the second stage. Markers that are strongly mutually correlated or with their interaction term ($r^2 \geq 0.9$) are not tested for an interaction effect and are left out from the analysis. The check was performed to prevent unreliable results due to collinearity. The two causal markers have a low correlation ($r^2=0.3$).     

\begin{figure}[h]
\begin{center}
\includegraphics[height=0.32\textwidth,width=0.32\textwidth]{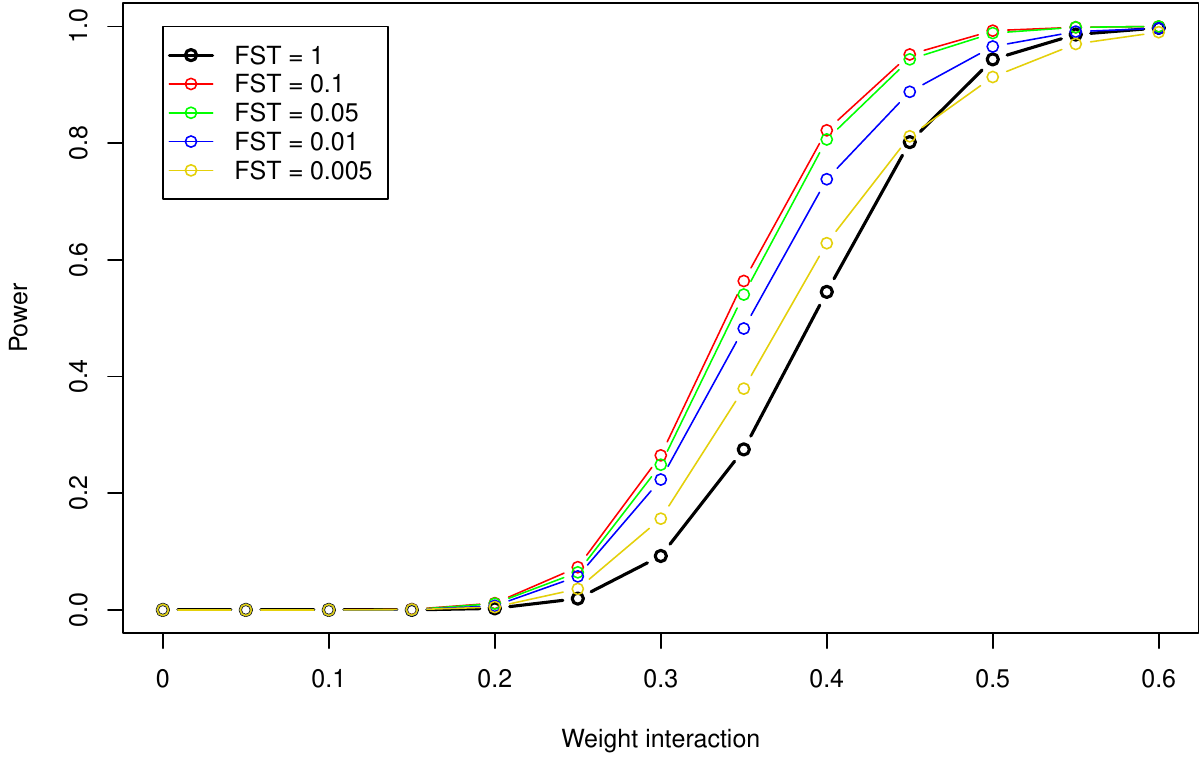} 
\;
\includegraphics[height=0.32\textwidth,width=0.32\textwidth]{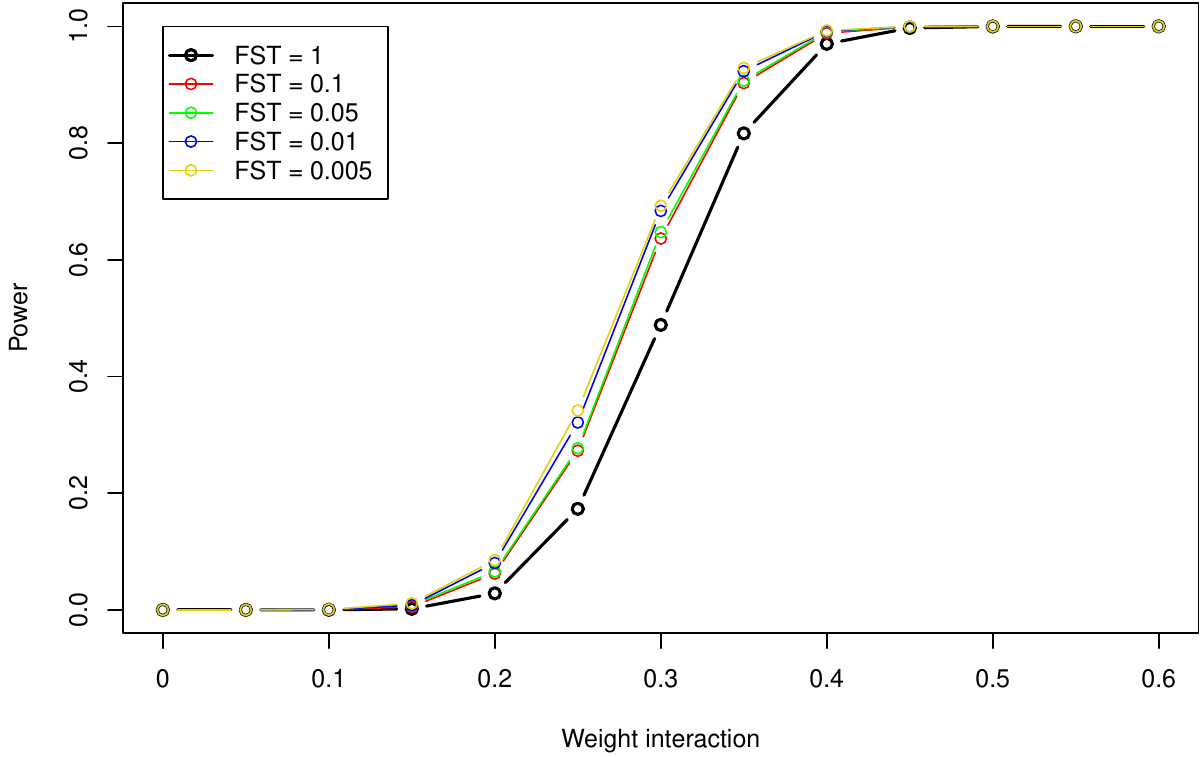} 
\;
\includegraphics[height=0.32\textwidth,width=0.32\textwidth]{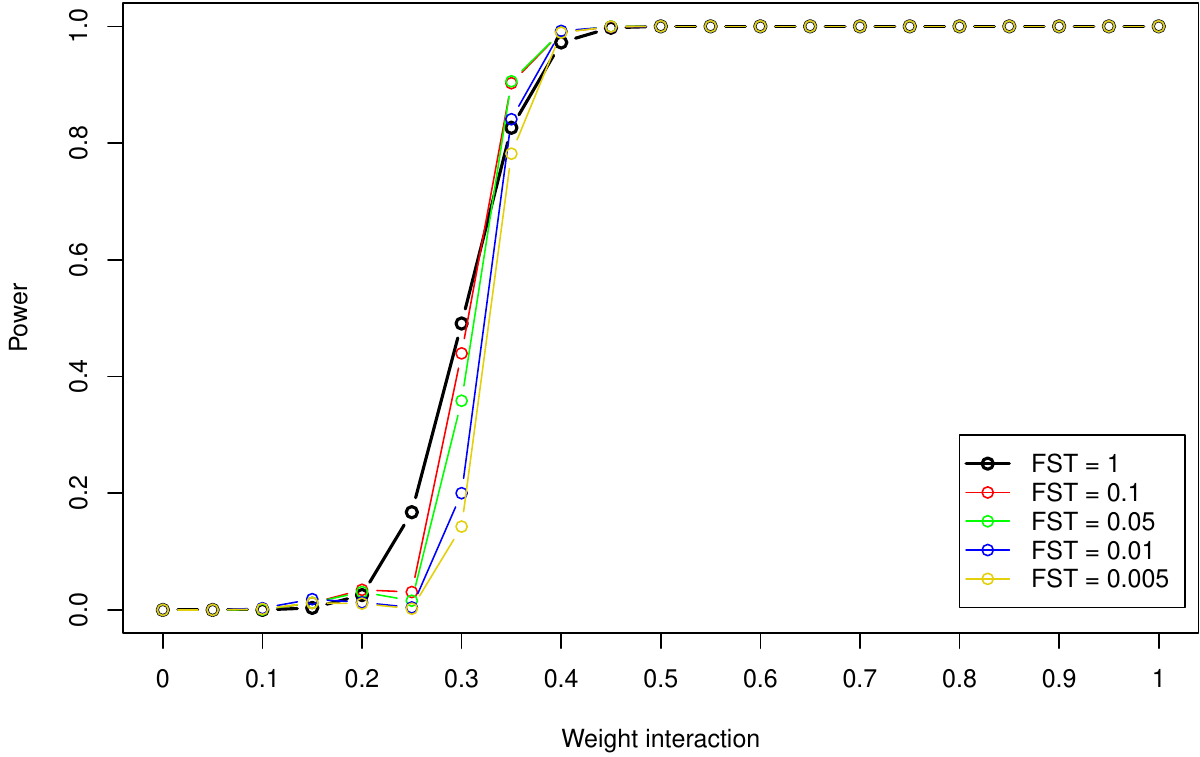} \\
\vspace{5mm}
\includegraphics[height=0.32\textwidth,width=0.32\textwidth]{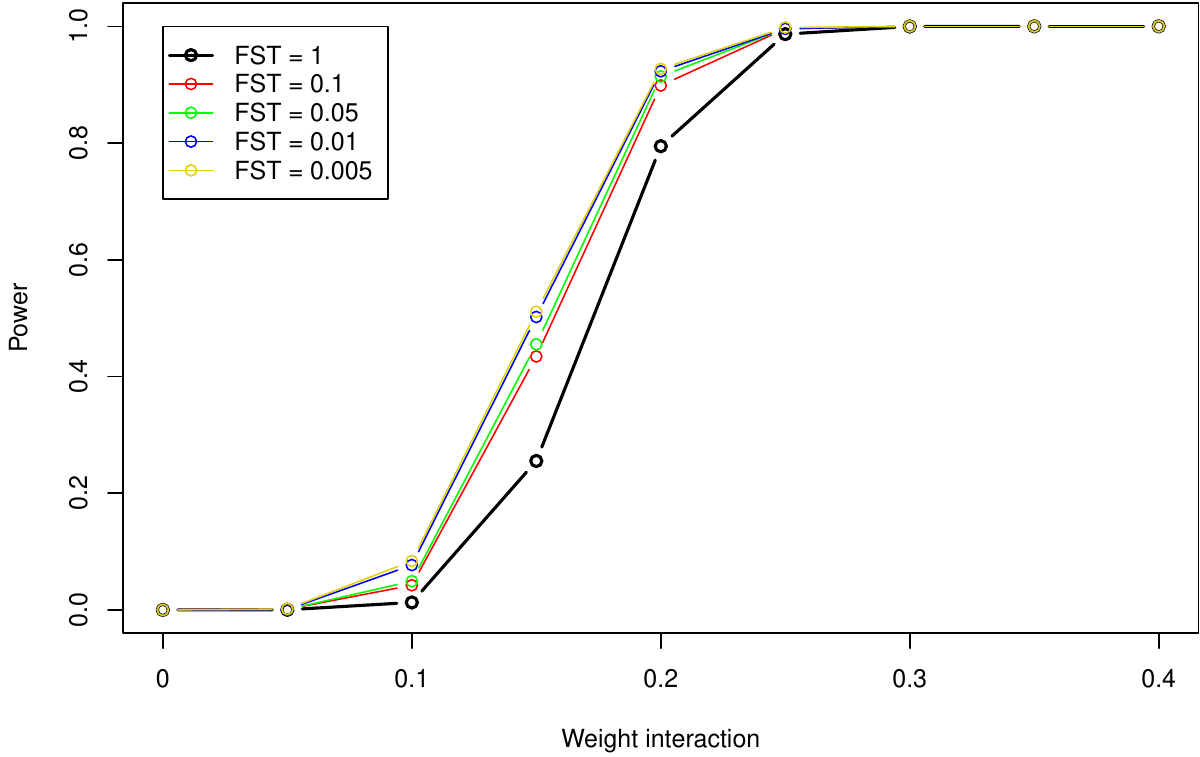}\;
\includegraphics[height=0.32\textwidth,width=0.32\textwidth]{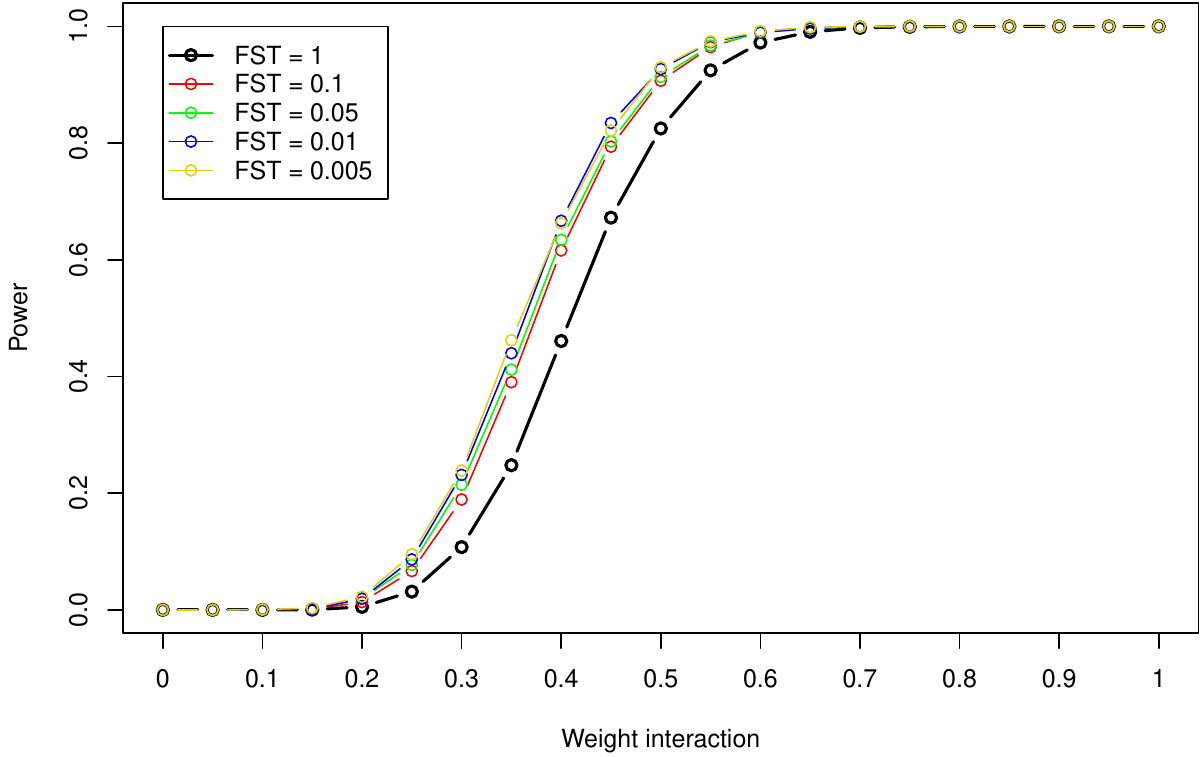}\;
\includegraphics[height=0.32\textwidth,width=0.32\textwidth]{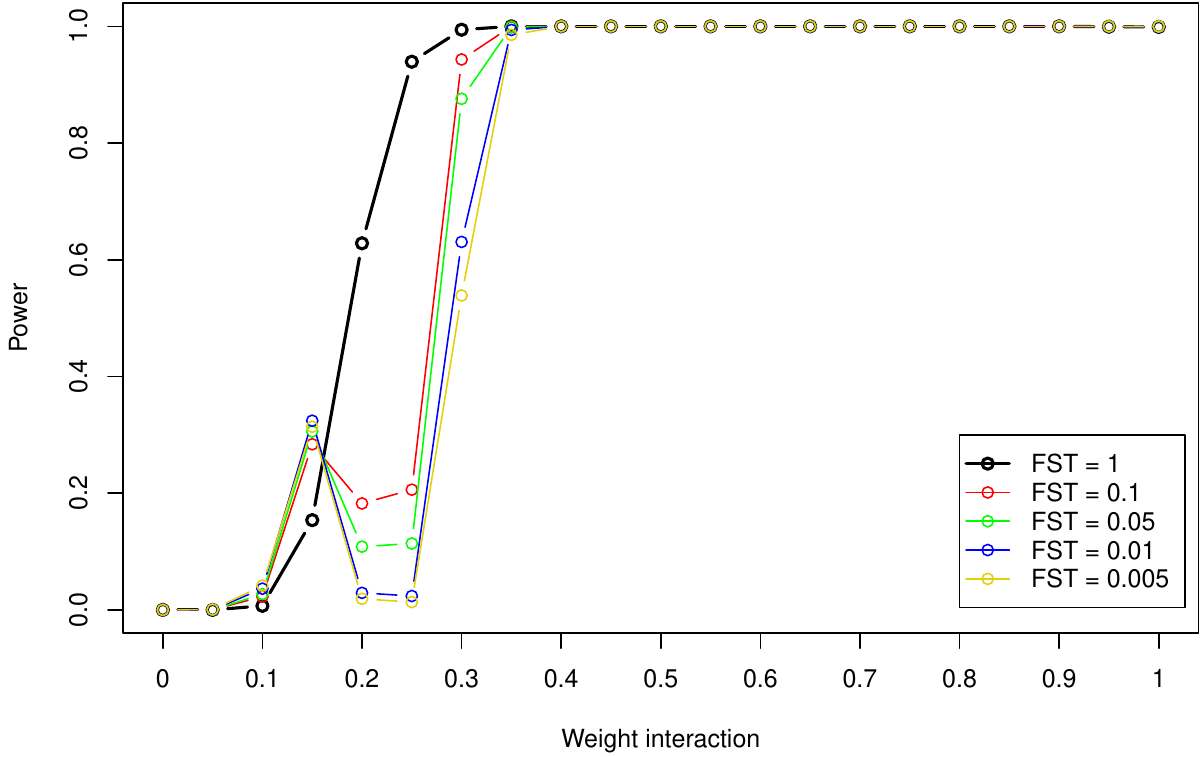}
\caption{Power as a function of the interaction effect for different first stage threshold (FST), with correlated markers. First row: linear regression model; second row: Cox PH model. First column: no main effects; Second column: both main effects equal 0.5; Third colum: both main effects equal -0.5. (Note the different scales on the x-axes.) }
\label{fig:power_correlatedmarkers}
\end{center}
\end{figure}

\subsection{Justification of using a marginal model for screening}
\label{sub:marginal}

In the first stage, markers are screened for interaction one by one by testing for association between the marker and the outcome variable marginally. Even if the marker is associated with the outcome via an interaction, and not marginally, the test in this misspecified marginal model has power to detect markers that are involved in an interaction. Mathematically this can be shown in the linear model in which explicit expressions for the estimators exist, but this is more complex to show in the logistic, Poisson and Cox PH models. As an alternative we performed a simulation study. We simulated marker data (the covariates $x_1, x_2$) by sampling them independently from a binomial distribution with parameters $m=2$ and probability 0.3 (to reflect the marker data). Next we simulated outcome data for the different models that include the interaction $x_1 x_2$ only (no main effects). The simulation procedure is as described before. Simulations are performed for interaction effects $\beta$ running from -1 to 1 in small steps of 0.002. For every value of $\beta$ marker and outcome data of 2000 individuals are sampled. For every data set the Wald test-statistics is computed for testing for association between $x_1$ and the outcome variable in a marginal regression model that includes the variable $x_1$ only. In Figure \ref{fig:marginal} the value of the test-statistic is plotted against the value of $\beta$ (the interaction-effect). In all three plots it can be seen that the test-statistic is close to zero when $\beta$ is close to zero and the value of the test-statistic deviates more from zero (and the smaller the accompanying p-values) the more $\beta$ differs from zero. This illustrates that the test in the first stage has power to detect an interaction effect and justifies the choice of the test in the first stage.

\begin{figure}[h]
\begin{center}
\includegraphics[height=0.32\textwidth,width=0.32\textwidth]{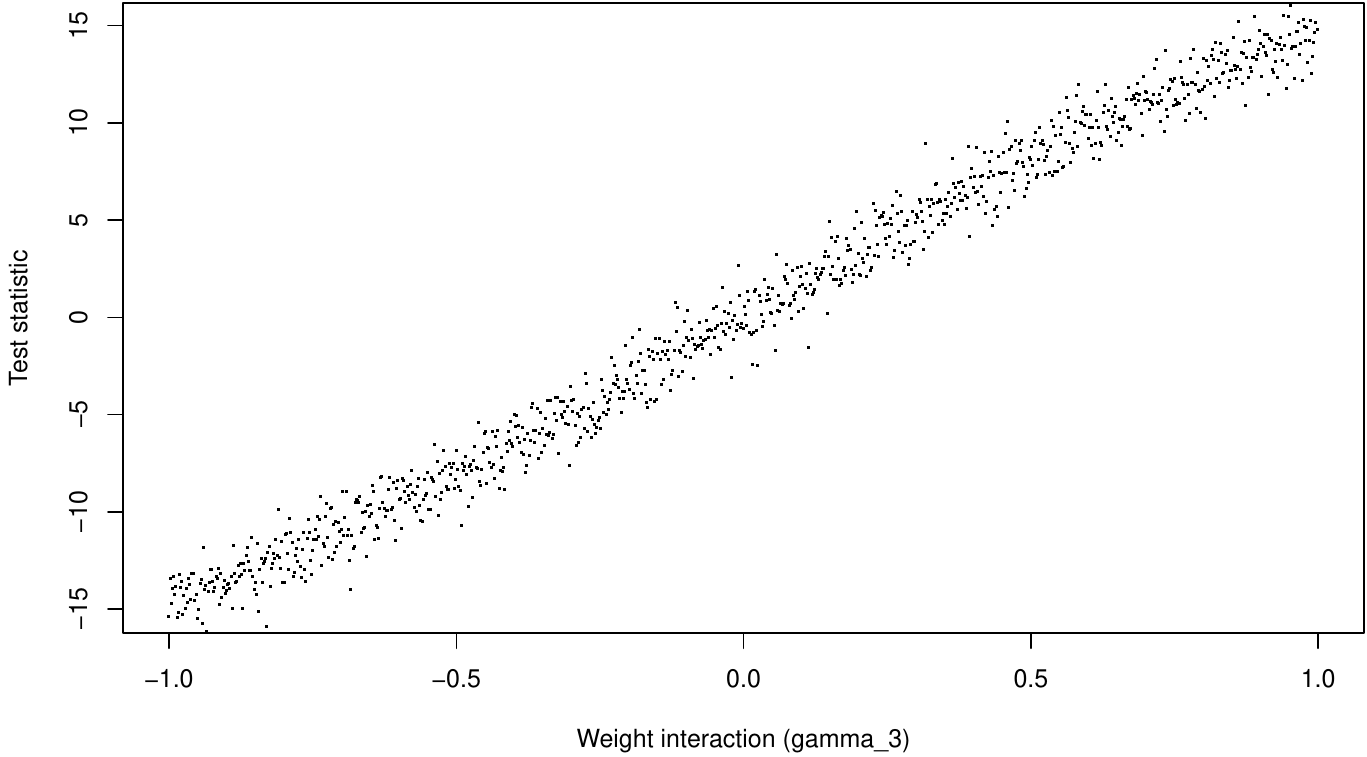} \;\;\;
\includegraphics[height=0.32\textwidth,width=0.32\textwidth]{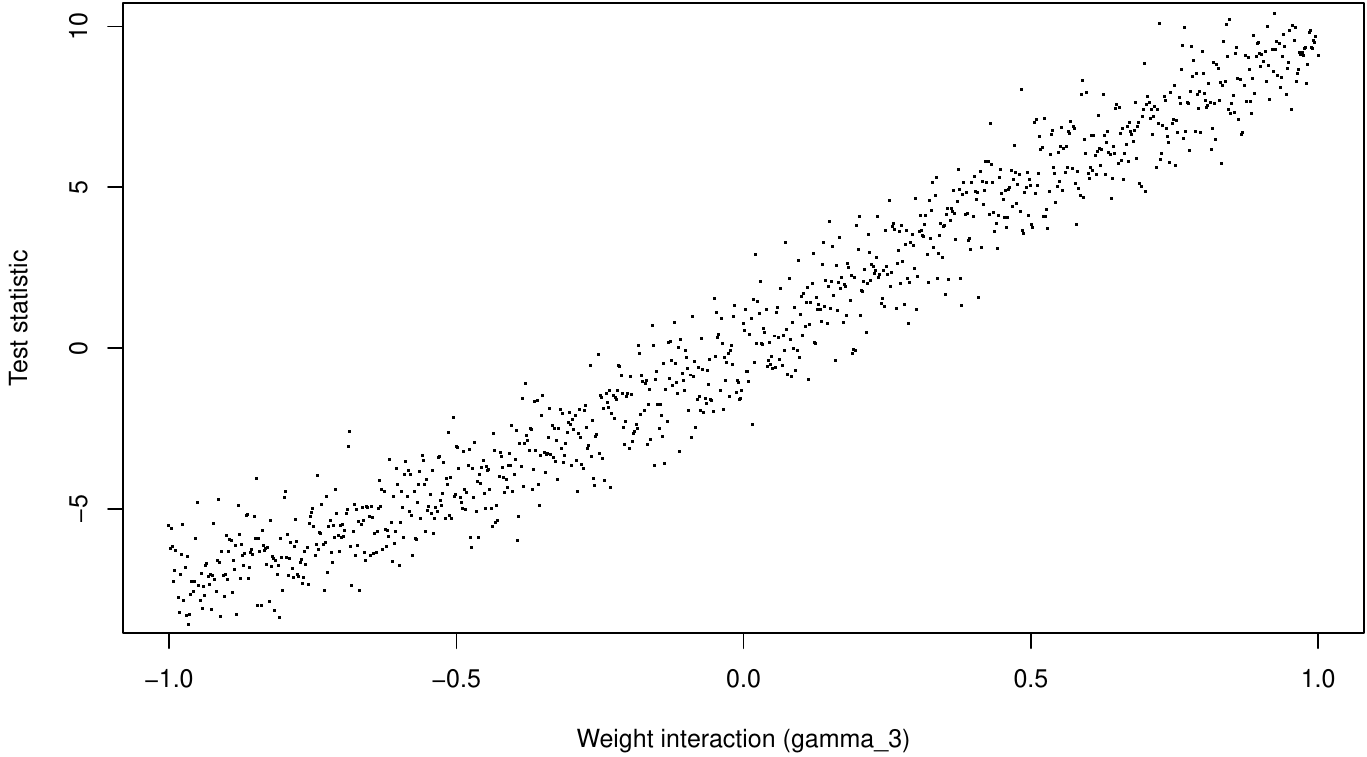}
\includegraphics[height=0.32\textwidth,width=0.32\textwidth]{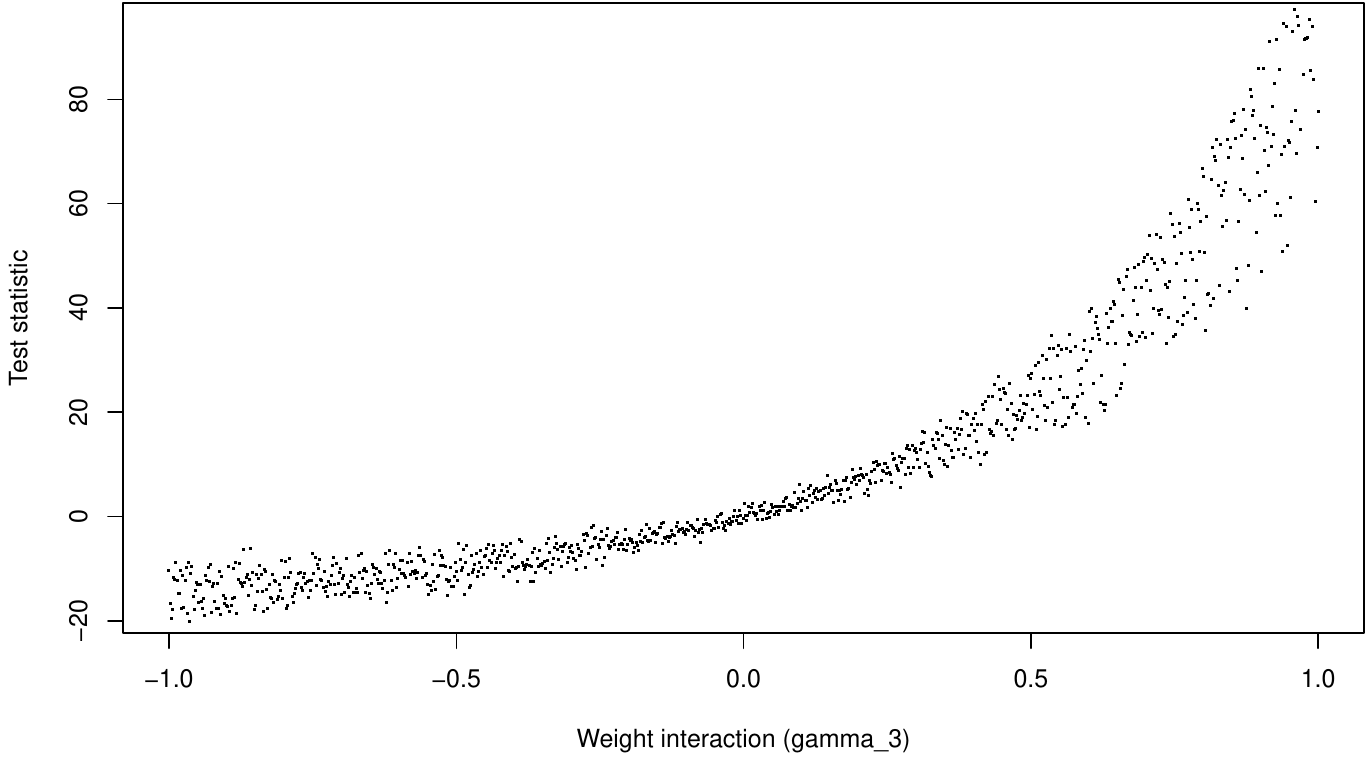} \;\;\;
\caption{Scatter plot for the value of the Wald test-statistics of the main effect in the marginal model against the value of the interaction effect in the full model. Left: linear regression model, middle: Cox PH model, right: Poisson regression model.}
\label{fig:marginal}
\end{center}
\end{figure}

\subsection{Data analysis}
Non-muscle invasive bladder cancer (NMIBC) often recurs after initial diagnosis. To investigate which markers are associated with the time from initial diagnosis to recurrence, data from 1451 patients with NMIBC were analyzed (Hof et al (2023)\nocite{Hof}). For all patients, genotypes of 670.542 markers are available for analysis. Of the 1451 patients, 591 had a recurrence during their follow-up time. In this paper the aim is to identify marker-marker interactions that are associated with the time to recurrence after initial diagnosis with the proposed two-stage method. The significance threshold in the first stage was chosen to be equal to 0.0001, later also 0.001 was considered. When using 0.0001 none of the marker pairs was significant in the second stage. For a first stage significance threshold of 0.001, the pair rs4415228 with rs2567720 was the only combination that was significant after Bonferroni correction, the corrected p-value was equal to 0.026. We have not found any relevant biological meaning of these SNPs. As far as we know, these SNPs have not been found in other bladder cancer studies.

\section{Discussion}
\label{sec: discussion}
In the paper we described the methodology for a two stage testing procedure with independent stages for detecting interactions in GLMs and the Cox regression model. In the first stage all covariates are tested for association with the outcome of interest in a model that includes the main effect of a single covariate only. If the corresponding p-value is below a pre-specified threshold, the covariate is passed on to the second stage. In the second stage all pairs of covariates that survived the first stage are tested for interaction.  

The significance threshold in the first stage of the two-stage testing procedure can be chosen freely. In the most extreme case the threshold is taken equal to 1 for some covariates which implies that these covariates are passed on to the second stage without being tested. In, for instance, in Dai et al (2012) the researchers are interested in testing interaction between genetic markers and  environmental factors.\nocite{Dai} They chose to screen the genetic markers only in the first stage and to pass on the environmental factors to the second stage directly; their threshold was taken equal to 1. For this specific setting some of the proofs can be simplified. 

In the simulation study we considered the setting with independent and with correlated marker genotypes. For independent markers the type I errors were close to the significance level of the test. In the correlated setting the multiple testing correction was too conservative; the type I error was clearly below the threshold. A less conservative multiple testing correction may lead to more power. We leave this for a next project. 

Although the two-stage testing strategy was proposed as a testing procedure for detecting interactions, it may also detect main-effects via the interaction term. By the choice of the test-statistics in the verification stage, the testing procedure may not be very powerful for detecting main effects. This is not seen as a weakness of the procedure as the focus is on detecting interaction effects and main effects are by-catch. 

In the simulation studies and the data analysis the calculations of the p-values are based on the asymptotic normality of the test-statistic under the null hypothesis. The normality approximations may not be very accurate for extreme large and small (negative) values of the test-statistics. Since the significance thresholds are usually very low, the low p-values in the second stage can be recalculated with the Firth approximation to increase the accuracy (see Firth (1993)).\nocite{Firth} This will increases the calculation time, but this does not need to be a computational burden as only a fraction of all possible interactions are tested in the second stage. In the analysis of the bladder cancer data, recalculation did not change the results. 

As an application of the methodology, detection of interactions between genetic markers or between genetic markers and environmental factors is given. However, its applicability is much wider, it can be applied in any high dimensional setting in which one is interested in detecting interaction effects.

\subsubsection*{Acknowledgement}
We would like to thank the Nijmegen Bladder Cancer Study project	team of the	Department for IQ Health of the Radboudumc in Nijmegen for giving us access to their database. This work made use of the Dutch national e-infrastructure with the support of the SURF Cooperative using grant no. EINF-1968.

\subsubsection*{Software}
An R package ``twostageGWASsurvival'' for doing the analysis have been developed and is available at github: {\tt 
https://github.com/luc-van-schijndel/twostageGWASsurvival}. The package implements a two stage method for the Cox model for searching for interaction effects within a high-dimensional setting where the covariates outnumber the subjects.

\appendix
Outline of the appendix:\\[-18pt] 
\begin{itemize}
\item[-] Appendix A: Pairwise independence in GLMs
\item[-] Appendix B: Complete independence in GLMs
\item[-] Appendix C: FWER control
\item[-] Appendix D: Pairwise independence in the Cox PH regression model
\item[-] Appendix E: Power plots for the Poisson regression model
\end{itemize}

\section*{Appendix A: Pairwise independence in GLMs}

\noindent
{\bf Theorem \ref{theorem:pairwise}.}
Suppose that $Y_1|{\bX}_1, \ldots, Y_n|{\bX}_n$ is a sample from the generalized linear model with a canonical link function, $M_{true}$.
Let $\hat{\btheta}_1$ and $\hat{\btheta}_2$ denote the maximum likelihood estimators of ${\theta}_1^0$ and ${\btheta}_2^0$ in the nested models $M_1$ and $M_2$, respectively. If model $M_2$ coincides with the true model $M_{true}$ (i.e., $p_2 = p_0$), then $\sqrt{n}(\hat\btheta_1-\btheta_1^0)$ and  $\sqrt{n}(\hat\btheta_2^\perp-\btheta_2^{0,\perp})$ are asymptotically independent.

\begin{proof}
The conditional density $f_{\btheta_0}$ of $Y|\bX$ from the true model $M_{true}$ is assumed to belong to a GLM with a canonical link function and, therefore, can be written in the form:
\begin{align*}
f_{\btheta_0}(Y|\bX)= g(Y)\exp\big(V(Y)^t \bZ \btheta_0 - A(\bZ \btheta_0)\big), 
\end{align*}
where $\bZ$ is a function of the covariates $\bX$.
The log-likelihood function (divided by $n$) for the true model $M_{true}$ equals:
\begin{align*}
    \frac{1}{n}\ell(\btheta_0) \;\propto\; \frac{1}{n} \sum_{i=1}^n \log \big\{f_{\btheta_0}(Y_i|{\bX}_i)\big\}  
    \;=\;  \frac{1}{n} \sum_{i=1}^n \big\{V(Y_i)^t \bZ_i\btheta_0 - A(\bZ_i\btheta_0)\big\} + \frac{1}{n} \sum_{i=1}^n \log \big\{g(Y_i)\big\}.  
\end{align*}
On the far right side, only the first term depends on the parameter of interest $\btheta_0$. Therefore, we can leave out the second term from the calculations. Next, we consider the average (conditional) log-likelihood for the submodel $M_1$:
\begin{align}
    \frac{1}{n}\ell_1(\btheta_1)  \; :=\; \frac{1}{n}\ell(E_1 \btheta_1)
     \;\propto\; \frac{1}{n} \sum_{i=1}^n\big(V(Y_i)^t\bZ_iE_1 \btheta_1 - A(\bZ_i E_1 \btheta_1)\big).
    \label{emp llh}
\end{align}
By the strong law of large numbers $\ell_1(\btheta_1)/n$ converges pointwise a.s.\ to its expectation: $\EE_{Y,\bX}\{V(Y)^t\bZ E_1 \btheta_1 - A(\bZ E_1\btheta_1)\}$. Let $\btheta_1^0 \in \mathbb{R}^{q_1}$ be the maximizer of this limit. It is known that the log-likelihood of an exponential family is concave, therefore this maximizer is unique  and a zero of its derivative with respect to $\btheta_1$. For $DA(.)$ defined as the gradient of $A(.)$ it holds that:
\begin{align}
    0  &=\; \EE_{Y,\bX}\Big\{V(Y)^t\bZ E_1 - DA(\bZ E_1\btheta_1^0)\bZ E_1\Big\} \nonumber \\[6pt]
    &= \; \EE_{\bX}\EE_{Y|\bX}\Big\{V(Y)^t\bZ E_1 - DA(\bZ E_1\btheta_1^0)\bZ E_1\Big\}
     \;=\; \EE_{\bX}\Big\{DA(\bZ\btheta^0_0)\bZ E_1 - DA(\bZ E_1\btheta_1^0)\bZ E_1\Big\}, 
     \label{llh id}
\end{align}
with $\btheta^0_0$ the true parameter value in model $M_{true}$. The last equality follows from the exponential family identity: $\EE_{Y|\bX}(V(Y)) = DA(\bZ\btheta^0_0)^t$.
Note that if $\btheta^0_0 = E_1\btheta_1^0$, the right hand side is trivially equal to zero.

By the argmax continuous mapping theorem and implications (see e.g., Van der Vaart and Wellner (1996): Subsection 3.2\nocite{vdV1996}) the maximum likelihood estimator for $\theta_1$ in model $M_1$, the maximizer of the log-likelihood (\ref{emp llh}), denoted as $\hat\btheta_1$, converges in probability to $\btheta_1^0$. For any finite sample define $\hat h_1:=\hat\btheta_1-\btheta^0_1\in \mathbb{R}^{q_1}$ and $\tilde{\ell}_1(\hat h_1):=\ell_1(\btheta_1^0+\hat h_1)$. Then, by a Taylor expansion of $A(\bZ_i E_1(\btheta_1^0 + \hat h_1))$ with respect to $\hat h_1$ near zero:
\begin{align}
    \frac{1}{n}\tilde{\ell}_1(\hat h_1)  &= \frac{1}{n}\sum_{i=1}^n \Big(V(Y_i)^t\bZ_i E_1(\btheta_1^0 + \hat h_1) - A(\bZ_i E_1(\btheta_1^0 + \hat h_1))\Big)\nonumber\\[4pt]
    & =\frac{1}{n}\sum_{i=1}^n \Big(V(Y_i)^t \bZ_i E_1\theta_1^0 + V(Y_i)^t \bZ_i E_1 \hat h_1 - A(\bZ_i E_1\theta_1^0)- DA(\bZ_i E_1\theta_1^0)  \bZ_i E_1 \hat h_1 \nonumber\\[4pt]
    & ~~~~\qquad - \frac{1}{2} \hat h_1^t E_1^t \bZ_i^t D^2A(\bZ_i E_1\theta_1^0) \bZ_i E_1 \hat h_1\Big)+O_p\big(\|\hat h_1\|^3\big), 
\label{TaylorExp}
\end{align}
where $DA$ and $D^2A$ are the first and second derivative of $A$ or the gradient and the Hessian matrix.

The term $O_p(\|\hat h_1\|^3)$ will converge to zero in probability. The first and the third term within the summation in (\ref{TaylorExp}) do not depend on $\hat h_1$ and their sum is written as $K_n$. By the law of large numbers, the fifth term equals
\begin{align*}
\frac{1}{2n}\sum_{i=1}^n \hat{h}_1^t E_1^t \bZ_i^t D^2A(\bZ_i E_1\theta_1^0) \bZ_i E_1 \hat{h}_1 = \tfrac{1}{2}\hat{h}_1^t E_1^t I_1 E_1 \hat{h}_1 + o_p\big(\|\hat{h}_1\|^2\big),
\end{align*}
with $I_1 := \EE_\bX{\bZ^t D^2A(\bZ E_1\btheta_1^0)\bZ} \in \mathbb{R}^{q_0\times q_0}$. The sum of the second and fourth term in (\ref{TaylorExp}) equals
\begin{equation*}
    \frac{1}{n}\sum_{i=1}^n \Big(V(Y_i)^t\bZ_i - DA(\bZ_i E_1\theta_1^0)\bZ_i\Big)E_1\hat{h}_1 = \frac{1}{\sqrt{n}} \bW_1^t E_1 \hat{h}_1 + o_p\big(\|n^{-1/2}\hat h_1\|\big),
\end{equation*}
by the central limit theorem, for $\bW_1 \in \mathbb{R}^{q_0}$ normally distributed. 

The same derivation holds for submodel $M_2$, yielding the normally distributed variable $\bW_2 \in \mathbb{R}^{q_0}$. Jointly, 
\begin{equation*}
    \mathbb{R}^{1\times 2q_0} \ni {\bW_1\choose \bW_2} = \lim_{n\to\infty} \frac{1}{\sqrt{n}}\sum_{i=1}^n {\bZ_i^t V(Y_i) - \bZ_i^t DA(\bZ_i E_1\theta_1^0)^t \choose \bZ_i^t V(Y_i) - \bZ_i^t DA(Z_iE_2\theta_2^0)^t}
\end{equation*}
has a normal distribution with mean zero by the equality in (\ref{llh id}). Next we compute its covariance-matrix. We start with the variance of $W_1$:
\begin{align*}
    \EE_{Y,\bX}{\bW_1\bW_1^t} & = \EE_{Y,\bX}\Big\{\Big(\bZ^tV(Y) - \bZ^tDA(ZE_1\theta_1^0)^t\Big)\Big(V(Y)^t\bZ - DA(\bZ E_1\theta_1^0)\Big)\Big\}\\[6pt]
    & = \EE_{Y,\bX}\Big\{\Big(\br{\bZ^tV(Y) - \bZ^tDA(\bZ\theta_0^0)^t} - \br{\bZ^tDA(\bZ E_1\theta_1^0)^t - \bZ^tDA(Z\theta_0^0)^t}\Big)\\[6pt]
    & \qquad  \qquad \times \Big({\br{V(Y)^t\bZ - DA(\bZ\theta_0^0)\bZ} - \br{DA(\bZ E_1\theta_1^0)Z - DA(\bZ\theta_0^0)Z}}\Big)\Big\}\\[6pt]
    & = \EE_{Y,\bX}\Big\{\Big(\bZ^tV(Y) - \bZ^tDA(\bZ\theta_0^0)^t\Big)\Big(V(Y)^t\bZ - DA(\bZ\theta_0^0)Z\Big)\Big\}\\[6pt]
    & \qquad + \EE_{Y.\bX}\Big\{\Big(\bZ^tDA(\bZ E_1\theta_1^0)^t - Z^tDA(\bZ\theta_0^0)^t\Big)\Big(DA(ZE_1\theta_1^0)\bZ - DA(\bZ\theta_0^0)Z\Big)\Big\}\\[6pt]
    & \qquad  - \EE_{Y,\bX}\Big\{\Big(\bZ^tV(Y) - \bZ^tDA(\bZ\theta_0^0)^t\Big)\Big(DA(\bZ  E_1\theta_1^0)\bZ - DA(\bZ\theta_0^0)\Big)\Big\}\\[6pt]
    & \qquad - \EE_{Y,\bX}\Big\{\Big(\bZ^tDA(\bZ E_1\theta_1^0)^t - \bZ^tDA(\bZ\theta_0^0)^t\Big)\Big(V(Y)^t\bZ - DA(\bZ\theta_0^0)\Big)\Big\}.
\end{align*}
The first of the four terms can be further simplified, by applying the exponential family-identity, $\Cov_{Y|\bX}(V(Y)) = D^2A(Z\theta_0^0)$:
\begin{align}
    &\ \EE_{Y,\bX}\Big\{\Big(Z^tV(Y) - \bZ^tDA(Z\theta_0^0)^t\Big)\Big(V(Y)^tZ - DA(Z\theta_0^0)\bZ \Big)\Big\} \nonumber \\[6pt]
    &\qquad=\ \EE_X{Z^t\EE_{Y|\bX}\Big\{\Big(V(Y) - DA(\bZ\theta_0^0)^t\Big)\Big(V(Y) - DA(\bZ\theta_0^0)^t}\Big)^t\Big\} \nonumber \\[6pt]
    &\qquad=\ \EE_\bX\Big\{{Z^t \Cov_{Y|X}\big(V(Y)\big)}\Big\} \nonumber \\[6pt]
    &\qquad=\ \EE_\bX\Big\{Z^t D^2A(Z\theta_0^0) \bZ\Big\}\;=:\ I \in \mathbb{R}^{q_0\times q_0}.
 \label{I}   
\end{align}
The second term can not be simplified, we define it as:
\begin{align*}
    C_{11} &:= \EE_{X}\Big\{\Big(Z^tDA(ZE_1\theta_1^0)^t - Z^tDA(Z\theta_0^0)^t\Big)\Big(DA(ZE_1\theta_1^0)Z - DA(Z\theta_0^0)Z\Big)\Big\} \in \mathbb{R}^{q_0\times q_0}.
\end{align*}
The last two terms equal zero. This can be seen as follows: 
\begin{align*}
    &\  \EE_{Y,X}\Big\{\Big(Z^tV(Y) - Z^tDA(Z\theta_0^0)^t\Big)\Big(DA(ZE_1\theta_1^0)Z - DA(Z\theta_0^0)Z\Big)\Big\}\\[6pt]
    &\qquad=\ \EE_X \mathbb{E}_{Y|X}\Big\{\Big(Z^tV(Y) - Z^tDA(Z\theta_0^0)^t\Big)\Big(DA(ZE_1\theta_1^0)Z - DA(Z\theta_0^0)Z\Big)\Big\}\\[6pt]
    &\qquad= \ \EE_X\Big\{\Big(Z^t\EE_{Y|X}\big(V(Y)\big) - Z^tDA(Z\theta_0^0)^t\Big)\Big(DA(ZE_1\theta_1^0)Z - DA(Z\theta_0^0)Z\Big)\Big\} \;=\; 0,
\end{align*}
since $\EE_{Y|X}\{V(Y)\} = DA(Z\theta_0^0)^t$. For the fourth term similar reasoning holds.
Summarized
\begin{align*}
\EE_{Y,X}\big\{W_1W_1^t\big\} = I + C_{11}.
\end{align*} 
Similarly, we find $\EE_{X,Y}\{W_2W_2^t\} = I + C_{22}$, with $C_{22}$ defined as
\begin{align*}
C_{22} &:= \EE_{X}\Big\{\Big(Z^tDA(ZE_2\theta_2^0)^t - Z^t DA(Z\theta_0^0)^t\Big)\Big(DA(ZE_2\theta_2^0)Z - DA(Z\theta_0^0)Z\Big)\Big\} \in \mathbb{R}^{q_0\times q_0}.
\end{align*}
The covariance of $W_1$ and $W_2$:
\begin{align*}
    \EE_{Y,X}{W_1W_2^t} & = \EE_{Y,X}\Big\{\Big(Z^tV(Y) - Z^tDA(ZE_1\theta_1^0)^t\Big)\Big(V(Y)^tZ-DA(ZE_2\theta_2^0)Z\Big)\Big\}\\[6pt]
    & = \EE_{Y,X}\Big\{\Big(\big(Z^tV(Y) - Z^tDA(Z\theta_0^0)^t\big) - \big(Z^tDA(ZE_1\theta_1^0)^t - Z^tDA(Z\theta_0^0)^t\big)\Big)\\[6pt]
    & \qquad \times\Big(\big(V(Y)^tZ - DA(Z\theta_0^0)Z\big)-\big(DA(ZE_2\theta_2^0)Z - DA(Z\theta_0^0)Z\big)\Big)\Big\}\\[6pt]
    & = \EE_{X}\Big\{Z^t D^2A(Z\theta_0^0) Z\Big\} 
     \;+\; \EE_{X}\Big\{\Big(Z^tDA(ZE_1\theta_1^0)^t - Z^tDA(Z\theta_0^0)^t\Big)\Big(DA(ZE_2\theta_2^0)Z-DA(Z\theta_0^0)Z\Big)\Big\}\\[6pt]
     &= I + C_{12}
\end{align*}
with (by similar reasoning)
\begin{align*}
    C_{12} := \EE_{X}\Big\{\Big(Z^tDA(ZE_1\theta_1^0)^t - Z^tDA(Z\theta_0^0)^t\Big)\Big(DA(ZE_2\theta_2^0)Z-DA(Z\theta_0^0)Z\Big)\Big\} \in \mathbb{R}^{q_0\times q_0},
\end{align*}
and $C_{12} = {C_{12}}^t$. 
Now, due to the combined vector $(W_1^t, W_2^t)^t$ having expectation zero, we find its covariance matrix:
\begin{equation*}
    \Cov_{Y,X}{W_1\choose W_2} = \begin{pmatrix}  I + C_{11} &~~~~ I + C_{21}\\
    I + C_{12} &~~~~ I + C_{22} \end{pmatrix} \in \mathbb{R}^{2q_0\times 2q_0}.
\end{equation*}
Note that if the submodel $M_2$ coincides with model $M_{true}$, then $E_2\theta_2^0 = \theta_0^0$ and $C_{21} = C_{12} = C_{22} = \mathbf{0}\in \mathbb{R}^{q_0\times q_0}$. 

Summarized
\begin{align}
\label{limit in d average loglikelihood proof}
    \frac{1}{n}\tilde{\ell}(\hat{h}_1) = K_n + \frac{1}{\sqrt{n}}W_1^tE_1\hat{h}_1  - \frac{1}{2}\hat{h}_1^t E_1^t I_1 E_1 \hat{h}_1 + o_p\big(\|n^{-1/2}\hat h_1\|\big) + o_p\big(\|\hat h_1\|^2\big),
\end{align}
In a similar way $n^{-1}\tilde\ell(\hat h_2)$ can be written as a sum of asymptotic terms, with $I_2: = \EE_X{Z^t D^2A(ZE_2\theta_2^0)Z} \in \mathbb{R}^{q_0 \times q_0}$. Maximizing $n^{-1}\tilde{\ell}(h_1)$ with respect to $h_1$ to obtain an expression for $\hat h_1$ 
gives for $\hat h_1$, and in a similar way for $\hat h_2$:
\begin{align*}
\hat h_1 & = \frac{1}{\sqrt{n}}\br{E_1^tI_1E_1}^{-1} E_1^t W_1 \in \mathbb{R}^{q_1}\\[4pt]
\hat h_2 & = \frac{1}{\sqrt{n}}\br{E_2^tI_2E_2}^{-1} E_2^t W_2 \in \mathbb{R}^{q_2}.
\end{align*}
In order to analyze the asymptotic covariance of $\hat h_1$ and $\hat h_2$ orthogonal to the submodel $M_1$, this is found by $(Id_{q_2} - E_2^tE_1E_1^tE_2)\hat h_2 \in \mathbb{R}^{q_2}$, where $Id_{q_2}$ denotes the $q_2$-dimensional identity matrix. Ultimately, the $q_1\times q_2$-covariance becomes:
\begin{align}
    \Cov_{Y,X}\Big(&\hat h_1, \br{Id_{q_2} - E_2^tE_1E_1^tE_2}\hat h_2\Big)\nonumber\\[4pt]
    & = \frac{1}{n}\br{E_1^tI_1E_1}^{-1}E_1^t \;\Cov_{Y,X}(W_1, W_2) E_2\br{E_2^tI_2E_2}^{-1} \br{Id_{q_2} - E_2^t E_1 E_1^t E_2}\nonumber\\[4pt]
    & = \frac{1}{n}\br{E_1^tI_1E_1}^{-1}E_1^t \br{I+C_{12}} E_2\br{E_2^tI_2E_2}^{-1} \br{Id_{q_2} - E_2^t E_1 E_1^t E_2}.
    \label{covariance theorem}
\end{align}
To complete the proof, we set $M_2 = M_{true}$, so $\theta_0^0 = E_2\theta_2^0$ which in turn implies $I_2 = I$ and $C_{21} = \mathbf{0}$. By the definitions of $E_1$ and $E_2$ as projection matrices:
Furthermore:
\begin{align}
    E_1^t & = E_1^t E_2 E_2^t \label{id 1}\\
    E_1^t & = E_1^t E_1 E_1^t \label{id 2},
\end{align}
where the equality in (\ref{id 1}) follows from the fact that $M_1$ is nested within $M_2$. Apply these to find:
\begin{align}
\label{covariance}
\Cov_{Y,X}\big(\sqrt{n}(\hat\btheta_1-\btheta_1^0), \sqrt{n}(\hat\btheta_2-\btheta_2^0)^\perp\big) &=
    n \; \Cov_{Y,X}\big(\hat h_1, \br{Id_{q_2} - E_2^tE_1E_1^tE_2}\hat h_2\big) \nonumber \\[4pt]
    & = \br{E_1^tI_1E_1}^{-1}E_1^tE_2\br{E_2^t \br{I_2+\mathbf{0}} E_2}\br{E_2^tI_2E_2}^{-1} \br{Id_{q_2} - E_2^t E_1 E_1^t E_2} \nonumber  \\[4pt]
    & = \br{E_1^tI_1E_1}^{-1}E_1^tE_2 \br{Id_{q_2} - E_2^t E_1 E_1^t E_2} \nonumber \\[4pt]
    & = \br{E_1^tI_1E_1}^{-1}\br{E_1^tE_2 - E_1^tE_2E_2^t E_1 E_1^t E_2} \nonumber \\[4pt]
    & = \br{E_1^tI_1E_1}^{-1}\br{E_1^tE_2 - E_1^tE_1E_1^t E_2} \nonumber  \\[4pt]
    & = \br{E_1^tI_1E_1}^{-1}\br{E_1^tE_2 - E_1^t E_2} \nonumber  \\[4pt]
    & = \mathbf{0},
\end{align}
where the second and fifth equalities follow from (\ref{id 1}), and the sixth equality follows from (\ref{id 2}). By the joint normality of $\sqrt{n}\; \hat h_1$ and $\sqrt{n}\;\hat h_2$, a zero correlation implies the required independence as stated in the theorem.  
\end{proof}

\bigskip

\noindent
{\bf Theorem \ref{theorem:pairwiseGaussian} (Pairwise independence in the linear regression model).} 
Suppose that $Y_1|{\bX}_1, \ldots, Y_n|{\bX}_n$ is a sample from the linear regression model $M_2$ in (\ref{M2}) with the identical link function and unknown variance $\sigma^2$. Then $\hat\beta_1$ and  $\hat\bbeta_2^\perp$ are independent.

\noindent
\begin{proof} 
The pair $(\hat\bbeta_1,\hat\bbeta_2)$ follows a multivariate normal distribution as $\hat\bbeta_1$ and $\hat\bbeta_2$ are linear combinations of $Y_1,\ldots, Y_n$. To proof independence between $\hat\beta_1$ and $\hat\bbeta_2^\perp$, we have to show that their covariance equals zero: $\Cov(\hat\beta_1,\hat\bbeta_2^\perp)=0$. To prove this, we compute the covariance $\Cov(\hat\bbeta_1,\hat\bbeta_2)$ and show that the elements of the covariance matrix that represent the covariance between $\hat\beta_1$ and $\hat\bbeta_2^\perp$ equal zero.    

Define $\bY=(Y_1,\ldots,Y_n)^t$ and $\bZ_1$ and $\bZ_2$ as the $p_1\times n$ and $p_2\times n$ matrices with the covariates of the $n$ observations in the models $M_1$ and $M_2$ in (\ref{MM1}) and (\ref{MM2}). Then, $\hat\bbeta_1 = (\bZ_1 \bZ_1^t)^{-1} \bZ_1 \bY$ and $\hat\bbeta_2 = (\bZ_2 \bZ_2^t)^{-1} \bZ_2 \bY$

\begin{align*}
\Cov\big(\hat\bbeta_1,\hat\bbeta_2\big) 
&= \Cov\big( (\bZ_1 \bZ_1^t)^{-1} \bZ_1 \bY, (\bZ_2 \bZ_2^t)^{-1} \bZ_2 \bY\big) \\[6pt]
&= \EE_{\bZ,\bY}\Big((\bZ_1 \bZ_1^T)^{-1} \bZ_1 \bY \bY^t \bZ_2^t (\bZ_2 \bZ_2^t)^{-1}\Big) - \EE_{\bZ,\bY}\Big((\bZ_1 \bZ_1^t)^{-1} \bZ_1 \bY\Big) \EE_{\bZ,\bY} \Big(\bY^t \bZ_2^t (\bZ_2 \bZ_2^t)^{-1}\Big) 
\end{align*}
with
\begin{align*}
\EE_{\bZ,\bY}\Big((\bZ_1 \bZ_1^t)^{-1} \bZ_1 \bY\Big) \;&= \EE_{\bZ}\Big((\bZ_1 \bZ_1^t)^{-1} \bZ_1\Big) \EE_{\bY|\bZ} \bY \;=\; \EE_{\bZ}\Big((\bZ_1 \bZ_1^t)^{-1} \bZ_1 \bZ_2^t \bbeta_2^0\Big) \\[6pt]
\EE_{\bZ,\bY} \Big(\bY^t \bZ_2^t (\bZ_2 \bZ_2^t)^{-1}\Big) \;&= \EE_{\bZ}\EE_{\bY|\bZ} \Big(\bY^t \bZ_2^t (\bZ_2 \bZ_2^t)^{-1}\Big) \;=\; \EE_{\bZ}\Big((\bbeta_2^0)^t \bZ_2 \bZ_2^t (\bZ_2 \bZ_2^t)^{-1}\Big) \;=\; (\bbeta_2^0)^t \\[6pt]
\EE_{\bZ,\bY}\Big((\bZ_1 \bZ_1^t)^{-1} \bZ_1 \bY\Big) \EE_{\bZ,\bY} \Big(\bY^t \bZ_2^t (\bZ_2 \bZ_2^t)^{-1}\Big) \;&=\; \EE_{\bZ}\Big((\bZ_1 \bZ_1^t)^{-1} \bZ_1 \bZ_2^t \bbeta_2^0 (\bbeta_2^0)^t\Big)
\end{align*}
and
\begin{align*}
\EE_{\bZ,\bY}\Big((\bZ_1 \bZ_1^t)^{-1} \bZ_1 \bY \bY^t \bZ_2^t (\bZ_2 \bZ_2^t)^{-1})\Big) &=\; \EE_{\bZ}\Big((\bZ_1 \bZ_1^t)^{-1} \bZ_1\Big) \EE_{\bY|\bZ} \Big((\bY \bY^t) \bZ_2^t (\bZ_2 \bZ_2^t)^{-1}\Big) \\[6pt]
&=\; \EE_{\bZ}\Big((\bZ_1 \bZ_1^t)^{-1} \bZ_1 \big\{\sigma^2 I_n + (\bZ_2^t\bbeta_2^0)(\bZ_2^t\bbeta_2^0)^t\big\}\; \bZ_2^t (\bZ_2 \bZ_2^t)^{-1}\Big) \\[6pt]
&=\; \sigma^2 \EE_{\bZ}\Big((\bZ_1 \bZ_1^t)^{-1} \bZ_1 \bZ_2^t (\bZ_2 \bZ_2^t)^{-1}\Big) 
\;+\; \EE_{\bZ}\Big((\bZ_1 \bZ_1^t)^{-1} \bZ_1  \bZ_2^t\bbeta_2^0 (\bZ_2^t\bbeta_2^0)^t \bZ_2^t (\bZ_2 \bZ_2^t)^{-1}\Big) \\[6pt]
&=\; \sigma^2 \EE_{\bZ}\Big((\bZ_1 \bZ_1^t)^{-1} \bZ_1 \bZ_2^t (\bZ_2 \bZ_2^t)^{-1}\Big) 
\;+\; \EE_{\bZ}\Big((\bZ_1 \bZ_1^t)^{-1} \bZ_1  \bZ_2^t\bbeta_2^0 (\bbeta_2^0)^t\Big) 
\end{align*}
That gives
\begin{align*}
\Cov\big(\hat\bbeta_1,\hat\bbeta_2\big) &=
\sigma^2 \EE_{\bZ}\Big((\bZ_1 \bZ_1^t)^{-1} \bZ_1  \bZ_2^t (\bZ_2 \bZ_2^t)^{-1}\Big)
\end{align*}
Since model $M_1$ is nested in model $M_2$, we can write $\bZ_2^t=(\bZ_1^t | \bZ_3^t)$ with $\bZ_3$ the covariate rows which are in $\bZ_2$ but not in $\bZ_1$. This gives $\bZ_1 \bZ_2^t = \bZ_1 (\bZ_1^t |  \bZ_3^t)= (\bZ_1 \bZ_1^t | \bZ_1 \bZ_3^t)$. And 
\begin{align*}
(\bZ_1 \bZ_1^t)^{-1} \bZ_1 \bZ_2^t (\bZ_2 \bZ_2^t)^{-1}\;=\; (\bZ_1 \bZ_1^t)^{-1}(\bZ_1 \bZ_1^t | \bZ_1 \bZ_3^t)(\bZ_2 \bZ_2^t)^{-1} \;=\; (I_{p_1} \; | \; (\bZ_1 \bZ_1^t)^{-1} \bZ_1 \bZ_3^t)(\bZ_2 \bZ_2^t)^{-1},
\end{align*}
with $I_{p_1}$ the $p_1\times p_1$ identity matrix.
Write
\begin{align*}
\bZ_2 \bZ_2^t \;=\; 
\begin{pmatrix}
\bZ_1 \bZ_1^t  &~~  \bZ_1 \bZ_3^t \\
\bZ_3 \bZ_1^t  &~~  \bZ_3 \bZ_3^t
\end{pmatrix} 
\end{align*}
Then, to shorten notation in the next display, let $\bW_{11}:=\bZ_1 \bZ_1^t$, $\bW_{33}:=\bZ_3 \bZ_3^t$ and $\bW_{13}:=\bZ_1 \bZ_3^t$  
\begin{align*}
\begin{pmatrix}
\bZ_1 \bZ_1^t  &~~  \bZ_1 \bZ_3^t \\
\bZ_3 \bZ_1^t  &~~  \bZ_3 \bZ_3^t
\end{pmatrix} ^{-1} &=\; 
\begin{pmatrix}
\bW_{11}   &~~  \bW_{13} \\
\bW_{13}^t  &~~  \bW_{33} 
\end{pmatrix} ^{-1}  \\[6pt]
&=\begin{pmatrix}
\bW_{11}^{-1} + \bW_{11}^{-1} \bW_{13} (\bW_{33} - \bW_{13}^t \bW_{11}^{-1} \bW_{13})^{-1} \bW_{13}^t \bW_{11}^{-1} & ~~~ -\bW_{11}^{-1} \bW_{13} (\bW_{33} - \bW_{13}^t \bW_{11}^{-1} \bW_{13})^{-1}\\
-(\bW_{33}-\bW_{13}^t \bW_{11}^{-1} \bW_{13})^{-1} \bW_{13}^t \bW_{11}^{-1} & (\bW_{33} - \bW_{13}^t \bW_{11}^{-1} \bW_{13})^{-1}
\end{pmatrix}
\end{align*}
We need to show that the last block column of the matrix
\begin{align*}
(I_{p_1} \; | \; \bW_{11}^{-1} \bW_{13})
&\begin{pmatrix}
\bW_{11}  &~~  \bW_{13} \\
\bW_{13}^t  &~~  \bW_{33}
\end{pmatrix} ^{-1}
\end{align*}
equals zero. That block equals
\begin{align*}
-\bW_{11}^{-1}\bW_{13}(\bW_{33}-\bW_{13}^t\bW_{11}^{-1}\bW_{13})^{-1} + \bW_{11}^{-1} \bW_{13} (\bW_{33}-\bW_{13}^t\bW_{11}^{-1}\bW_{13})^{-1}  = {\bf 0}
\end{align*}
\end{proof}

\section*{Appendix B: Mixed pairwise independence in GLMs}
\label{App: complete independence}
In this Appendix we will prove Theorem \ref{theorem:full}.  If in Theorem \ref{theorem:full}, $j=k$ or $j=\ell$ the stated independence is pairwise independence and follows from Result \ref{theorem:pairwise}. Here we assume that $j\neq k$ and $j\neq \ell$. First, we give the proof in a more general setting. After the proof, the theorem is translated to the setting we are interested in. Define the models 
\begin{align}
M_1:\qquad h\big(\mathbb{E}(Y|\bX,\bbeta_1,\bnu_1)\big) &= \beta_{1,1} x_{1} + \beta_{1,2} x_{2} + \ldots + \beta_{1,p_1} x_{p_1} \label{FullM11}\\
M_2:\qquad h\big(\mathbb{E}(Y|\bX,\bbeta_2,\bnu_2)\big) &= \beta_{2,1} x_{1} + \beta_{2,2} x_{2} + \hspace{2.0cm} \beta_{2,p_1+1} x_{p_1+1} +  \ldots + \beta_{2,p_1+p_2} x_{p_1+p_2} \label{FullM22}\\
M_3:\qquad h\big(\mathbb{E}(Y|\bX,\bbeta_2,\bnu_3)\big) &= \beta_{3,1} x_{1} + \beta_{3,2} x_{2} + \ldots + \beta_{3,p_1} x_{p_1} +  \beta_{3,p_1+1} x_{p_1+1} + \ldots + \beta_{3,p_1+p_2} x_{p_1+p_2} \label{FullM33}\\
M_{true}:\qquad h\big(\mathbb{E}(Y|\bX,\bbeta_0,\bnu_0)\big) &= \beta_{0,1} x_{1} + \beta_{0,2} x_{2} \label{FullM00}
\end{align}
where the models $M_1, M_2$ and $M_{true}$ are defined as before. 
The models $M_1, M_2$ and $M_{true}$ are nested in model $M_3$. Model $M_{true}$ is the true model with only the intercept and the ``fixed'' covariates.
Define the vectors $\hat\btheta_i$ as the maximum likelihood estimator of the full parameter vector in model $M_i$, and $\btheta_0$ as the true parameter vector (so the regression parameters and the nuisance parameters). The dimension of the parameter spaces are denoted as $q_1, q_2, q_3$ and $q_0$.  

The parameter spaces for the models $M_1$ and $M_2$ are nested in model $M_3$. To project a parameter from the parameter space for $M_1$ or $M_2$ to the parameter space for model $M_3$ we define the projection matrices (as before) 
\begin{align*}
    E_1 & = \br{e_1|\ldots |e_{q_1}} \in \mathbb{R}^{q_3\times q_1}\\
    E_2 & = \br{e_1|\ldots |e_{q_2}} \in \mathbb{R}^{q_3\times q_2},
\end{align*}
in such a way that $\{e_1, \ldots, e_{q_1}\}$ forms an orthonormal basis for the parameter space for $M_1$ and similarly for $E_2$. The matrices represent linear functions from the parameter spaces for $M_1$ and $M_2$ to $M_3$.

\medskip

\begin{lemma}[{\bf Mixed pairwise independence in GLMs}]
\label{Lemma:full}
Suppose that $Y_1|{\bX}_1, \ldots, Y_n|{\bX}_n$ is a sample from the generalized linear model $M_{true}$ in (\ref{FullM00}) with a canonical link function. Let $\hat{\btheta}_1$ and $\hat{\btheta}_2$ be the maximum likelihood estimators of ${\theta}_1$ and ${\btheta}_2$ in the models $M_1$ and $M_2$ in (\ref{FullM11}) and (\ref{FullM22}), respectively. If $E_2 E_2^t \EE_X{Z^t \Cov_{Y|X}(V(Y)) Z} E_2 = \EE_X{Z^t \Cov_{Y|X}(V(Y)) Z} E_2$, where $Z$ and $V(Y)$ are defined as in the expression of the exponential family (\ref{expFamily}),  
then $\sqrt{n}(\hat\btheta_1-\btheta_1^0)$ and  $\sqrt{n}(\hat\btheta_2^0-\btheta_2^0)^\perp$ are asymptotically independent.
\end{lemma}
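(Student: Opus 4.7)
The plan is to follow the same template as the proof of Theorem \ref{theorem:pairwuseGLM} (pairwise independence in GLMs), with two modifications. Neither $M_1$ nor $M_2$ now coincides with the true model; both merely contain $M_{true}$. Consequently the covariance calculation does not collapse automatically as it did in Theorem \ref{theorem:pairwuseGLM} under the nesting $M_1\subset M_2 = M_{true}$, and must instead be driven by the lemma's hypothesis on $I:=\EE_{\bX}[\bZ^t\Cov_{Y|\bX}(V(Y))\bZ]$.

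The first step is to redo the second-order Taylor expansion of the normalised log-likelihoods of $M_1$ and $M_2$ around their pseudo-true parameters $\btheta_1^0$ and $\btheta_2^0$, exactly as in (\ref{TaylorExp})--(\ref{covariance theorem}). This yields the linearisations
\begin{align*}
\sqrt{n}(\hat\btheta_i-\btheta_i^0) \;=\; (E_i^t I_i E_i)^{-1} E_i^t W_i/\sqrt{n} + o_P(1), \qquad i=1,2,
\end{align*}
with $I_i := \EE_{\bX}[\bZ^t D^2 A(\bZ E_i \btheta_i^0)\bZ]$ and $(W_1,W_2)$ jointly asymptotically centred Gaussian with cross-covariance $I+C_{12}$, in the notation of Theorem \ref{theorem:pairwuseGLM}. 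Under $M_{true}$, the extra regression coefficients of $M_1$ and $M_2$ beyond those of $M_{true}$ are zero, so (after embedding into $M_3$'s parameter space) $E_1\btheta_1^0 = E_2\btheta_2^0 = \btheta_0^0$. This forces $I_1 = I_2 = I$ and $C_{12}=\mathbf{0}$, hence $\Cov(W_1,W_2)=I$. It also gives $\btheta_2^{0,\perp}=\mathbf{0}$, since $\btheta_2^0$ is supported on the coordinates of $M_2$ that correspond to $M_1\cap M_2$. The whole question therefore reduces to showing
\begin{align*}
(E_1^t I E_1)^{-1} E_1^t I E_2 (E_2^t I E_2)^{-1}\bigl(Id_{q_2} - E_2^t E_1 E_1^t E_2\bigr) \;=\; \mathbf{0}.
\end{align*}

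The hypothesis $E_2 E_2^t I E_2 = I E_2$ is exactly the statement that $I E_2$ lies in the column span of $E_2$, equivalently $I E_2 = E_2(E_2^t I E_2)$. Substituting gives $E_1^t I E_2 (E_2^t I E_2)^{-1} = E_1^t E_2$, and it remains to verify $E_1^t E_2(Id_{q_2} - E_2^t E_1 E_1^t E_2) = \mathbf{0}$. Choose the orthonormal bases for $M_1$ and $M_2$ so that they share the basis vectors spanning $M_1\cap M_2$ (the intercept and the ``fixed'' covariate, together with any common nuisance coordinates). Then $E_1^t E_2$ has nonzero columns only in the positions corresponding to $M_1\cap M_2$, while $Id_{q_2}-E_2^t E_1 E_1^t E_2$ is the projection in $M_2$-coordinates onto the orthogonal complement of $M_1\cap M_2$ and so annihilates precisely those positions; the product is the zero matrix. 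Joint asymptotic normality of $(W_1,W_2)$, and hence of $(\sqrt{n}\hat h_1,\sqrt{n}\hat\btheta_2^\perp)$, lifts vanishing covariance to asymptotic independence.

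The main obstacle is not this algebra but the verification of the hypothesis $E_2 E_2^t I E_2 = I E_2$ in concrete models: it encodes a block-structure of the expected Fisher information across the $M_2$-directions and the $M_3$-directions orthogonal to $M_2$, and is precisely what forces the further assumptions in Theorem \ref{theorem:full} (mutually independent, mean-zero, standardised covariates). Lemma \ref{Lemma:full} should therefore be viewed as the algebraic reduction of mixed pairwise independence to a single Fisher-information identity; the model-specific verification of that identity for the logistic, Poisson and linear regression cases is deferred to the proof of Theorem \ref{theorem:full}.
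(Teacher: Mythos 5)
Your proposal is correct and follows essentially the same route as the paper's proof: reuse the linearisation and covariance formula from the proof of Theorem \ref{theorem:pairwuseGLM}, note that under $M_{true}$ the pseudo-true parameters all coincide with $\btheta_0^0$ so that $I_1=I_2=I$ and the cross term $C_{12}$ vanishes, and then use the hypothesis $IE_2=E_2E_2^tIE_2$ to replace $E_1^tIE_2(E_2^tIE_2)^{-1}$ by $E_1^tE_2$ and kill the remaining product against $Id_{q_2}-E_2^tE_1E_1^tE_2$. Your basis-structure argument for $E_1^tE_2(Id_{q_2}-E_2^tE_1E_1^tE_2)=\mathbf{0}$ is just a more geometric phrasing of the paper's identities $E_1^tE_2E_2^tE_1=E_1^tE_1$ and $E_1^tE_1E_1^tE_2=E_1^tE_2$, and your closing remark about deferring verification of the Fisher-information identity to the concrete models matches the paper's discussion after the lemma.
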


\begin{proof}
The statement in this lemma can be proven along the same lines as the proof of Theorem \ref{theorem:pairwuseGLM} up to the calculation of the covariance in (\ref{covariance}) at the end of the proof. In that calculation the equality $E_1^t = E_1^t E_2 E_2^t$ in (\ref{id 1}) is applied, but does not hold in the current setting as the model $M_1$ is not nested in $M_2$. However, if $E_2 E_2^t \EE_X{Z^t \Cov_{Y|X}(V(Y)) Z} E_2 = \EE_X{Z^t \Cov_{Y|X}(V(Y)) Z} E_2$, the covariance can be proven to be equal to zero in the same way. This is explained below. Remind that $I = \EE_X{Z^t \Cov_{Y|X}(V(Y)) Z}$. Under the null hypothesis that there are no main and no interaction effects, the true model coincides with the models $M_1$ and $M_2$. Then, the covariance of interest is given by
\begin{align*}
\Cov_{Y,X}\big(\sqrt{n}(\hat\btheta_1-\btheta_1^0), \sqrt{n}(\hat\btheta_2-\btheta_2^0)^\perp\big) &=
n \; \Cov_{Y,X}\big(\hat h_1, \br{Id_{q_2} - E_2^tE_1E_1^tE_2}\hat h_2\big) \nonumber \\[4pt]
&= \br{E_1^t I E_1}^{-1}E_1^t I E_2\br{E_2^t I E_2}^{-1} \br{Id_{q_2} - E_2^t E_1 E_1^t E_2}\nonumber
\end{align*}
Using the assumption that $I E_2 = E_2 E_2^t I E_2$ yields
\begin{align*}
\br{E_1^tI E_1}^{-1}E_1^t I E_2\br{E_2^tI E_2}^{-1} \br{Id_{q_2} - E_2^t E_1 E_1^t E_2} &= \br{E_1^tI E_1}^{-1}E_1^t E_2 E_2^t I E_2\br{E_2^tI E_2}^{-1} \br{Id_{q_2} - E_2^t E_1 E_1^t E_2}\\[4pt]
&= \br{E_1^tI E_1}^{-1}E_1^t E_2  \br{Id_{q_2} - E_2^t E_1 E_1^t E_2}\\[4pt]
&= \br{E_1^tI E_1}^{-1}E_1^t E_2 -  \br{E_1^tI E_1}^{-1}E_1^t E_2E_2^t E_1 E_1^t E_2\\[4pt]
&= \br{E_1^tI E_1}^{-1}E_1^t E_2 -  \br{E_1^tI E_1}^{-1}E_1^t E_1 E_1^t E_2\\[4pt]
&= \br{E_1^tI E_1}^{-1}E_1^t E_2 -  \br{E_1^tI E_1}^{-1}E_1^t E_2\\[4pt]
&= \mathbf{0}
\end{align*}
\end{proof}

The lemma above is proven under the assumptions that $E_2 E_2^t \EE_X{Z^t \Cov_{Y|X}(V(Y)) Z} E_2 = \EE_X{Z^t \Cov_{Y|X}(V(Y)) Z} E_2$. By straightforward calculations, it can be seen that for the linear, the Poisson, and the logistic regression model this assumption is satisfied if the covariates that are included in the two stage testing procedure (so not necessarily the fixed covariates) are mutually independent and independent of the fixed covariates, and the covariates are standardized to have mean zero. The assumption also holds if one of the covariates is a two-way interaction term of two covariates that are tested in the two-stage strategy and are included in the model as in the situation described in the main part of the paper.

\section*{Appendix C: FWER control}
Let $T_j^{S_1}, j=1,\ldots,p$ be the test-statistics for the $p$ covariates in the first stage, and let $\Gamma_j^{S_1}, j=1,\ldots,p$ be the corresponding rejection regions. Furthermore, let $T_{jk}^{S_2}$ be the test-statistic for testing interaction between the covariates $j$ and $k$ in the second stage, and let $\Gamma_{jk}^{S_2}$ be the corresponding rejection regions. In the second stage a null hypothesis is rejected if the p-value is below the threshold $\alpha$ divided by the number of tests that are performed in the second stage. This number is stochastic, and thus the rejection region $\Gamma_{jk}^{S_2}$ is stochastic as well.

\begin{lemma}[{\bf FWER control}]  
\label{lemma:FWER}
Suppose that, under the full null hypothesis, $(T_j^{S_1},T_k^{S_1})$ and $T_{jk}^{S_2}$ are asymptotically independent, and that, moreover, $T_j^{S_1}, j=1,\ldots,p$ are asymptotically independent. 
Then, under the full null hypothesis, 
the two-step procedure preserves the FWER at the significance level $\alpha$ in weak sense:
\begin{align*}
\rP\Big(\bigcup_{j,k=1:j<k}^{p,p} \big\{\big(T_j^{S_1} \in \Gamma_j^{S_1}\big) \cap \big(T_k^{S_1} \in \Gamma_k^{S_1}\big) \cap \big(T_{jk}^{S_2} \in \Gamma_{jk}^{S_2}\big)\big\}\Big) \leq \alpha.      
\end{align*}
\end{lemma}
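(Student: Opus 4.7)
The plan is to bound the FWER via the union bound, then use the tower property to condition on the first-stage statistics, exploiting the combinatorial identity $\sum_{j<k} 1_{A_j\cap A_k} = K_1$, where $A_j = \{T_j^{S_1}\in\Gamma_j^{S_1}\}$ and $K_1$ is the random number of pairs tested in the second stage. Writing $B_{jk} = \{T_{jk}^{S_2}\in\Gamma_{jk}^{S_2}\}$ and $\mathcal F = \sigma(T_1^{S_1},\ldots,T_p^{S_1})$, the union bound followed by conditioning on $\mathcal F$ gives
\begin{equation*}
\rP(\text{Type I error}) \;\leq\; \sum_{j<k} \EE\bigl[1_{A_j\cap A_k}\cdot \rP(B_{jk}\mid\mathcal F)\bigr].
\end{equation*}
Since the Bonferroni threshold $\alpha/K_1$ used to define $\Gamma_{jk}^{S_2}$ is $\mathcal F$-measurable, the real task reduces to showing that $T_{jk}^{S_2}$ is independent of $\mathcal F$ so that $\rP(B_{jk}\mid\mathcal F) = \alpha/K_1$ almost surely on $\{K_1>0\}$.

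Next I would promote the two hypotheses, which give only pairwise independence of $T_{jk}^{S_2}$ with $(T_j^{S_1},T_k^{S_1})$ and mutual independence of the first-stage statistics, to full joint independence of $T_{jk}^{S_2}$ with the whole vector $(T_1^{S_1},\ldots,T_p^{S_1})$. Under the full null all relevant Wald statistics are jointly asymptotically normal, so the pairwise independences translate into zero pairwise covariances; taken together these force block-diagonality of the limiting covariance matrix and hence joint independence. This is the same Gaussian argument the paper uses in Section \ref{Full independence GLM} to pass from mixed pairwise to complete independence.

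With joint independence in hand, the conditional distribution of $T_{jk}^{S_2}$ given $\mathcal F$ coincides with its unconditional asymptotic standard normal law, so $\rP(B_{jk}\mid\mathcal F) = \alpha/K_1$ almost surely on $\{K_1>0\}$. Substituting this into the bound above and using $\sum_{j<k} 1_{A_j\cap A_k} = K_1$,
\begin{equation*}
\rP(\text{Type I error}) \;\leq\; \EE\!\left[ 1_{\{K_1>0\}} \sum_{j<k} 1_{A_j\cap A_k}\cdot \frac{\alpha}{K_1}\right] \;=\; \alpha\,\rP(K_1>0) \;\leq\; \alpha,
\end{equation*}
which is exactly the claim.

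The hard part is the upgrade from the stated pairwise independences to joint independence of $T_{jk}^{S_2}$ from the entire first-stage vector. Without that step, the rejection region $\Gamma_{jk}^{S_2}$ depends (through $K_1$) on the first-stage statistics other than $T_j^{S_1},T_k^{S_1}$, and the conditional probability $\rP(B_{jk}\mid\mathcal F)$ need not reduce to $\alpha/K_1$, which would break the cancellation of the $1/K_1$ factor against the count of eligible pairs. Once the joint-Gaussian block-diagonality argument is in place, the remaining steps are a routine union bound plus tower property, and the asymptotic nature of the inequality is inherited precisely from the asymptotic nature of the independence.
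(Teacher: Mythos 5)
Your proof is correct and shares the paper's skeleton (union bound, independence of the second-stage statistic from the first stage, cancellation of the Bonferroni factor against the number of tested pairs), but the middle of the argument is organized differently. The paper decomposes over the event $\{R_1=r_1\}$, where $R_1$ is the number of covariates passing stage 1, and then needs the exchangeability computation $\rP\{(T_j^{S_1}\in\Gamma_j^{S_1})\cap(T_k^{S_1}\in\Gamma_k^{S_1})\mid R_1=r_1\}=\tfrac{r_1(r_1-1)}{p(p-1)}$, which it sums over the $\binom{p}{2}$ pairs to cancel the factor $\tfrac{2\alpha}{r_1(r_1-1)}$. You instead condition on the full first-stage $\sigma$-field and use the pathwise identity $\sum_{j<k}1_{A_j\cap A_k}=K_1$, so the cancellation happens inside the expectation with no exchangeability or symmetry of the first-stage tests required; this is slightly more robust (it would survive unequal first-stage thresholds across covariates, which the paper explicitly contemplates) and arguably cleaner. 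Both proofs rely on the same strengthening of the stated hypotheses: because $\Gamma_{jk}^{S_2}$ depends on $K_1$, and hence on all first-stage statistics, one needs $T_{jk}^{S_2}$ independent of the entire first-stage vector, not merely of $(T_j^{S_1},T_k^{S_1})$. The paper uses this implicitly in its fifth equality (dropping the conditioning on $R_1=r_1$), justified by the mixed pairwise independence results and joint asymptotic normality established earlier; you make the same upgrade explicit via the block-diagonal Gaussian covariance argument, which is exactly the device the paper itself invokes in Subsection \ref{Full independence GLM} to pass from mixed pairwise to complete independence. So there is no gap, only a difference in bookkeeping.
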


\begin{proof}
Let $R_1$ be defined as the stochastic number of tests that are rejected in the first stage. Given that $R_1=r_1$ covariates were past on to the second stage, the set $\Gamma_{jk}^{S_2}(r_1)$ is defined as the non-stochastic rejection region for test-statistic $T^{S_2}_{jk}$. 
\begin{align*}
\lefteqn{\rP\Big(\bigcup_{j,k=1:j< k}^{p,p} \big\{\big(T_j^{S_1} \in \Gamma_j^{S_1}\big) \cap \big(T_k^{S_1} \in \Gamma_k^{S_1}\big) \cap \big(T_{jk}^{S_2} \in \Gamma_{jk}^{S_2}\big)\big\}\Big)}\\ 
&=\sum_{r_1=2}^p \rP(R_1=r_1)\rP\Big(\bigcup_{j,k=1:j< k}^{p,p} \big\{\big(T_j^{S_1} \in \Gamma_j^{S_1}\big) \cap \big(T_k^{S_1} \in \Gamma_k^{S_1}\big) \cap \big(T_{jk}^{S_2} \in \Gamma_{jk}^{S_2}\big)\big\}\big|R_1=r_1\Big)\\
&\leq \sum_{r_1=2}^p \rP(R_1=r_1) \sum_{j,k=1:j< k}^{p,p} \rP\big\{ \big(T_j^{S_1} \in \Gamma_j^{S_1}\big) \cap \big(T_k^{S_1} \in \Gamma_k^{S_1}\big) \cap \big(T_{jk}^{S_2} \in \Gamma_{kj}^{S_2}\big)\big|R_1=r_1\big\}\\
&= \sum_{r_1=2}^p \rP(R_1=r_1)\sum_{j,k=1:j< k}^{p,p} \rP\big\{\big(T_j^{S_1} \in \Gamma_j^{S_1}\big) \cap \big(T_k^{S_1} \in \Gamma_k^{S_1}\big)\big|R_1=r_1\big\}\rP\big\{T_{jk}^{S_2} \in \Gamma_{jk}^{S_2}\big| \big(T_j^{S_1} \in \Gamma_j^{S_1}\big) \cap \big(T_k^{S_1} \in \Gamma_k^{S_1}\big)\cap  \big(R_1=r_1\big)\big\}\\
&= \sum_{r_1=2}^p \rP(R_1=r_1)\sum_{j,k=1:j< k}^{p,p} \rP\big\{\big(T_j^{S_1} \in \Gamma_j^{S_1}\big) \cap \big(T_k^{S_1} \in \Gamma_k^{S_1}\big)\big|R_1=r_1\big\}\rP\big\{T_{jk}^{S_2} \in \Gamma_{jk}^{S_2}(r_1)\big| \big(T_j^{S_1} \in \Gamma_j^{S_1}\big) \cap \big(T_k^{S_1} \in \Gamma_k^{S_1}\big)\cap \big(R_1=r_1\big)\big\}\\
&=  \sum_{r_1=2}^p \rP(R_1=r_1)\sum_{j,k=1:j< k}^{p,p} \rP\big\{\big(T_j^{S_1} \in \Gamma_j^{S_1}\big) \cap \big(T_k^{S_1} \in \Gamma_k^{S_1}\big)\big|R_1=r_1\big\}\rP\big\{T_{jk}^{S_2} \in \Gamma_{jk}^{S_2}(r_1)\big\}\\
&= \sum_{r_1=2}^p \rP(R_1=r_1)\sum_{j,k=1:j< k}^{p,p} \rP\big\{\big(T_j^{S_1} \in \Gamma_j^{S_1}\big) \cap \big(T_k^{S_1} \in \Gamma_k^{S_1}\big)\big|R_1=r_1\big\}\frac{2\alpha}{r_1(r_1-1)}\\
&= \sum_{r_1=2}^p \rP(R_1=r_1) \sum_{j,k=1:j< k}^{p,p}\frac{r_1(r_1-1)}{p(p-1)}~~\frac{2\alpha}{r_1(r_1-1)}\\
&=\; \alpha\sum_{r_1=2}^p \rP(R_1=r_1) 
\;\leq\; \alpha.
\end{align*}  
The fifth (in)equality follows from the asymptotic pairwise  independence of the test-statistics in the first and second stage. 
\end{proof}


In the theorem it is assumed that, under the full null hypothesis, the vector $(T_j^{S_1},T_k^{S_1})$ and $T_{jk}^{S_2}$ are asymptotically independent. This assumption follows from the asymptotic pairwise independence proven in Result \ref{res: pairwise_teststatistics}, and the fact that the test-statistics $T_j^{S_1}$ and $T_k^{S_1}$ are asymptotically Gaussian under the null hypothesis.  

What is left, is to study the setting in which the test-statistics $T_j^{S_1}$ and $T_k^{S_1}$ for any $j,k, j\neq k$ are asymptotically independent. Define the models
\begin{align*}
\label{MM11}
M_1:\qquad h\big(\mathbb{E}(Y|\bX,\bbeta_1,\bnu_1)\big) &= \beta_{1,1} x_1 + \beta_{1,2} x_2 + \beta_{1,3} x_{j} \\
M_2:\qquad h\big(\mathbb{E}(Y|\bX,\bbeta_2,\bnu_2)\big) &= \beta_{2,1} x_1 + \beta_{2,2} x_2 + ~~~~~~~~~~~~\beta_{2,4} x_{k} 
\end{align*}
where $x_1=1$ for the intercept and $x_2$ the "fixed" covariate as before. 
The situation is exactly the same as for proving complete independence. It follows that for the Poisson, logistic and the linear regression model asymptotic independence holds if the covariates that are included in the two stage procedure (so not necessarily those that are permanently included in the model: $x_1$ and $x_2$) are mutually independent and independent of the ``fixed''  covariates, and the covariates are standardized to have zero mean, independence of the test statistics $T_j^{S_1}$ and $T_k^{S_1}$ based on the models $M_1$ and $M_2$ follows.

\section*{Appendix D: Pairwise independence in the Cox PH model}
In this Appendix the proof for pairwise independence of the regression parameters in the Cox model is given, The proof is in line with the proof of Theorem \ref{theorem:pairwuseGLM} in Subsection Appendix A, with the  partial log likelihood function in stead of the full log likelihood. However, since some nice properties of the family of GLMs with canonical link functions do not hold for the Cox model, the theorem only holds under the assumption of no main and interaction effects.\\ 

\noindent
{\bf Theorem \ref{Theorem:Cox pairwise} (Pairwise independence in the Cox PH model). } 
{\it
Let $(T_1=\tilde{T}_1\wedge C_1, \Delta_1, \bX_1), \ldots, (T_n=\tilde{T}_n\wedge C_1, \Delta_n, \bX_n)$ be a random sample. Suppose that $\tilde{T}_i|X_i$ is a random observation from the Cox PH model with hazard function as in (\ref{SM0}), then $\sqrt{n}(\hat\bbeta_1-\bbeta_1^0)$ and  $\sqrt{n}\hat\bbeta_2^\perp$ are asymptotically independent.
}

\begin{proof}
The partial log likelihood for the submodel $M_1$ in (\ref{SM1}) is given by
\begin{align*}
\frac{1}{n}PL_1(\bbeta_1) = \frac{1}{n} PL(E_1\beta_1) =  \frac{1}{n}\sum_{i=1}^n \int_0^\infty \log \Bigg(\frac{\exp(\bX_i^t E_1\bbeta_1)}{\sum_{j=1}^n Y_j(t)\exp(\bX_j^t E_1 \bbeta_1)}\Bigg){\rm d} N_i(t),
\end{align*}

For ease of notation, we define $S^{(0)}$ and its first and second derivative with respect to $\bbeta$, as
\begin{align*}
S^{(0)}(\bbeta,t) &:=\;\frac{1}{n}\sum_{i=1}^n Y_i(t) \exp(\bX_i^t \bbeta)     \\
S^{(1)}(\bbeta,t) &:=\;\frac{1}{n}\sum_{i=1}^n Y_i(t)  \bX_i \exp(\bX_i^t \beta)    \\ 
S^{(2)}(\bbeta,t) &:=\;\frac{1}{n}\sum_{i=1}^n Y_i(t)  \bX_i\exp(\bX_i^t \bbeta) \bX_i^t      
\end{align*}
Then, 
\begin{align*}
\frac{1}{n}PL_1(\bbeta_1) = \frac{1}{n} PL(E_1\bbeta_1) =  \frac{1}{n}\sum_{i=1}^n \int_0^\infty \log \Bigg(\frac{n^{-1}\exp(\bX_i^t E_1\bbeta_1)}{S^{(0)}(E_1 \bbeta_1,t)}\Bigg){\rm d} N_i(t).
\end{align*}

Define $\hat h_1=\hat\beta_1-\beta_1^0$. By a Taylor expansion of $PL_1$ with respect to $\hat h_1$ around zero
\begin{align*}
\frac{1}{n}\widetilde{PL}_1(\hat h_1) &:= \frac{1}{n}PL_1\big(E_1(\bbeta_1^0+\hat h_1)\big)\\[8pt]
&\propto \frac{1}{n} \sum_{i=1}^n \int_0^\infty \bX_i^t E_1 (\bbeta_1^0+\hat h_1) - \log\big(S^{(0)}(E_1(\bbeta_1^0+\hat h_1),t)\big) \; {\rm d} N_i(t)\\[8pt]
&= \frac{1}{n} \sum_{i=1}^n \int_0^\infty \bX_i^t E_1 (\bbeta_1^0+\hat h_1) - \log\big( S^{(0)}(E_1 \bbeta_1^0,t)\big) - \hat h_1^t E_1^t ~\frac{S^{(1)}(E_1\bbeta_1^0,t)}{S^{(0)}(E_1\bbeta_1^0,t)}\\[8pt]
& \qquad - \frac{1}{2}\hat h_1^t E_1^t \Big( \frac{S^{(2)}(E_1\beta_1^0,t)}{S^{(0)}(E_1\beta_1^0,t)} \;-\; \frac{S^{(1)}(E_1\beta_1^0,t)^tS^{(1)}(E_1\beta_1^0,t)}{S^{(0)}(E_1\beta_1^0,t)^2} \Big) E_1 \hat h_1  \; {\rm d} N_i(t) + O_p(\| \hat h_1\|^3)\\[8pt]
&= K_n + \hat h_1^t E_1^t A_n + \frac{1}{2}\hat h_1^t E_1^t B_n E_1\hat h_1 + O_p(\| \hat h_1\|^3)
\end{align*}
with
\begin{align*}
K_n &= \frac{1}{n} \sum_{i=1}^n \int_0^\infty \bX_i^t E_1 \bbeta_1^0 - \log\big(S^{(0)}(E_1 \beta_1^0,t)\big){\rm d}N_i(t) \\[8pt] 
A_n &= \frac{1}{n} \sum_{i=1}^n \int_0^\infty  \bX_i - \frac{S^{(1)}(E_1\bbeta_1^0,t)}{S^{(0)}(E_1\bbeta_1^0,t)} {\rm d}N_i(t) \\[8pt]
B_n &= -\frac{1}{n} \sum_{i=1}^n \int_0^\infty   \Big( \frac{S^{(2)}(E_1\bbeta_1^0,t)}{S^{(0)}(E_1\bbeta_1^0,t)} \;-\; \frac{S^{(1)}(E_1\bbeta_1^0,t)^t S^{(1)}(E_1\bbeta_1^0,t)}{S^{(0)}(E_1\bbeta_1^0,t)^2} \Big) {\rm d} N_i(t) 
\end{align*}

The term $K_n$ does not depend on $\hat h_1$. Furthermore,
$\hat h_1^t E_1^t A_n = n^{-1/2}\hat h_1^t E_1^t n^{1/2} A_n$
where
$n^{1/2} A_n$ equals $n^{-1/2}$ times the partial likelihood score function in $\beta_1^0$, and, by the martingale central limit theorem (see e.g., Andersen et al (1993)\nocite{Andersen}), is asymptotically normal with mean zero and variance matrix denoted as $I$. That means that
\begin{align*}
\hat h_1^t E_1^t A_n = \frac{1}{\sqrt{n}} \hat h_1^t E_1^t W_1 + O_p(1/\sqrt{n})
\end{align*}
with $W_1$ normally distributed with mean zero and covariance matrix $I$.
The last term $B_n$ converges in probability to the covariance matrix $I$ (see e.g., Andersen et al (1993)\nocite{Andersen}). That means that 
\begin{align*}
\hat h_1^t E_1^t B_n E_1\hat h_1 = \hat h_1^t E_1^t I E_1\hat h_1 + o_p(\|\hat h_1\|^2)  
\end{align*}
The same can be done for submodel $M_2$, yielding the variable $W_2$. Under the null hypothesis of no main and interaction effects, the covariance of $W_1$ and $W_2$ is equal to $\EE W_1W_2^t=\EE W_1 W_1^t=I$. 

The remainder of the proof follows the proof of Theorem \ref{theorem:pairwise} from Equation (\ref{limit in d average loglikelihood proof}) onwards.\\
\end{proof}

\section*{Appendix E: Power-plots for the Poisson regression model}
\label{sub:add_plots}

In this appendix we show the power-plots for the Poisson regression model for uncorrelated and correlated markers. The simulation study was performed as is described in subsection \ref{sub:simprocedure} and \ref{sub:power}.

\begin{figure}[h]
\begin{center}
\includegraphics[height=0.3\textwidth,width=0.3\textwidth]{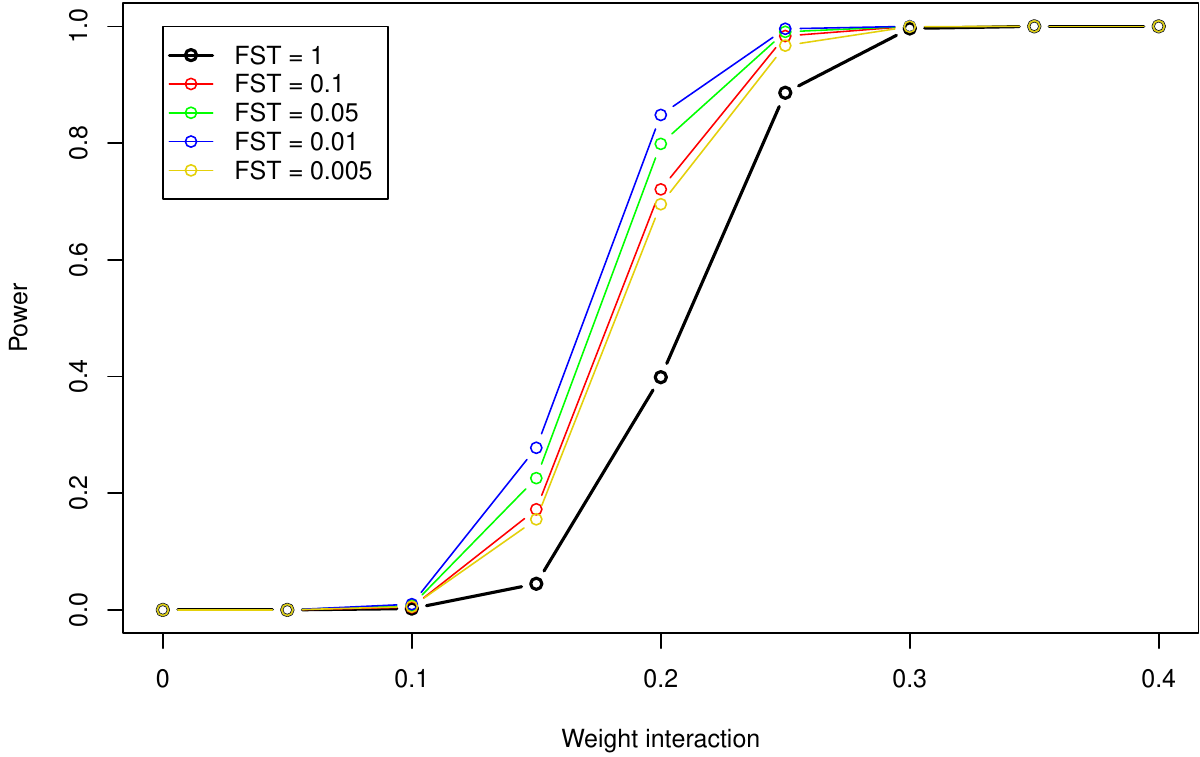}\;
\includegraphics[height=0.3\textwidth,width=0.3\textwidth]{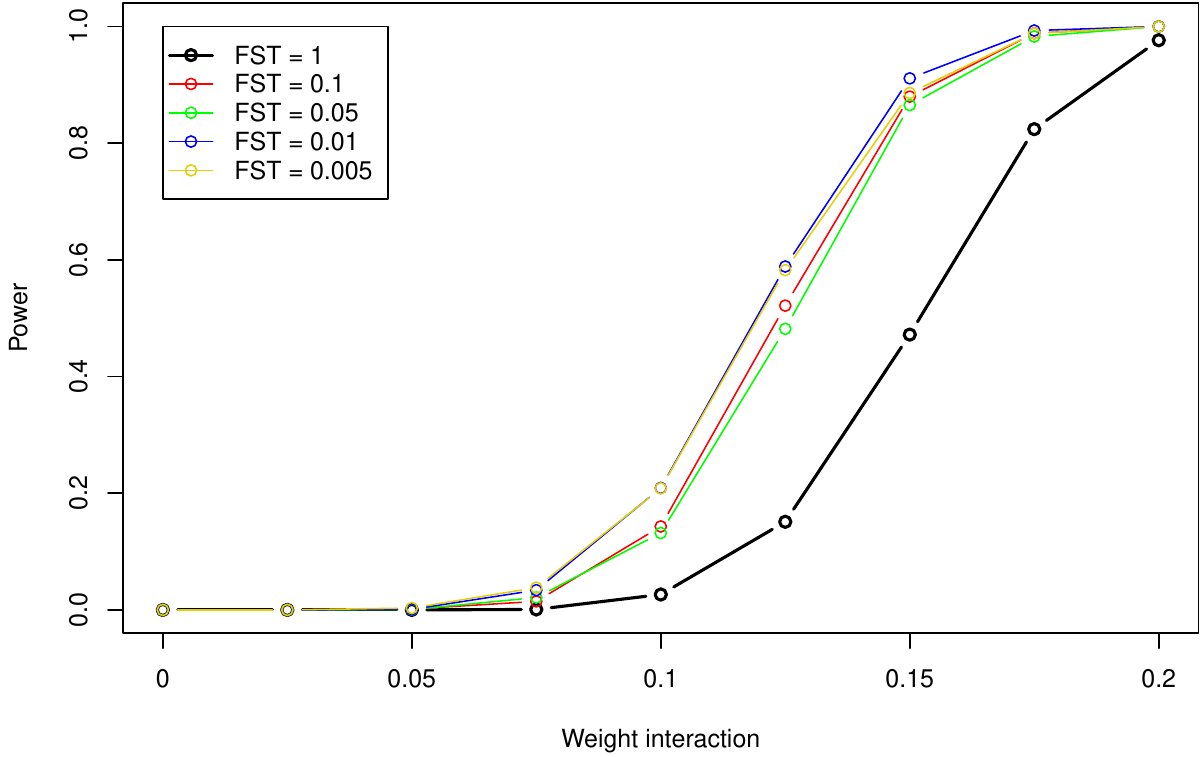} \;
\includegraphics[height=0.3\textwidth,width=0.3\textwidth]{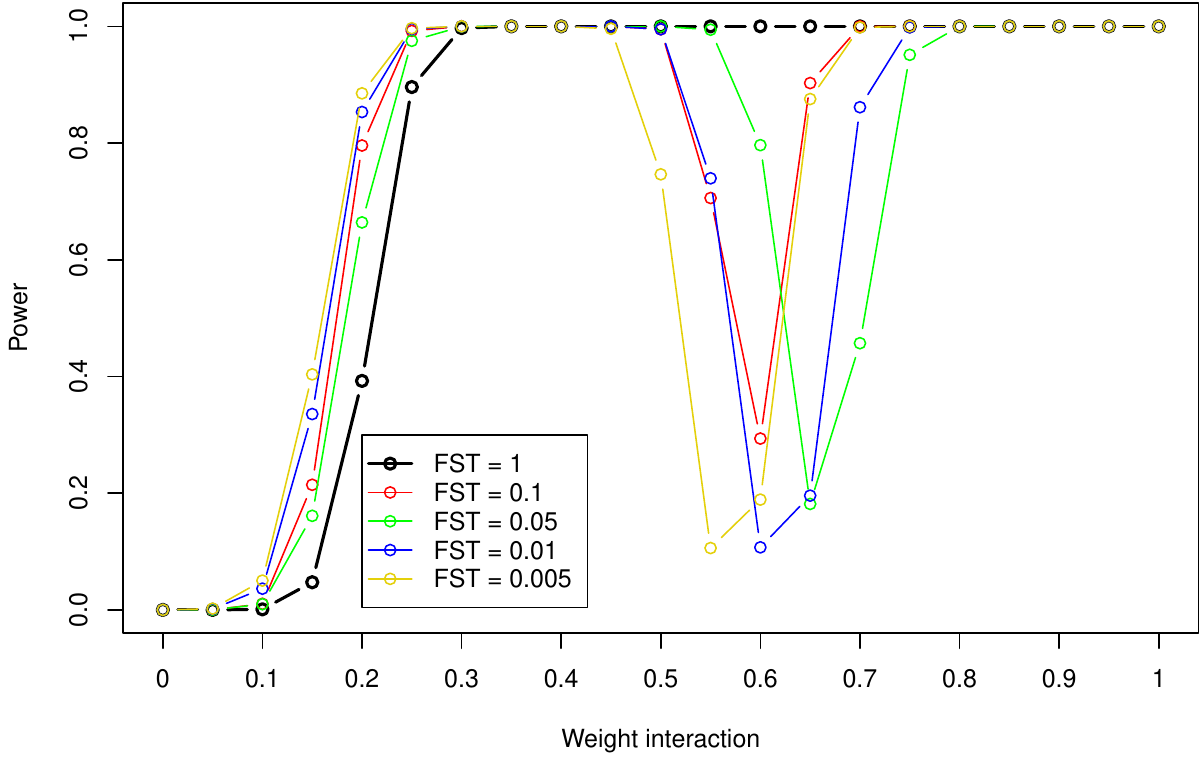}\\
\includegraphics[height=0.3\textwidth,width=0.3\textwidth]{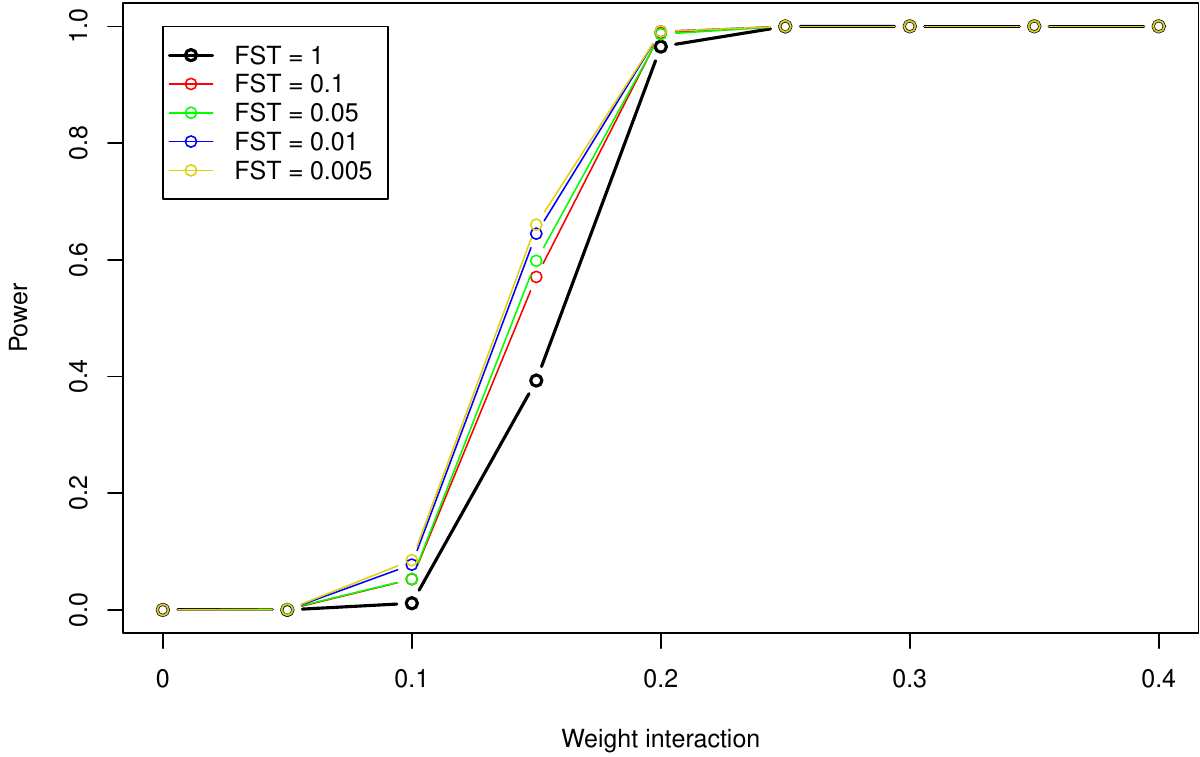}\;
\includegraphics[height=0.3\textwidth,width=0.3\textwidth]{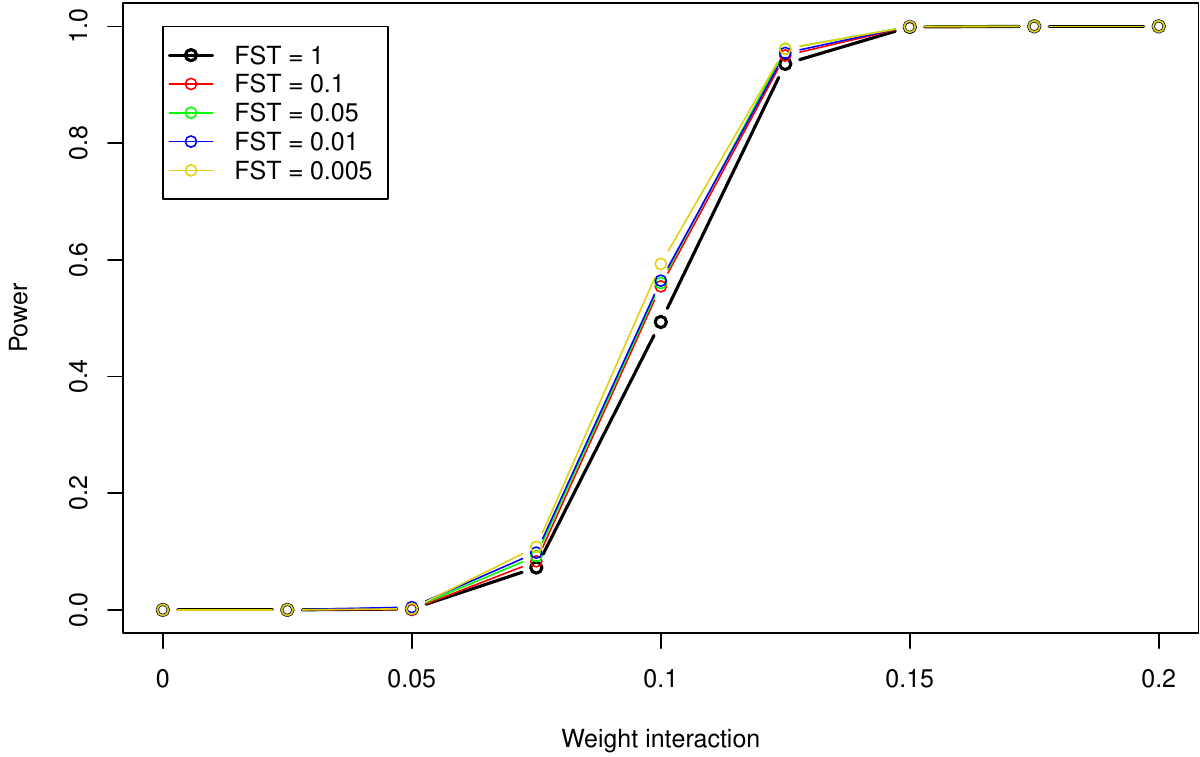} \;
\includegraphics[height=0.3\textwidth,width=0.3\textwidth]{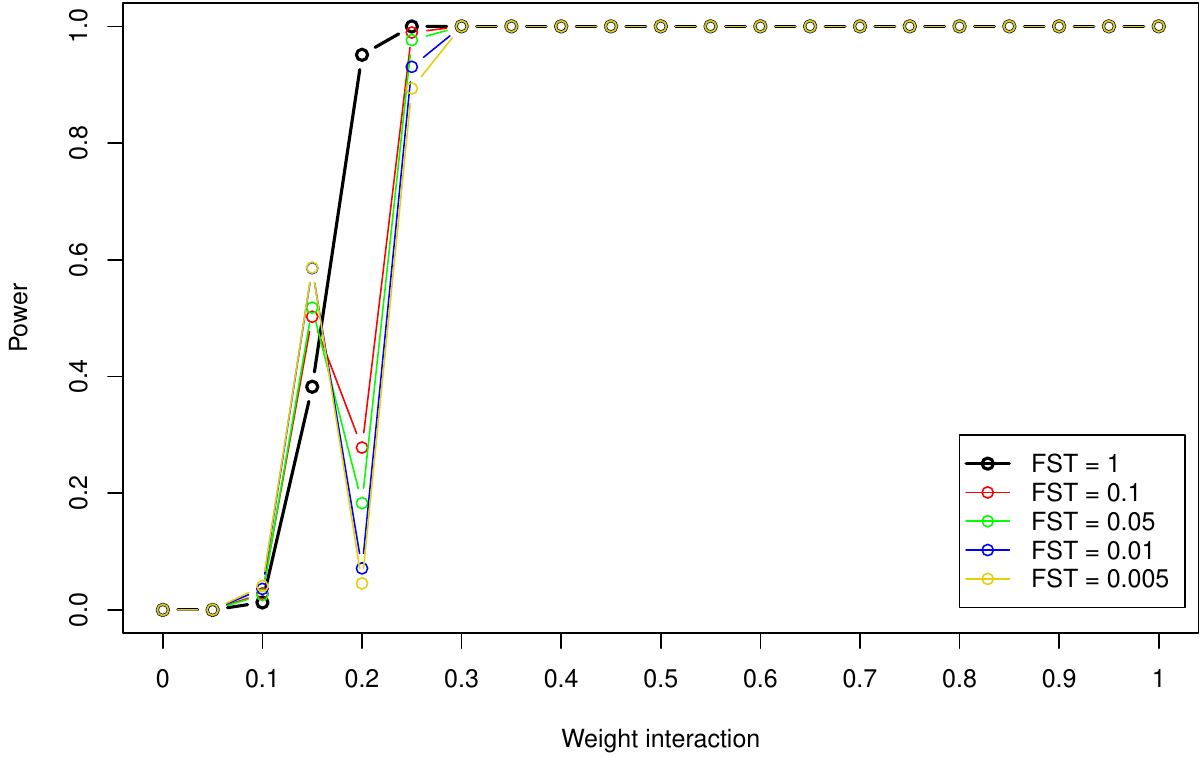} 
\caption*{Power as a function of the interaction effect for different first stage threshold (FST) for the Poisson regression model. First row: independent markers; Second row: correlated markers. First column: no main effects; Second column: both main effects equal to 0.5; Third column: both main effects equal to -0.5. (Note the different scales on the x-axes.)   }
\end{center}
\end{figure}
\end{document}